\documentclass[letterpaper, 10 pt, conference]{ieeeconf}  

\IEEEoverridecommandlockouts                              

\usepackage{caption}
\usepackage{graphicx} 
\usepackage{subcaption}
\usepackage{url}
\usepackage{hyperref}
\usepackage{xcolor}
\usepackage{float}
\usepackage{algorithmic}
\usepackage[linesnumbered,ruled,vlined]{algorithm2e}
\SetCommentSty{emph}
\usepackage{amssymb,amsmath,comment,cite}
\usepackage{amsfonts}
\usepackage{xspace}
\usepackage{booktabs}
\usepackage{multicol}
\usepackage{multirow}
\usepackage{cleveref}

\usepackage[modulo,switch]{lineno} 
\renewcommand{\baselinestretch}{0.98}

\newcommand{\parameter}[1]{\textcolor{red!70!black}{#1}}

\begin{document}
\newtheorem{theorem}{Theorem}
\newtheorem{lemma}{Lemma}
\newtheorem{definition}{Definition}
\newtheorem{corollary}{Corollary}

\newcommand{\warehouseSmallR}{\texttt{warehouse-20-17}\xspace}
\newcommand{\warehouselargeW}{\texttt{warehouse-33-36}\xspace}
\newcommand{\warehouseXlarge}{\texttt{warehouse-10-20-10-2-1\xspace}}
\newcommand{\randomSmall}{\texttt{random-32-32-20}\xspace}
\newcommand{\roomLarge}{\texttt{room-64-64-8}\xspace}
\newcommand{\mazeSmall}{\texttt{maze-32-32-4}\xspace}
\newcommand{\emptyMid}{\texttt{paris-1-256}\xspace}
\newcommand{\randomLarge}{\texttt{random-64-64-20}\xspace}
\newcommand{\denSmall}{\texttt{den312d}\xspace}
\newcommand{\paris}{\texttt{Paris-1-256}\xspace}
\newcommand{\randomLargeLA}{\texttt{random-128-128-PMLA}\xspace}
\newcommand{\roomLargeLA}{\texttt{room-128-128-PMLA}\xspace}
\newcommand{\mazeLargeLA}{\texttt{maze-128-128-PMLA}\xspace}
\newcommand{\emptyLargeLA}{\texttt{empty-64-64-PMLA}\xspace}




\title{\LARGE \bf
Planning over MAPF Agent Dependencies via Multi-Dependency PIBT}


\author{
Zixiang Jiang$^{1*}$, Yulun Zhang$^{2*}$, Rishi Veerapaneni$^{2*}$, Jiaoyang Li$^{2}$
\thanks{$^*$These authors contributed equally.}
\thanks{$^{1}$Zixiang Jiang is with the University Of Melbourne. 
{\tt\small gmmichaeljiang@gmail.com}}%
\thanks{$^{2}$Yulun Zhang, Rishi Veerapaneni, and Jiaoyang Li are with the Robotics Institute, Carnegie Mellon University. 
{\tt\small \{yulunzhang,vrishi,jiaoyangli\}@cmu.edu}}%
}
\maketitle
\thispagestyle{empty}
\pagestyle{empty}

\begin{abstract}
Modern Multi-Agent Path Finding (MAPF) algorithms must plan for hundreds to thousands of agents in congested environments within a second, requiring highly efficient algorithms. Priority Inheritance with Backtracking (PIBT) is a popular algorithm capable of effectively planning in such situations. 
However, PIBT, and its variants like Enhanced PIBT (EPIBT), is constrained by its rule-based planning procedure and lacks generality because it restricts its search to paths that collide with at most one other agent. In this paper, we describe a new perspective on solving MAPF by planning over \textit{agent dependencies}. Taking inspiration from PIBT's priority inheritance logic, we define the concept of agent dependencies and propose Multi-Dependency PIBT (MD-PIBT) that searches over agent dependencies. MD-PIBT is a general framework where specific parameterizations can reproduce PIBT and EPIBT. At the same time, alternative configurations generalize PIBT and EPIBT to multi-step planning capable of reasoning paths that collide with more than one other agent.
Our experiments demonstrate that MD-PIBT effectively plans for as many as 10,000 homogeneous agents under various kinodynamic constraints, including pebble motion, rotation motion, and differential drive robots with speed and acceleration limits. 
We perform thorough evaluations on different variants of MAPF and find that MD-PIBT is particularly effective in MAPF with large agents. Our code is available at \url{https://github.com/lunjohnzhang/MD-PIBT}.
\end{abstract}

\section{Introduction}


Multi-Agent Path Finding (MAPF)~\cite{Stern2019benchmark} aims to move agents from their corresponding start to goal locations without collisions. Applications of MAPF such as autonomous warehouses~\cite{li2021lifelong} and robotic sorting systems~\cite{zhang2024tmo} could involve up to 4,000 ground robots moving in a shared environment~\cite{Brown2023amazonrobot}. These systems require efficient algorithms that return collision-free paths in less than a second. 

The state-of-the-art methods for fast and scalable planning leverage Priority Inheritance with Backtracking (PIBT)~\cite{okumura2022priority} due to its extreme speed and ability to find collision-free one-step solutions in congestion~\cite{Jiang2024Competition,jiang2025deploying10k,okumura2023lacam,yukhnevich2025epibt}. PIBT is a priority-based one-step planner capable of handling \emph{dependencies} between high-priority and low-priority agents. In particular, if a high-priority agent proposes a move that bumps into a low-priority agent, the low-priority agent inherits the high priority and forces other agents to make space for it. 

However, PIBT is fundamentally limited as it only reasons about one dependency at a time between a high-priority and low-priority agent. This means that PIBT cannot directly plan multi-step action sequences where one high-priority agent's path could collide with multiple low-priority agents. Existing work generalizing PIBT to multi-step planning~\cite{okumura2019winpibt,yukhnevich2025epibt} retains this limitation and cannot handle multiple dependencies between agents. 

This contrasts with Conflict-Based Search (CBS) \cite{sharon2015conflict}. CBS searches over space-time constraints, making it a general framework that can be applied to a variety of different dynamic models of MAPF agents~\cite{veerapaneni2025cbsprotocol}. Ideally, we would like to generalize PIBT to be a similar general framework. 

To that end, our contribution is designing Multi-Dependency PIBT (MD-PIBT), which formally introduces the idea of searching over \textit{agent dependencies} (as opposed to space-time constraints). This perspective enables planning over multi-step paths that can collide with multiple other agents and can directly work with different MAPF models. MD-PIBT is a general algorithm with many hyper-parameters that can be optimized and can replicate PIBT and EPIBT with specific instantiations. We evaluate MD-PIBT on several MAPF variants and show that MD-PIBT significantly outperforms PIBT and EPIBT when planning with large agents. Interestingly, however, on other MAPF instances of one-shot, lifelong MAPF with/without rotations, both PIBT and EPIBT are surprisingly strong, and MD-PIBT is able to replicate but not outperform their performance.


\section{Background}

\subsection{Multi-Agent Path Finding} \label{sec:background-mapf}

Multi-Agent Path Finding (MAPF) requires finding collision-free paths for a set of $N$ agents, denoted as $A = \{a_1,...,a_N\}$, where each agent must travel from its start location $s_i$ to its goal location $g_i$. In the standard 2D MAPF setup, agents move on a 4-connected grid graph discretized into evenly spaced cells. A collision-free solution consists of a set of paths $\Pi = \{ a_1.\pi, ..., a_N.\pi \}$ satisfying $a_i.\pi^0 = s_i$, $a_i.\pi^{T} = g_i$, where $T$ is the maximum timestep of all agents' paths. A collision-free solution must avoid vertex collisions (when two agents occupy the same cell at the same timestep) and edge collisions (when two agents swap positions between consecutive timesteps). 


In MAPF, agents can move according to different dynamic models. We consider three models: (1) Pebble Motion (PM)~\cite{Stern2019benchmark}, where agents move omni-directionally, (2) Pebble Motion with Large Agents (PMLA)~\cite{LiAAAI19a}, where agents have different sizes and move identically to PM, and (3) Rotation Motion (RM)~\cite{Jiang2024Competition}, where agents rotate in place or move forward. Additionally, planned paths can be executed using the Differential Drive Robots (DDR)~\cite{YanAndZhang2026LSMART} model, where agents rotate or move forward with speed and acceleration limits. In all models, agents can wait at their current cell. 

Our experiments evaluate on one-shot MAPF and lifelong MAPF. In one-shot MAPF the objective is to find a solution $\Pi$ that minimizes the total cost $|\Pi^{0:T}| = \sum_{i=1}^N |a_i.\pi^{0:T}| = \sum_{i=1}^N \sum_{t=0}^{T-1} c(a_i.\pi^t,a_i.\pi^{t+1})$. In this work, we assume that every action has unit cost $c(a_i.\pi^t,a_i.\pi^{t+1})=1$. When an agent remains at its goal, the action cost is 0.
Lifelong MAPF is a variant of MAPF where agents are reassigned a new goal each time they reach their current assigned goals. The objective is to maximize \emph{throughput}, the average number of goals reached per timestep. Lifelong MAPF is typically solved by decomposing it into a sequence of MAPF problems and solving them by searching for windowed paths~\cite{li2021lifelong}.

\subsection{MAPF Algorithms}
Search-based methods for MAPF are powerful and popular. 
CBS~\cite{sharon2015conflict} decomposes the MAPF instance into repeated single-agent calls and iteratively resolves collisions by searching over space-time constraints. CBS can be applied in many different domains and also has several algorithmic extensions~\cite{barer2014suboptimal,li2021eecbs}.
Another popular paradigm is priority-based algorithms, such as Prioritized Planning (PP)~\cite{erdmann1987multiple}, which plan each agent in a pre-defined priority but can easily lead to deadlocks. Priority Based Search (PBS)~\cite{ma2019searching} combines CBS and PP to search for priority orders of agents. 
In addition, learning-based methods are becoming increasingly popular. Methods leverage imitation learning~\cite{jiang2025deploying10k} or reinforcement learning to train a shared policy~\cite{learntofollow2024,guillaume2018primal}.

However, for certain real-world situations, having a planning time of less than one minute using search-based or priority-based algorithms is not sufficient, and learning-based methods struggle to generalize to unseen scenarios. For example, the Amazon-sponsored League of Robot Runners competition requires planning 10,000 agents in one second \cite{chan2024league}. 
In such scenarios, most state-of-the-art MAPF methods leverage rule-based methods~\cite{WangB11}, which move agents by following a set of pre-defined rules.
The state-of-the-art rule-based method is Priority Inheritance with Backtracking (PIBT)~\cite{okumura2022priority} due to its extreme speed. 

\subsection{PIBT and Its Variants}
PIBT is an extremely fast one-step priority-based planning algorithm capable of returning collision-free solutions for hundreds of agents in less than 200 \textit{milliseconds}~\cite{Jiang2024Competition}, making it the key component within several modern state-of-the-art MAPF methods~\cite{Jiang2024Competition,jiang2025deploying10k,okumura2023lacam,yukhnevich2025epibt}.
PIBT is a one-step rule-based planning algorithm. In each timestep, agents are assigned a priority. PIBT sequentially plans each agent, starting from the highest priority agent. When planning, an agent is always required to avoid higher-priority agents' paths. If a high-priority agent $a_i$ plans a one-step path that bumps into (i.e., collides with) a lower-priority agent $a_j$'s current location, PIBT lets $a_j$ temporarily \emph{inherit} the priority of $a_i$ and attempt to plan an action. If $a_j$ is able to find a collision-free path avoiding planned paths of higher-priority agents, then $a_i$ can use its one-step path. If not, $a_j$ is required to wait at its current location, and $a_i$ is required to try a different action.

Since PIBT plans for only one step, agents might perform myopically. For example, when two agents enter the same corridor in opposite directions, the lower priority agent only knows to move backward once it meets the high-priority agent. winPIBT~\cite{okumura2019winpibt} and EPIBT~\cite{yukhnevich2025epibt} attempt to extend PIBT to plan for windowed $w$-step ($w \geq 1$) paths, but winPIBT performs worse than PIBT. 
EPIBT outperforms PIBT by enumerating all $w$-step paths and running PIBT where agents choose between these paths instead of single-step actions.

Both PIBT and EPIBT assume that the agents are identical in size. Heterogeneous PIBT (HetPIBT)~\cite{Chakravarty2026HetPIBT} extends PIBT to work with heterogeneous agents that vary in both size and speed. It assumes that an agent's velocity is proportional to its size and plans agents sequentially as PIBT does. It starts by finding a spatial path for each agent such that the lower priority agents will vacate the space required by the higher priority agents' paths. It then backtracks to fill up the temporal information of the spatial paths.



A core problem with EPIBT and PIBT is that the $w$-step paths with $w > 1$ could result in a high-priority agent bumping into multiple low-priority agents. Even for HetPIBT, while it increases the length of each single-step path based on agents' velocities, it cannot consider multi-step paths that collide with multiple agents.
This poses a series of difficult questions. For example, how does priority inheritance work if we allow each agent to collide with multiple agents? Which low-priority agent gets planned next if there are multiple? What happens if one agent succeeds while the other fails?
PIBT does not need to reason about this instance as it only deals with one-step planning where an agent can only bump into at most one other agent. EPIBT also explicitly avoids these questions by only considering paths that collide with at most one other agent. 
Our objective is to remove the restriction of only considering paths with $\leq$ 1 collision by introducing Multi-Dependency PIBT.

\section{Agent Dependency Perspective of PIBT}
Our main theoretical insight is reinterpreting PIBT as a method that reasons over agent dependencies. 


We define each agent $a_i$ to have a \emph{tentative path} $a_i.\pi$ that we are trying to compute. $a_i$ also has a \emph{safe path} $a_i.\tau$, which has the property that all agents' safe paths $a_{1:N}.\tau$ are collision-free. Note that, at every timestep $t$, all agents have a trivial safe path of waiting at their current locations, i.e., $a_i.\pi^{t+1} \gets a_i.\pi^{t}, \forall a_i \in A$.



We now describe PIBT as reasoning over agent dependencies between tentative paths and safe paths. PIBT initializes the safe paths with wait action and plans tentative paths for each agent one by one. Each PIBT call attempts to find tentative paths that are collision-free with each other and with the safe paths of the agents that do not have tentative paths yet.
This requires PIBT to reason agent dependencies when a high-priority agent $a_i$'s tentative path collides with a low-priority agent $a_j$'s safe path. 
When this occurs, PIBT requires $a_j$ to find a tentative path that does not collide with any high-priority agents. If $a_j$ cannot, then it chooses its safe path, which requires $a_i$ to try an alternative path.
We now formally define dependencies.


\begin{definition}[Agent Dependencies]
    \label{def:agent-dependency}
    $a_i$ has an agent dependency on $a_j$ if $a_i$'s tentative path collides with $a_j$'s safe path, i.e., $a_i.\pi$ collides with $a_j.\tau$.
\end{definition}

Conceptually, an agent dependency from $a_i$ to $a_j$ denotes that, for $a_i$ to use $a_i.\pi$, $a_j$ must find a collision-free path $a_j.\pi$ that is not the safe path.
If $a_j$ cannot find such a path, it has to use its safe path, and $a_i$ needs to find a new path.

We now define two types of dependencies, namely hard and soft dependencies, depending on if $a_j$ is planned or not.

\begin{definition}[\emph{Hard} Dependencies]
    \label{def:hard-agent-dependency}
    $a_i$ has a hard dependency on $a_j$ if $a_i$'s path collides with $a_j$'s safe path and \textit{$a_j$ is not planned}, i.e., $a_i.\pi$ collides with $a_j.\tau$ and $a_j.\pi = \perp$.
\end{definition}

\begin{definition}[\emph{Soft} Dependencies]
    \label{def:soft-agent-dependency}
    $a_i$ has a soft dependency on $a_j$ if $a_i$'s path collides with $a_j$'s safe path and \textit{$a_j$ is planned}, i.e., $a_i.\pi$ collides with $a_j.\tau$ and $a_j.\pi \neq \perp$.
\end{definition}






Using this definition of agent dependencies, we can reinterpret PIBT as searching over an Agent Dependency Graph. 
\begin{definition}[Agent Dependency Graph (AgDG)]
    \label{def:agdg}
    An agent dependency graph is a directed graph where each agent is a node, and each edge denotes either a hard dependency ($a_i \rightarrow a_j$) or a soft dependency ($a_i \dashrightarrow a_j$).
\end{definition}

PIBT only plans with hard dependencies (we will revisit soft dependencies later). 
In PIBT, each $a_i.\pi$ is a one-step path, so each agent can depend on at most one other agent.
Thus, PIBT's AgDG is a linked list.
PIBT first plans the highest priority agent and then plans through all dependent agents until all of them have collision-free paths. Then, it proceeds to the next-highest-priority unplanned agent and repeats this process until all agents are planned.

\begin{algorithm}[!t]
\caption{Agent class}\label{alg:agent-class-main}
\textbf{Agent $a_k$:} \\
\Indp
$\pi \gets \perp$ \tcp{Tentative path}
$\tau \gets \tau_k$  \tcp{Safe path}
$p \gets \phi(a_k)$ \tcp{Priority}
$\mathcal{P} \gets \rho(s_k,g_k,\parameter{w})$ \tcp{All possible $w$-step paths of agent $a_k$, sorted by distance to goal}
$d \gets 0$ \tcp{Path index}
$r \gets 0$ \tcp{Number of planning attempts}
\Indm
\end{algorithm}

\begin{algorithm}[!t]
\caption{MD-PIBT: Outer Loop}\label{alg:main-loop-main}\LinesNumbered
\SetKwInput{KwInput}{Input}
\SetKwInput{KwOutput}{Output}
\SetKwProg{Fn}{Function}{:}{}
\SetKwFunction{MDPIBT}{MDPIBT}
\SetKwFunction{Sorted}{sorted}
\SetKwFunction{updateSafePaths}{updateSafePaths}
\SetKwFunction{findNextBestPath}{findBestPath}
\SetKwFunction{getParent}{getHardParent}
\SetKwFunction{removeDependency}{removeDepend}
\SetKwFunction{addDependency}{addDepend}
\SetKwFunction{ShouldAgentChooseSafePath}{fallToSafePath}
\SetKwFunction{chooseParent}{chooseParent}
\SetKwFunction{ShouldPlan}{shouldPlan}

\KwInput{
    Agents $A = \{a_1,...,a_n\}$ with initialized values according to the agent class;
}




    Initialize $\mathcal{G}(\mathcal{V}=A,\mathcal{E}=\emptyset)$,
    $A_{plan} \gets \emptyset$\\
    \For{$a_i \in \Sorted{A}$}{
        \If{$a_i.r = 0$} {
            \MDPIBT{$a_i$, $A$, $\mathcal{G}$, $A_{plan}$} \\
            \For{$a_k\in A: a_k.\pi \neq \perp$}{
                $a_k.{\tau} \gets a_k.{\pi}$, $a_k.{\pi} \gets \perp$\\
            }
        }
    }
\end{algorithm}

\begin{algorithm}[!t]
\caption{MD-PIBT Function}\label{alg:md-pibt-main}\LinesNumbered
\SetKwProg{Fn}{Function}{:}{}

\Fn{\MDPIBT{$a_i$, $A$, $\mathcal{G}$, $A_{plan}$}}{
    $q \gets$ \parameter{Queue}(\{$a_i$\}) \\
    \While{$ q \neq \emptyset$}{
        $a_k \gets q.pop()$ \\
        $a_k.r \gets a_k.r + 1$ \\
        $\pi' \gets$\parameter{\findNextBestPath}($a_k, A_{plan} \parameter{~|~m, C}$) \\
        \uIf(\tcp*[f]{\Cref{sec:mdpibt-add-dep}}){$\pi' \neq \perp$}{
            $a_k.{\pi} \gets \pi'$,
            $A_{plan} \gets A_{plan} \cup \{a_k\}$\\
            \addDependency{$\mathcal{G}$, $a_k$, $\pi'$, $q$, $A$}\\
        }
        \Else(\tcp*[f]{\Cref{sec:md-pibt-remove-dep}}){
            \uIf{\parameter{\ShouldAgentChooseSafePath}($a_k\parameter{~|~R}$)}{
                    \For{$a_p:(a_p \to a_k) \in \mathcal{E}$}{
                        \removeDependency{$\mathcal{G}$, $a_p$, $q$, $A_{plan}$}\\
                    }
                $a_k.\pi \gets a_k.\tau$,
                $A_{plan} \gets A_{plan} \cup \{a_k\}$\\
                
            }
            \Else{
                $a_p \gets \parameter{\chooseParent}(\delta^{\text{HARD}-}(a_k))$\\
                \removeDependency{$\mathcal{G}$, $a_p$, $q$, $A_{plan}$}\\
                $q.push(a_p)$\\
                
            }
        }
    }
}
\end{algorithm}

\section{Multi-Dependence PIBT}

We describe MD-PIBT in \Cref{alg:agent-class-main,alg:main-loop-main,alg:md-pibt-main,alg:agdg-helper-main}, where red refers to hyper-parameters. We first give an overview of MD-PIBT in \Cref{sec:mdpibt-overview} and then discuss key components in \Cref{sec:mdpibt-add-dep,sec:md-pibt-remove-dep}, with a summary of hyper-parameters in \Cref{sec:md-pibt-param}.
With certain choices of hyper-parameters, MD-PIBT can be reduced to standard PIBT or EPIBT. 
We give a running example of MD-PIBT in \Cref{fig:main-md-pibt}.

\subsection{Overview} \label{sec:mdpibt-overview}

Unlike PIBT, which generates collision-free one-step paths, MD-PIBT plans collision-free $w$-step paths ($w \geq 1$).
\Cref{alg:agent-class-main} shows the attributes for each agent. Each agent $a_k$ is initialized with the set of all $w$-step paths $\mathcal{P}$ starting at the start location $s_k$, sorted by the distance from the end of the path to the goal location $g_k$.
The agent try each path by iterating through $\mathcal{P}$ via a path index $d$. $r$ is the number of planning attempts (more on it later).

\Cref{alg:main-loop-main} shows the outer loop. 
MD-PIBT initializes the agents with an AgDG $\mathcal{G}$ and a set of agents $A_{plan}$ to remember all agents that have a tentative path. It then calls the \texttt{MDPIBT} function on each agent $a_i$ in the priority order given by $p$ (computed by a priority function $\phi$) in \Cref{alg:agent-class-main} if $a_i$ has never been planned (line 3).
After planning for $a_i$ and its potential descendants (line 4), we update the safe paths of all agents by replacing their current safe paths with the newly planned tentative paths, if there is one, and reset all tentative paths to $\perp$ (line 5-6).

\Cref{alg:md-pibt-main} presents the core logic of MD-PIBT. We maintain a priority queue $q$, initialized with the starting agent $a_i$ (line 2), which stores all agents whose dependencies require resolution and thus must be planned.
Because $\mathcal{G}$ is a graph rather than a linked list, multiple agents may simultaneously become eligible for planning, and the order in which they should be processed is not uniquely determined. To accommodate this flexibility, we employ a priority queue that allows agents to be selected according to a configurable ordering policy. At each step, MD-PIBT selects the next agent from $q$, plans for that agent, and introduces additional dependencies when necessary.

To manipulate $\mathcal{G}$, we define $\delta^{\text{HARD}+}$, $\delta^{\text{HARD}-}$, $\delta^{\text{SOFT}+}$, and $\delta^{\text{SOFT}-}$ to retrieve the hard and soft outgoing edges ($+$) and incoming edges ($-$) of an agent in $\mathcal{G}$. In addition, we define $\delta^+$ and $\delta^-$ to denote the sets of all the incoming and outgoing edges of an agent, respectively.

In each iteration, MD-PIBT pops the next agent $a_k$ from $q$ and increments its planning attempt counter $r$ (lines 3–5). MD-PIBT then leverages the \texttt{findBestPath} function (line 6), which uses a path index $d$ and a sorted list of all $w$-step paths $\mathcal{P}$ in \Cref{alg:agent-class-main} to find its next-best \emph{valid} path. We define the valid path in \Cref{sec:md-pibt-param}.
If a valid path $\pi'$ exists (line 7), it adds dependency edges based on whose safe paths $\pi'$ collides with (line 8), updates $a_k$'s tentative path, and adds $a_k$ to $A_{plan}$ (line 9). 
More details are explained in \Cref{sec:mdpibt-add-dep}.
If $a_k$ cannot find a valid path $\pi'$, $a_k$ needs to decide between taking its safe path $a_k.\tau$ (lines 12-14) or allowing itself to replan by first requesting its parent to replan (lines 16-18). 
More details are explained in \Cref{sec:md-pibt-remove-dep}.
This process is repeated until the queue becomes empty.

\begin{figure*}[t!]
    \centering
    \includegraphics[width=0.9\linewidth]{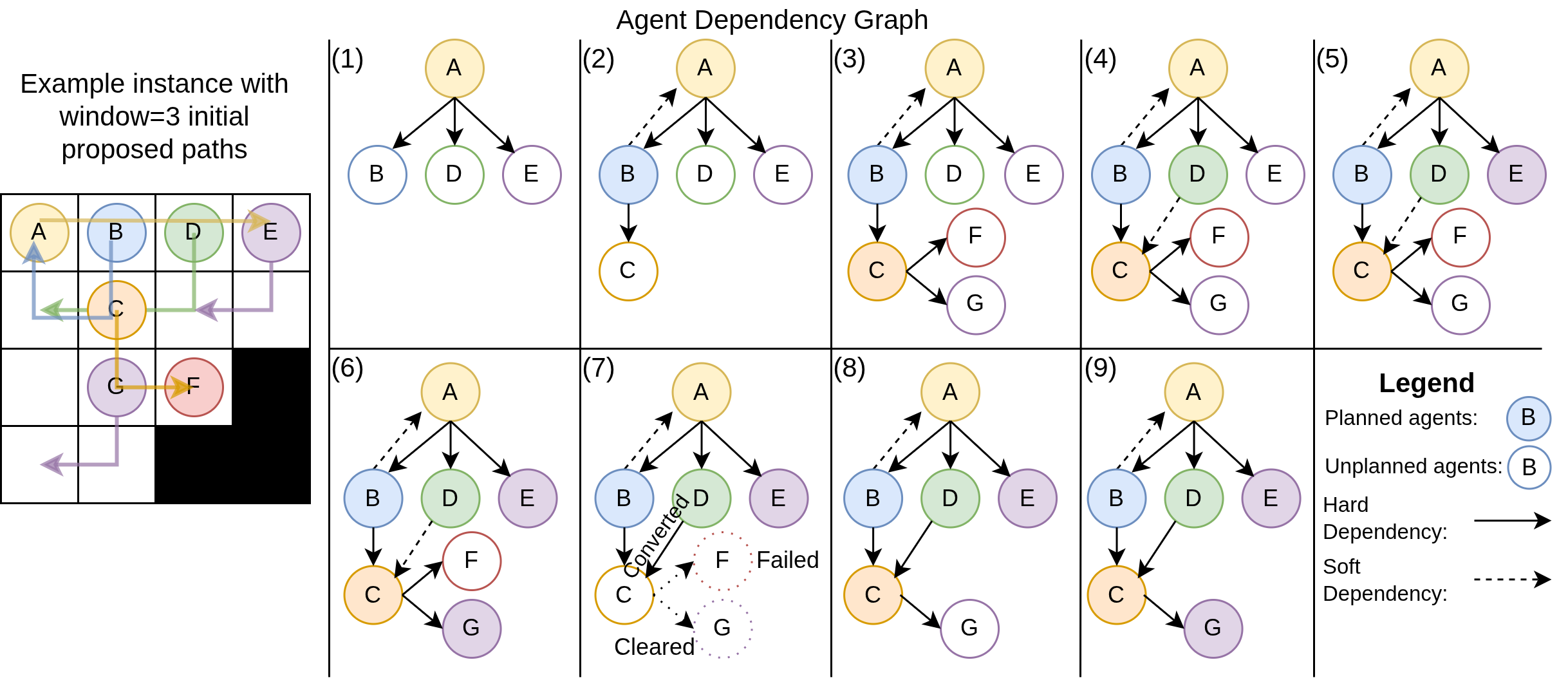}
    \vspace{-0.5em}
    \caption{Multi-Dependence PIBT (MD-PIBT) builds and searches over an \emph{Agent Dependency Graph}. Left shows a scenario for planning with a window size of 3, with initial path preferences drawn. Assume all agent's safe paths are waiting at their current locations. (1) Let MD-PIBT start planning with A. A's path conflicts with B, D, and E's safe path, causing A to have hard dependencies on them (\Cref{def:hard-agent-dependency}, they must find non-safe paths for A's path to be valid). Thus, $B, D, E$ need to be planned next. Given multiple agents, we plan in alphabetical order. (2) When B plans, B's path collides with C's and A's safe paths. Since A is already planned, we record a soft dependency between B and A. (3-6) This logic continues until planning F. (7) F fails to find a collision-free path. When this occurs, F requires a parent (in this case C) to \emph{replan}. The replan request unplans C which includes removing downstream dependencies and converting soft dependencies to C to hard dependencies. (8) Suppose that C replans by moving down, which does not intersect with F's safe path. Then F is not included in the AgDG. (9) After planning all agents in the AgDG, we can move on to plan other agents not in the AgDG (not depicted).}
    \label{fig:main-md-pibt}
    \vspace{-0.5em}
\end{figure*}


\begin{algorithm}[!t]
\caption{Agent Dependency Graph Functions}\label{alg:agdg-helper-main}\LinesNumbered
\SetKwProg{Fn}{Function}{:}{}
\SetKwFunction{hasConflict}{hasCollision}
\SetKwFunction{removeEdges}{removeEdges}
\SetKwFunction{convertEdges}{convertEdges}
\SetKwFunction{hasHardParents}{hasHardParents}

\Fn{\addDependency{$\mathcal{G}$, $a_k$, $\pi'$, $q$, $A$}}{
    \For{$a_{i \neq k} \in A :$ \hasConflict{$a_i.{\tau}, \pi'$}}{
        \uIf{$a_i.{\pi} = \perp$}{
            $\mathcal{E} \gets \mathcal{E} \cup \{(a_k \rightarrow a_i\})$\\
            $q.push(a_i)$\\
        }
        \Else{
            $\mathcal{E} \gets \mathcal{E} \cup \{(a_k \dashrightarrow a_i)\}$\\
        }
    }
}



\Fn{\removeDependency{$\mathcal{G}$, $a_p$, $q$, $A_{plan}$}}{

    $a_{p}.\pi \gets \perp$,
    $A_{plan} \gets A_{plan} \setminus \{a_{p}\}$\\
    \texttt{convertSoftEdgesToHard}($\delta^{\text{SOFT}-}$) \\
    \uIf{$|\delta^{\text{HARD}-}(a_{p})| > 0$ \textbf{and} $a_{p} \notin q$}{
        $q.push(a_{p})$
    }
    \ElseIf{$|\delta^{\text{HARD}-}(a_{p})| = 0$ \textbf{and} $a_{p} \in q$}{
        $q.erase(a_{p})$
    }
    $\mathcal{E} \gets \mathcal{E} \setminus \delta^{\text{SOFT}+}(a_p)$ \tcp{Remove out soft edges}
    \For{$a_c : (a_p \to a_c) \in \mathcal{E}$}{
    $\mathcal{E} \gets \mathcal{E} \setminus \{(a_p \to a_c)\}$ \\
    $a_{c}.d \gets 0$\\
    \removeDependency{$\mathcal{G}$, $a_c$, $q$, $A_{plan}$}\\
    }
}
\end{algorithm}

\subsection{Accepting the Tentative Path} \label{sec:mdpibt-add-dep}

When a valid path $\pi'$ exists for $a_k$ (line 7), we accept it as $a_k$'s tentative path and add $a_k$ to $A_{plan}$ (line 8). Since this tentative path may collide with the safe paths of some agents, we call the \texttt{addDepend} function to update the dependencies (line 9). As shown in \Cref{alg:agdg-helper-main}, the \texttt{addDepend} function identifies all agents $a_i$ whose safe paths collide with $\pi'$ and adds hard or soft dependencies accordingly.


\subsubsection{Add Hard Dependencies}
Lines 3-5 add \emph{hard} dependencies $a_k \rightarrow a_i$ to $\mathcal{G}$ for all \emph{unplanned} agents $a_i$ (i.e., $a_i$ does not have a tentative path $a_i.\pi$) whose safe paths collide with $\pi'$. 
For example, in \Cref{fig:main-md-pibt}(1), agent $a_A$ bumps into safe paths of unplanned agents $a_B, a_D,$ and $a_E$, so we add hard dependencies from $a_A$ to all of them.

\subsubsection{Add Soft Dependencies}
Unlike PIBT, MD-PIBT requires \emph{soft} dependencies.
Lines 6-7 add soft dependencies $a_k \dashrightarrow a_i$ for all \emph{planned} agents $a_i$ (i.e., $a_i$ has a tentative path $a_i.\pi$) whose safe paths collide with $\pi'$.
For example, in \Cref{fig:main-md-pibt}(4), $a_D$ is planning and $a_D.\pi$ collides with $a_C.\tau$ \textit{after} $a_C$ has a tentative path $a_C.\pi$. 
Later, if $a_C$'s hard dependency descendants fail, $a_C$ might need to replan.
Now, when $a_C$ replans, $a_D.\pi$ collides with $a_C.\tau$, resulting in a hard dependency $a_D \rightarrow a_C$, according to \Cref{def:hard-agent-dependency}.
To account for this possibility, we need soft dependency edges in \Cref{def:soft-agent-dependency} and convert them to hard ones when necessary.


\subsection{Backtracking} \label{sec:md-pibt-remove-dep}

When no valid path exists for $a_k$ (\Cref{alg:md-pibt-main} line 10), some already-planned agents must be replanned. This generalizes the backtracking process of PIBT. PIBT forces $a_k$ to take its safe path $a_k.\tau$ and asks the agent whose tentative path collides with $a_k.\tau$ to replan. We generalize this idea as the \emph{falling-to-safe path} option in \Cref{subsubsec:fall-safe}. 
Alternatively, EPIBT proposes a different option. If the agent whose tentative path collides with $a_k.\tau$ replans, $a_k$ may subsequently obtain a valid path that is better than its safe path. Therefore, instead of immediately committing to its safe path, EPIBT allows $a_k$ to be replanned. We generalize this idea as the \emph{attempting-to-replan} option in \Cref{subsubsec:attempt-replan}. To decide between these two options, we formulate the choice as a hyper-parameterized function \ShouldAgentChooseSafePath (line 11), which will be introduced in \Cref{sec:md-pibt-param}.

\subsubsection{Falling to Safe Paths} \label{subsubsec:fall-safe}
If $a_k$ is forced to its safe path (\Cref{alg:md-pibt-main} lines 12-14), then all $a_p$ that have hard dependencies on $a_k$ (i.e., $a_p \to a_k$) need to be replanned.
We use the \texttt{removeDepend} function to update each such $a_p$. 
As shown in \Cref{alg:agdg-helper-main}, \texttt{removeDepend} first empties the tentative path of $a_p$ and removes it from the planned agent set $A_{plan}$ (line 9). Since $a_p$ now no longer has a tentative path, by definition, all incoming soft edges become hard edges (line 10).  
Since any agent with hard dependencies (i.e., has incoming hard edges) must be replanned, we add $a_p$ to $q$ if it is not there (lines 11-12). On the other hand, since any agent with no hard dependencies does not need to be replanned, we remove $a_p$ from $q$ if it is there (lines 13-14). 
Then, all outgoing edges of $a_p$ should be removed because $a_p$ no longer has a tentative path (lines 15 and 17). Moreover, replanning $a_p$ may change the replanning necessity of its children. For example, for a child $a_{j}$ of $a_p$ with $a_p \to a_j$, if $a_p$'s new tentative path no longer collides with $a_j.\tau$, then $a_j$ may no longer need to be replanned.
Even if $a_p$'s new path collides with $a_j.\tau$, the new path would change the possible collision-free paths that $a_j$ could choose.
Therefore, MD-PIBT recursively unplans all child agents of $a_p$ (line 19). It also resets the path index $d$ (line 18) of each child agent $a_c$ since previously invalid paths for $a_c$ may become valid due to the potential change of $a_p.\pi$. Resetting the path index allows the agents to retry those paths.

\subsubsection{Attempting to Replan} \label{subsubsec:attempt-replan}
The second option for dealing with the case where no valid paths exist for $a_k$ is to ask one of $a_k$'s parent to replan and then retry $a_k$ (\Cref{alg:md-pibt-main} lines 16-18).
Thus, instead of forcing $a_k$ to pick the safe path, we choose some parent $a_p$ and ask it to replan by reusing the same \texttt{removeDepend} function. We then add $a_p$ to $q$. 
\Cref{fig:main-md-pibt}(6)-(8) shows an example of attempting to replan.

With replanning, agents can be planned more than once. In the worst case, each agent can be bumped into by a parent an exponential number of times, resulting in intractable runtime. Thus, the \texttt{fallToSafePath} function is necessary for reasonable runtimes. 

\begin{table*}[!t]
    \centering
    \small
    \resizebox{\linewidth}{!}{
    \begin{tabular}{cccccccccc}
    \toprule
    MAPF & Map & $|V|$ & Agent Model & $w$ & $h$ & $R$ & $C$ & $m$ & $p$ \\
    \midrule
    \multirow{6}{*}{One-shot} & \randomSmall       & 819 & PM & \multirow{2}{*}{$\{1,2,3,4\}$} & \multirow{2}{*}{$\{1,w\}$} & \multirow{2}{*}{100} & \multirow{2}{*}{$\{1,\infty\}$} & \multirow{2}{*}{$\{\text{PIBT},\text{EPIBT}\}$} & \multirow{2}{*}{LET}              \\
                              & \warehouseXlarge   & 5,699 & PM & & & &                     \\    
    \cmidrule(lrr){2-10}
                              & \randomLargeLA         & 15,580 / 10,862  (S/L) & PMLA  & \multirow{4}{*}{$\{3,4\}$} & \multirow{4}{*}{1} & \multirow{4}{*}{100} & \multirow{4}{*}{$\{1,2,4,8,\infty\}$} & \multirow{4}{*}{EPIBT} & \multirow{4}{*}{\{LET,SD\}}                   \\
                              & \emptyLargeLA         & 4,096 / 3,600  (S/L) & PMLA  & & & &                   \\
                              & \mazeLargeLA         & 14,818 / 8,574 (S/L) & PMLA  & & & &                   \\
                              & \roomLargeLA         & 14,716 / 11,843  (S/L) & PMLA    & & & &                 \\
    \midrule
    \multirow{2}{*}{Lifelong} & \randomSmall       & 819 & PM, RM & \multirow{2}{4cm}{PM:$\{1,2,3\}$\ RM:$\{3,4,5\}$\newline DDR:$\{3,6\}$}  & \multirow{2}{*}{$\{1,w\}$} & \multirow{2}{*}{1,000} & \multirow{2}{*}{$\{1,\infty\}$} & \multirow{2}{*}{$\{\text{PIBT},\text{EPIBT}\}$} & \multirow{2}{2cm}{PM/RM:\{LET,SD\}\ DDR: SD}   \\
                              & \warehouseXlarge   & 5,699 & PM, RM  & & & &                         \\
    \bottomrule
    \end{tabular}
    }
    \caption{
    Summary of MAPF model, maps, and agent models used in the experiments. $|V|$ is the number of vertices in the maps. 
    We report the number of vertices for small and large agents with the format of ``$x$ / $y$ (S/L)'', where $x$ and $y$ are the numbers of vertices for small and large agents, respectively. For agent priority strategies ($p$), we consider Longer Elapsed Time (LTE) and Shorter Dist (SD).
    }
    \label{tab:exp-setup}
    \par\vspace{-\abovecaptionskip}
\end{table*}

\subsection{MD-PIBT Hyper-parameters} \label{sec:md-pibt-param}

MD-PIBT has many hyper-parameters (highlighted in red).

\noindent \textbf{Queue:} The queue may have multiple unplanned agents. Thus, it is a hyper-parameter to decide which agent to plan next. We choose to use a stack, yielding DFS ordering.

\noindent \textbf{findBestPath($a_k, A_{plan} \parameter{~|~m, C}$):} 
An agent has multiple options when trying to pick the next path.
Such paths are defined differently for PIBT and EPIBT. For PIBT~\cite{okumura2022priority}, a tentative path is valid if it does not collide with the tentative paths of the other planned agents (i.e., agents in $A_{plan}$). For EPIBT, a tentative path is valid if it does not collide with the tentative paths of higher priority agents that obtained a new tentative path in the current call.

In our experiments, to capture this parameterization, we vary a hyper-parameter \emph{Find Path Mode} $m \in \{\text{PIBT}, \text{EPIBT}\}$. We also have a parameter $C$ that controls the maximum number of agents each agent can bump into. For PIBT and EPIBT, $C = 1$, while for MD-PIBT, $C \in \mathbb{Z}^+$. A valid path shall only collide with at most $C$ other agents.
We highlight that these are just hyper-parameters we choose and that other versions of \texttt{findBestPath} are possible. Appendix~\ref{appen:add-algo} shows our implementation of \texttt{findBestPath}.


\noindent \textbf{fallToSafePath($a_k \parameter{~|~R}$):} Agents can choose when to fall to their safe paths. We use a maximum replan limit $R$ and have each $a_k$ attempt at most $R$ times before choosing its safe path. Without $R$, it is possible to bump into an agent an exponential number of times.

\noindent \textbf{chooseParent($a_k$):} When $a_k$ cannot find a path, it needs to choose a parent to replan. Note that the picked parent's descendant agents will be cleared when it replans. Thus, we pick the most recently added parent as that should have fewer descendants than other parents.

We conduct theoretical analysis in Appendix~\ref{appen:theory}.

\section{Experimental Evaluation}

\begin{figure*}[!t]
    \centering
    \includegraphics[width=1\textwidth]{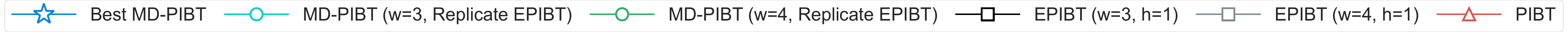}\par\medskip
    \vspace{-0.5em}
    \begin{subfigure}{0.49\textwidth}
        \centering
        \includegraphics[width=0.5\textwidth]{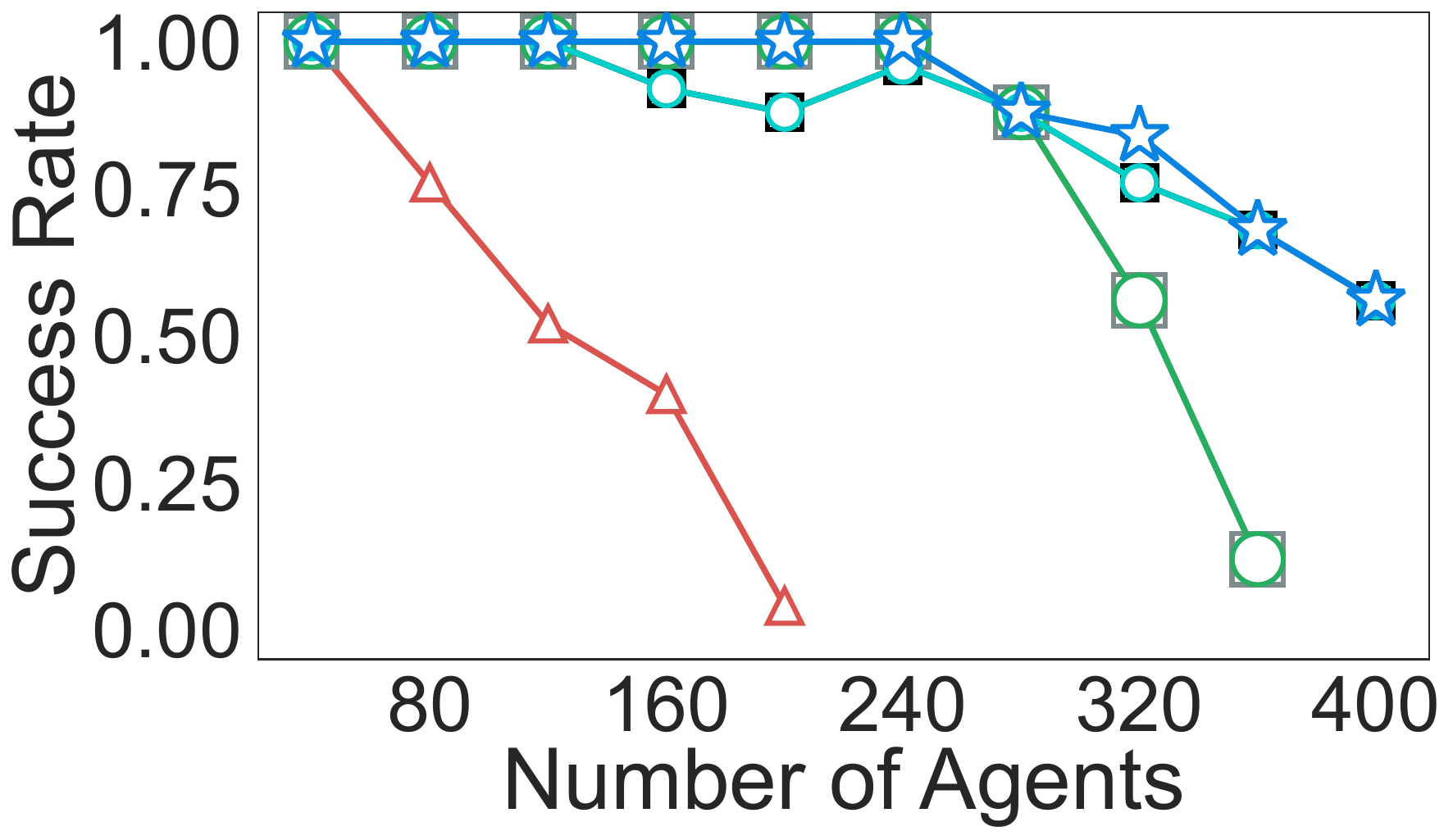}%
        \includegraphics[width=0.5\textwidth]{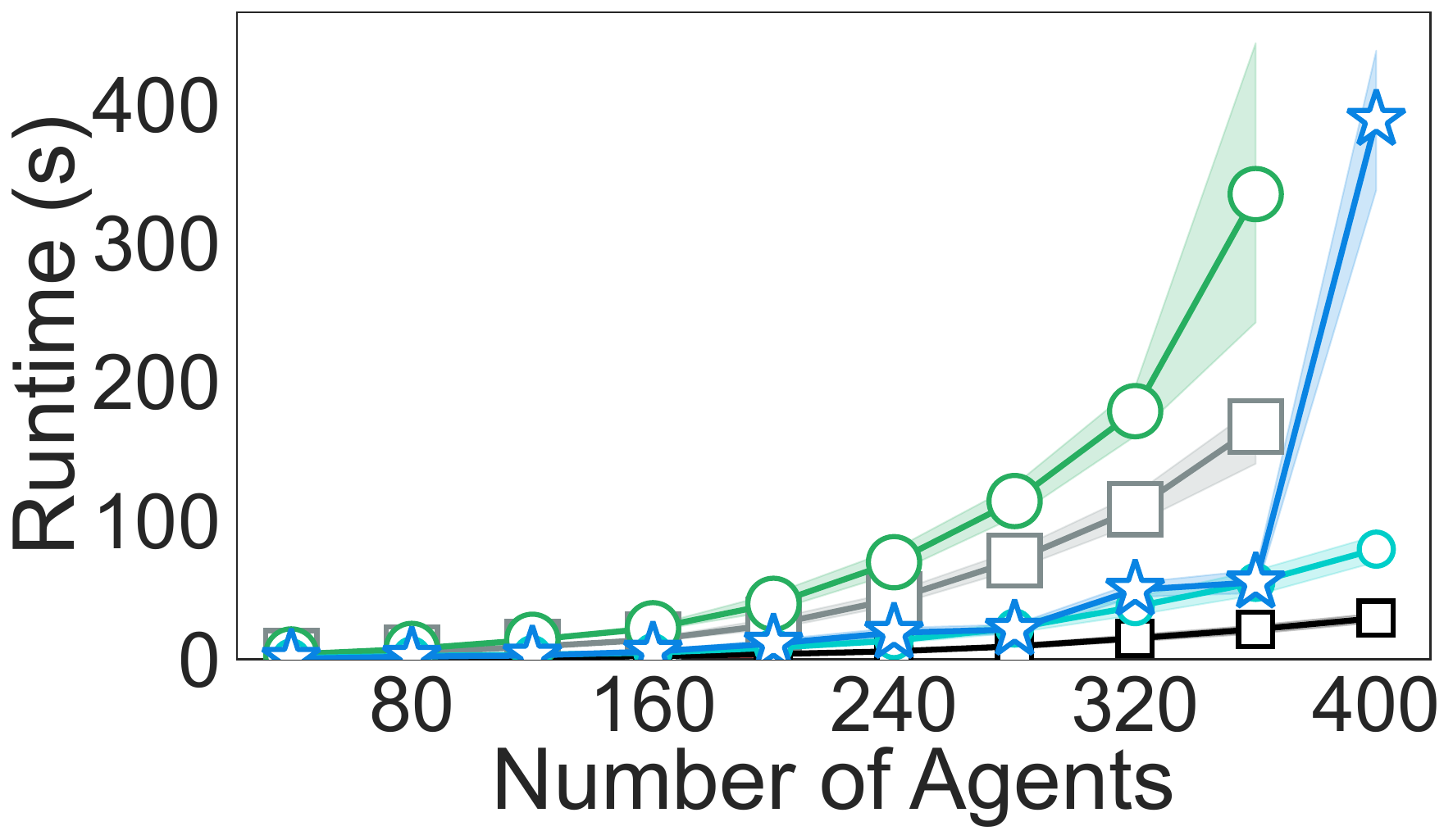}
        \vspace{-1.7em}
        \caption{\randomSmall}
        \label{fig:oneshotmapf-small-agents-random_32_32_20}
    \end{subfigure}%
    \hfill
    \begin{subfigure}{0.49\textwidth}
        \centering
        \includegraphics[width=0.5\textwidth]{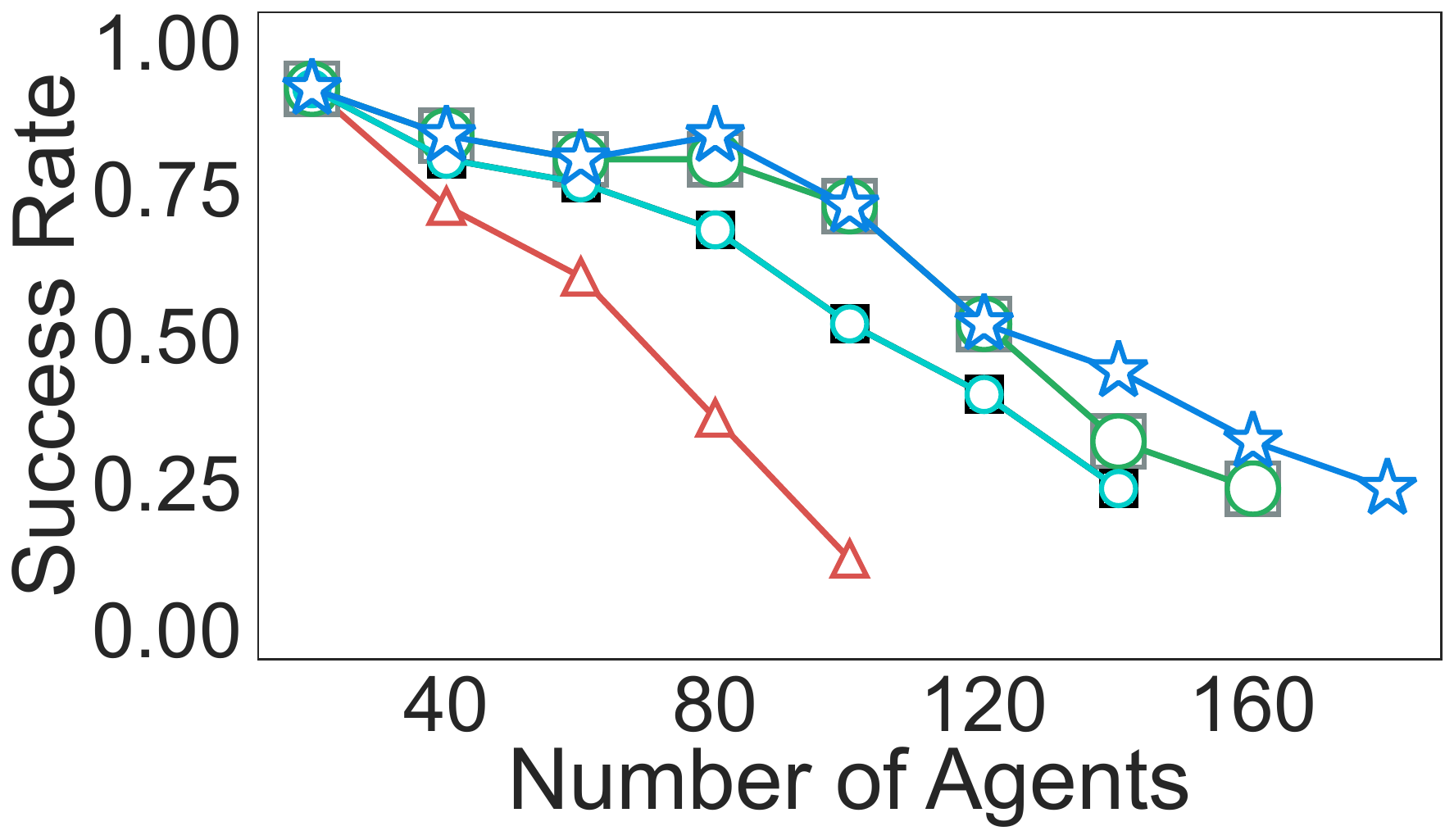}%
        \includegraphics[width=0.5\textwidth]{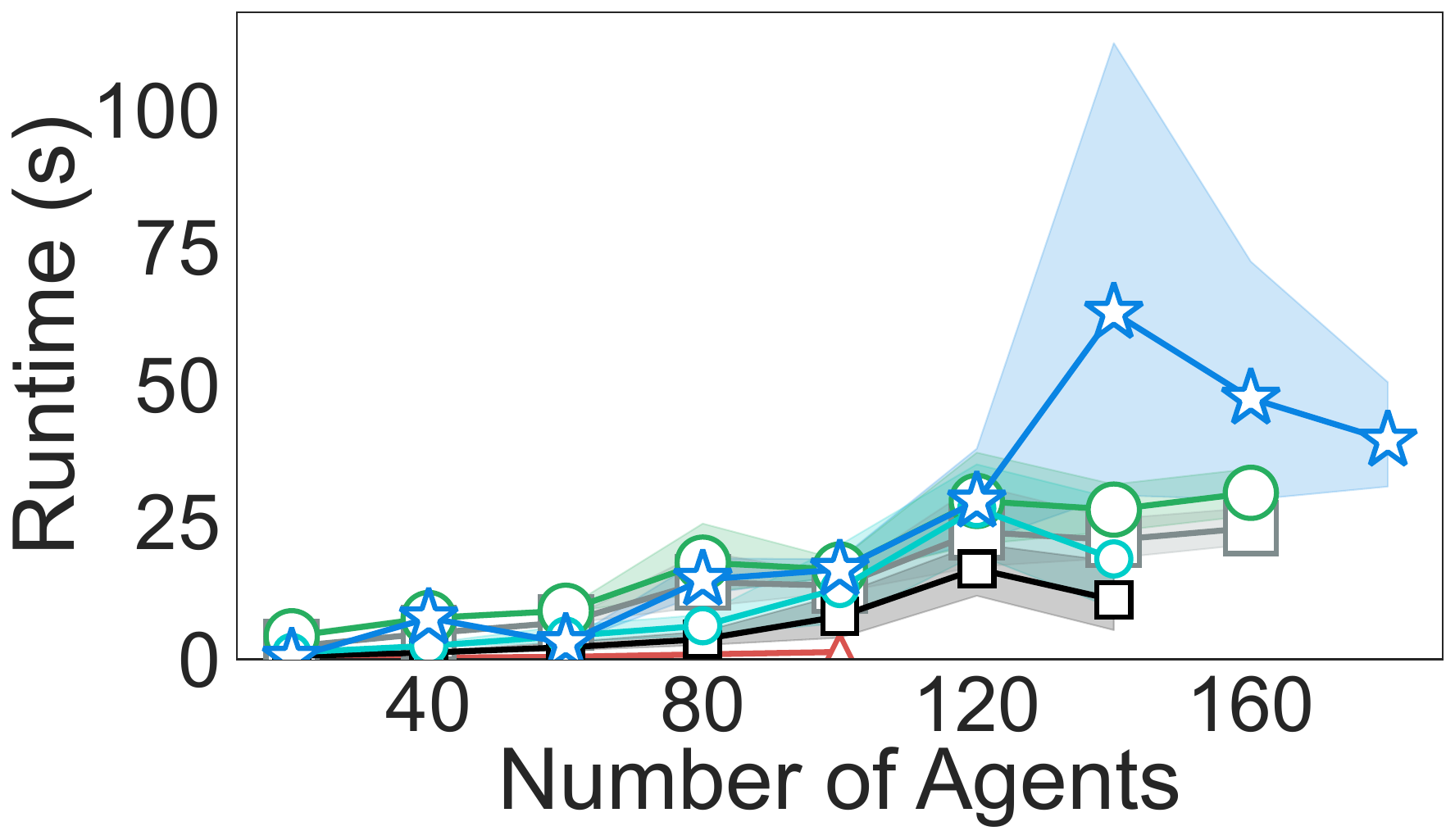}
        \vspace{-1.7em}
        \caption{\warehouseXlarge}
        \label{fig:oneshotmapf-small_agents-warehouse_10_20_10_2_1}
    \end{subfigure}
    \hfill

    \caption{Success rate and runtime for different numbers of agents for one-shot MAPF with PM agents.}
    \label{fig:major-result-oneshotmapf-smallagents}
    \par\vspace{-\abovecaptionskip}
\end{figure*}

\begin{figure*}[!t]
    \centering
    \includegraphics[width=0.8\linewidth]{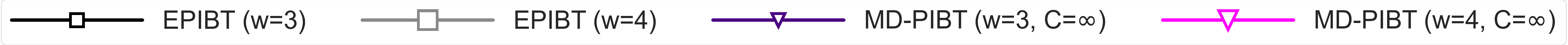}\par\medskip
    \vspace{-0.5em}
    \begin{subfigure}[t]{0.24\textwidth}
        \centering
        \includegraphics[width=\linewidth]{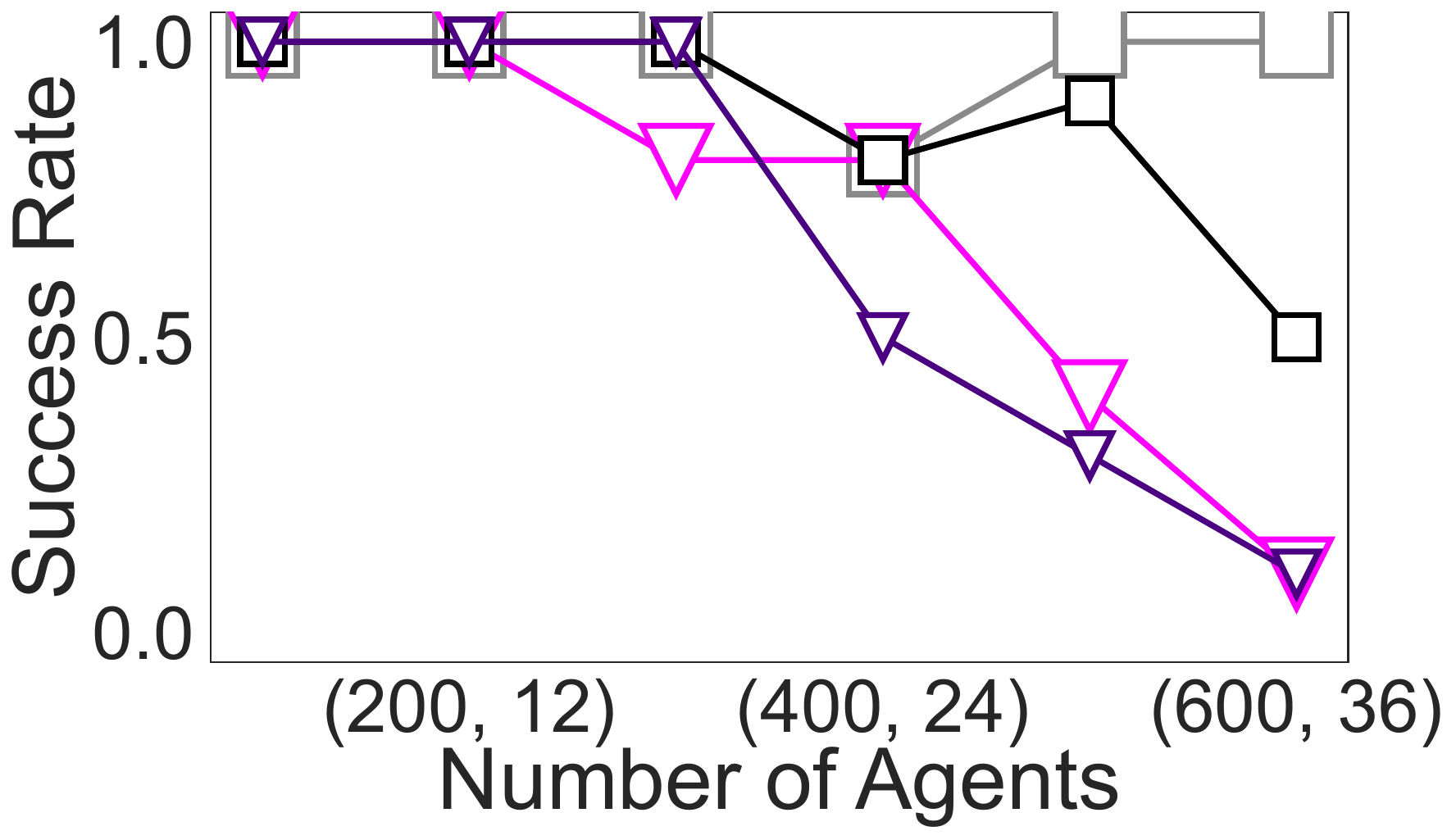}
    \end{subfigure}\hfill
    \begin{subfigure}[t]{0.24\textwidth}
        \centering
        \includegraphics[width=\linewidth]{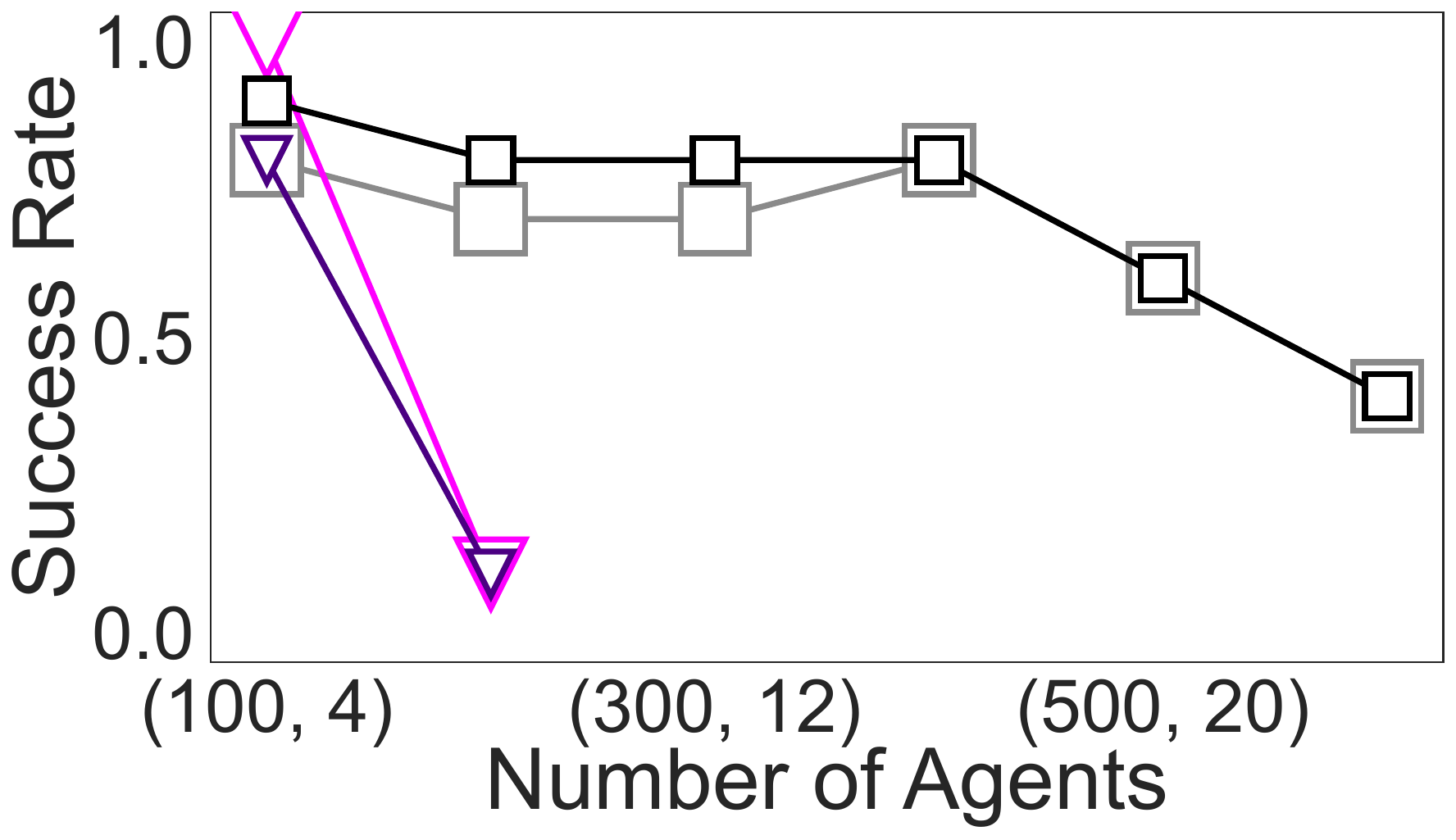}
    \end{subfigure}\hfill
    \begin{subfigure}[t]{0.24\textwidth}
        \centering
        \includegraphics[width=\linewidth]{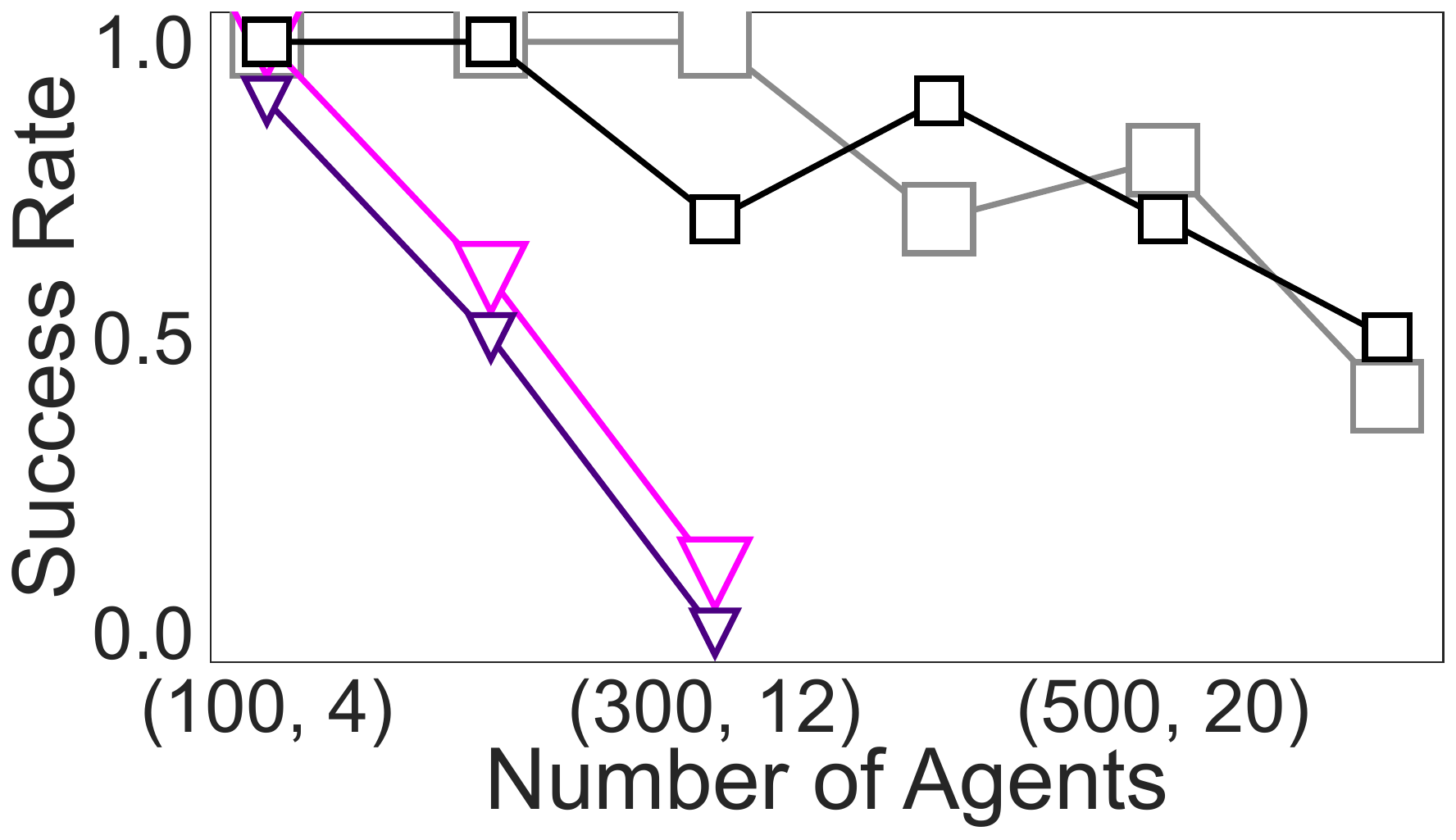}
    \end{subfigure}\hfill
    \begin{subfigure}[t]{0.24\textwidth}
        \centering
        \includegraphics[width=\linewidth]{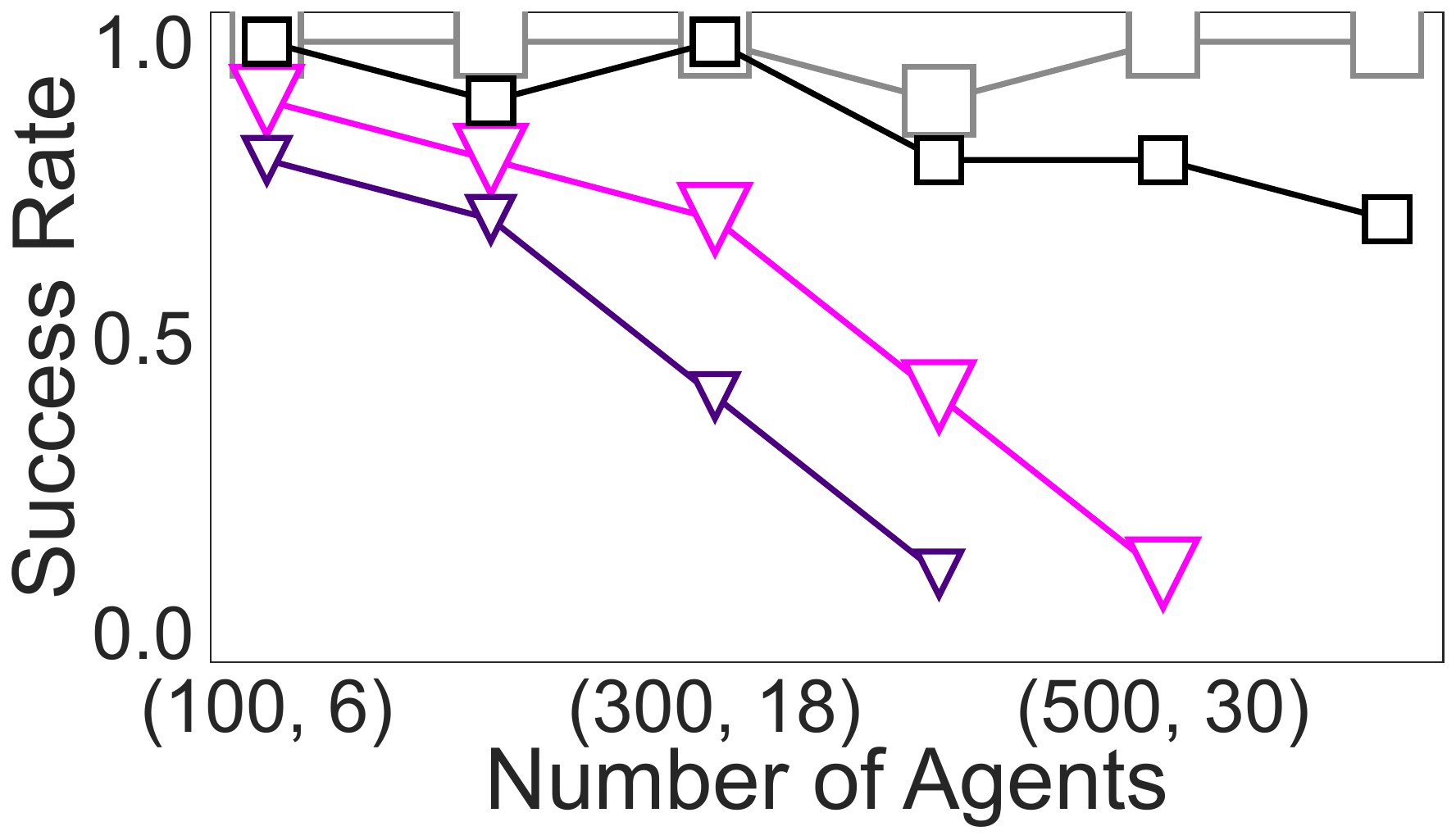}
    \end{subfigure}

    \begin{subfigure}[t]{0.24\textwidth}
        \centering
        \includegraphics[width=\linewidth]{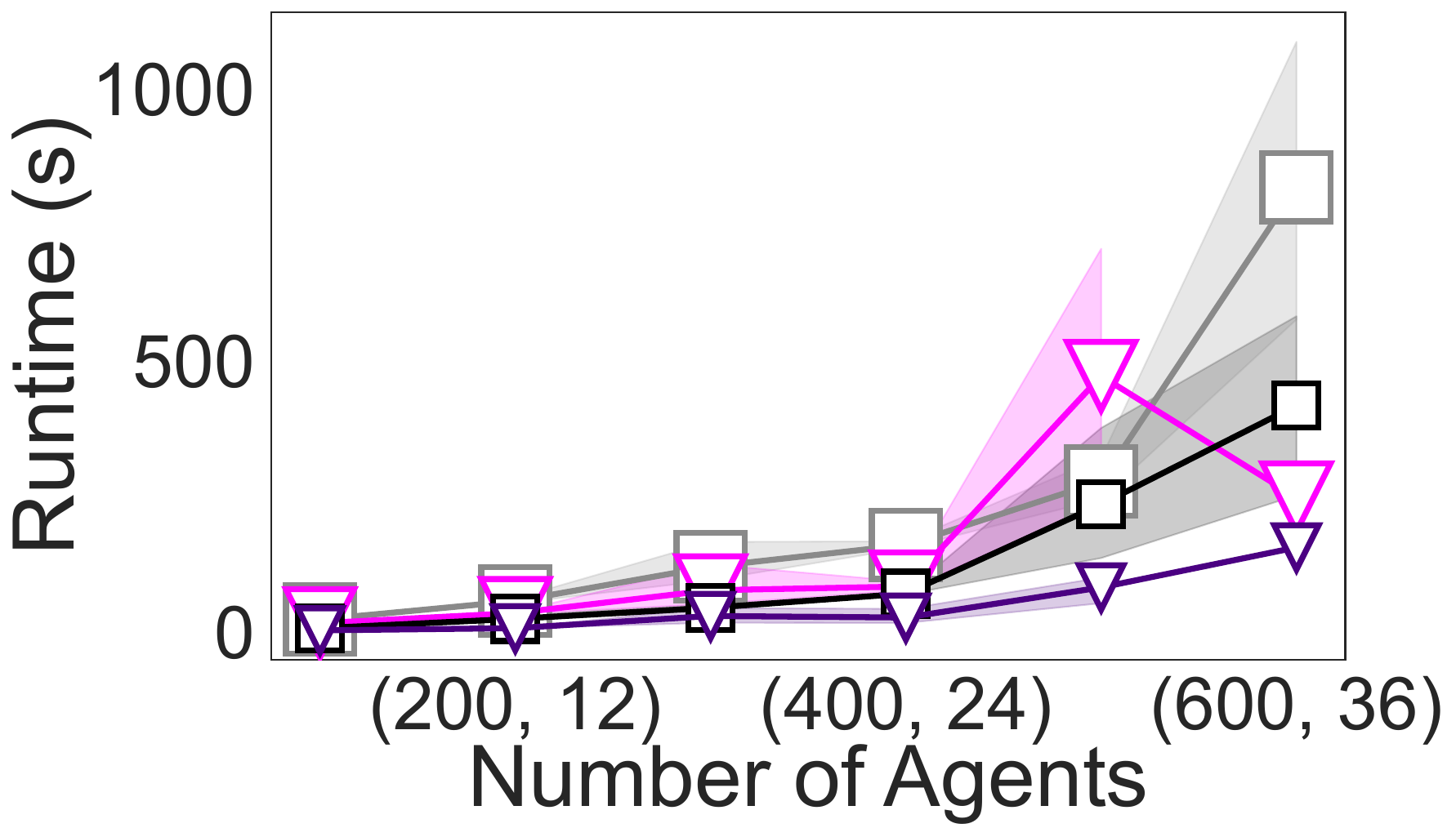}
    \end{subfigure}\hfill
    \begin{subfigure}[t]{0.24\textwidth}
        \centering
        \includegraphics[width=\linewidth]{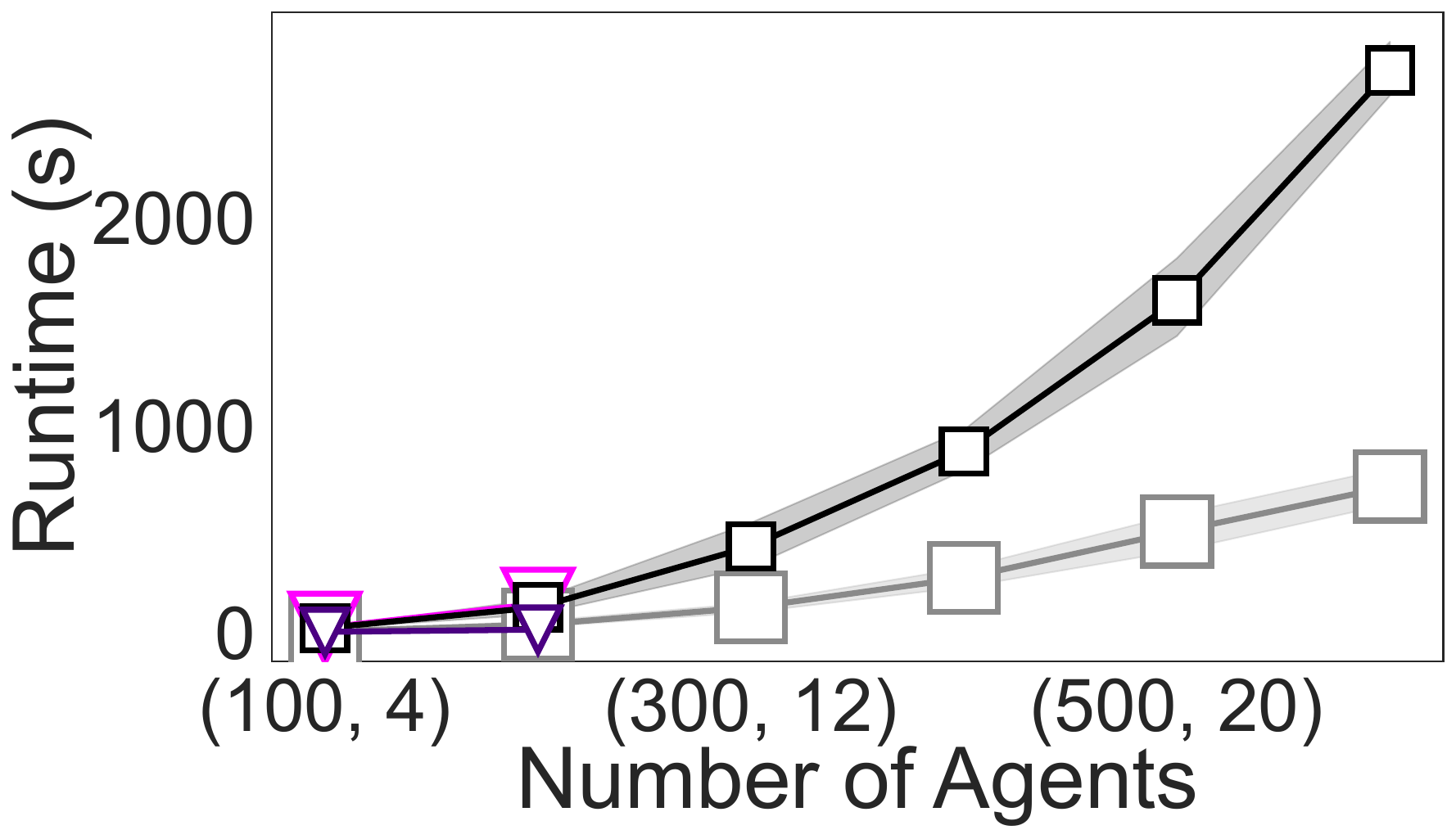}
    \end{subfigure}\hfill
    \begin{subfigure}[t]{0.24\textwidth}
        \centering
        \includegraphics[width=\linewidth]{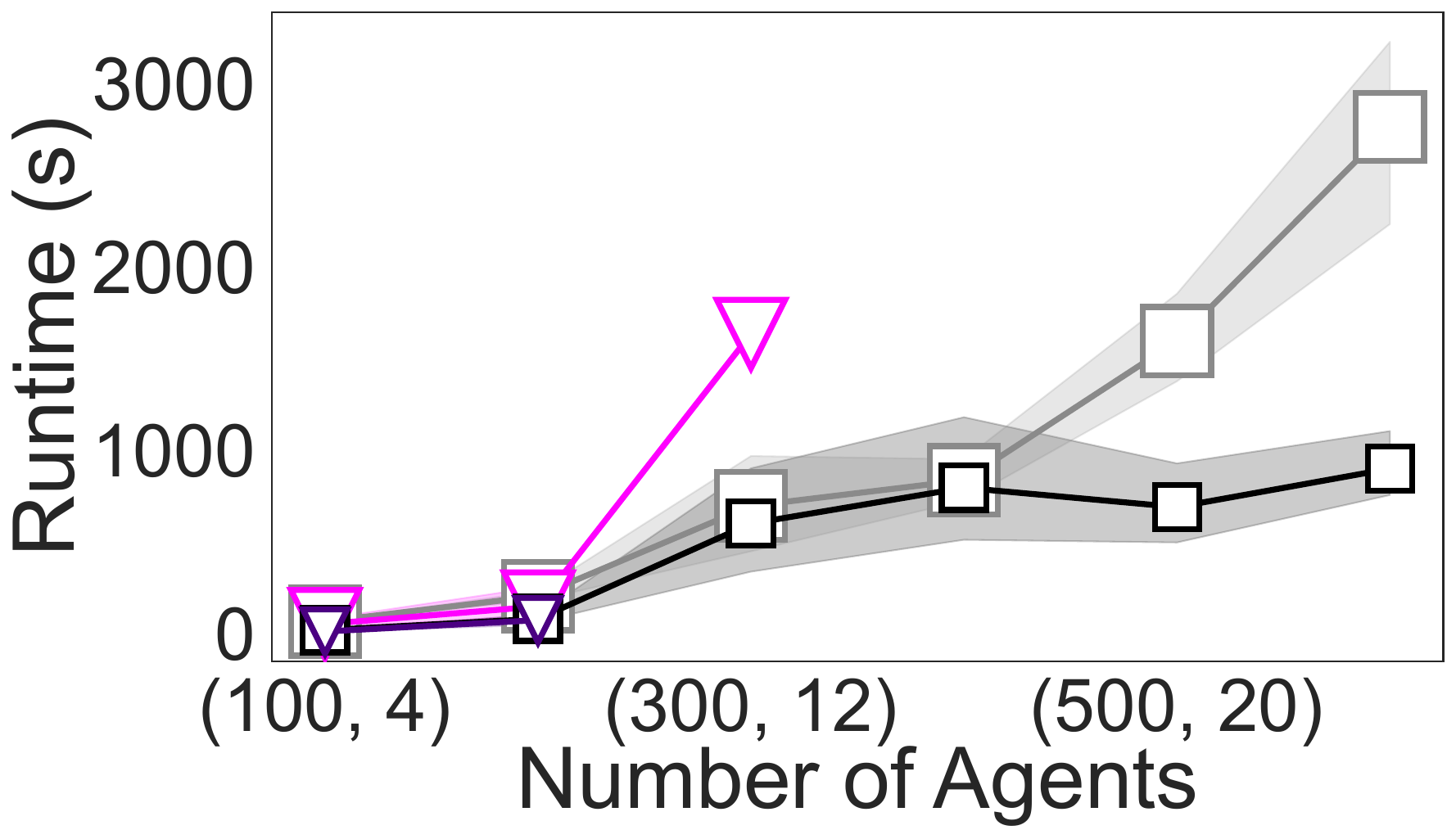}
    \end{subfigure}\hfill
    \begin{subfigure}[t]{0.24\textwidth}
        \centering
        \includegraphics[width=\linewidth]{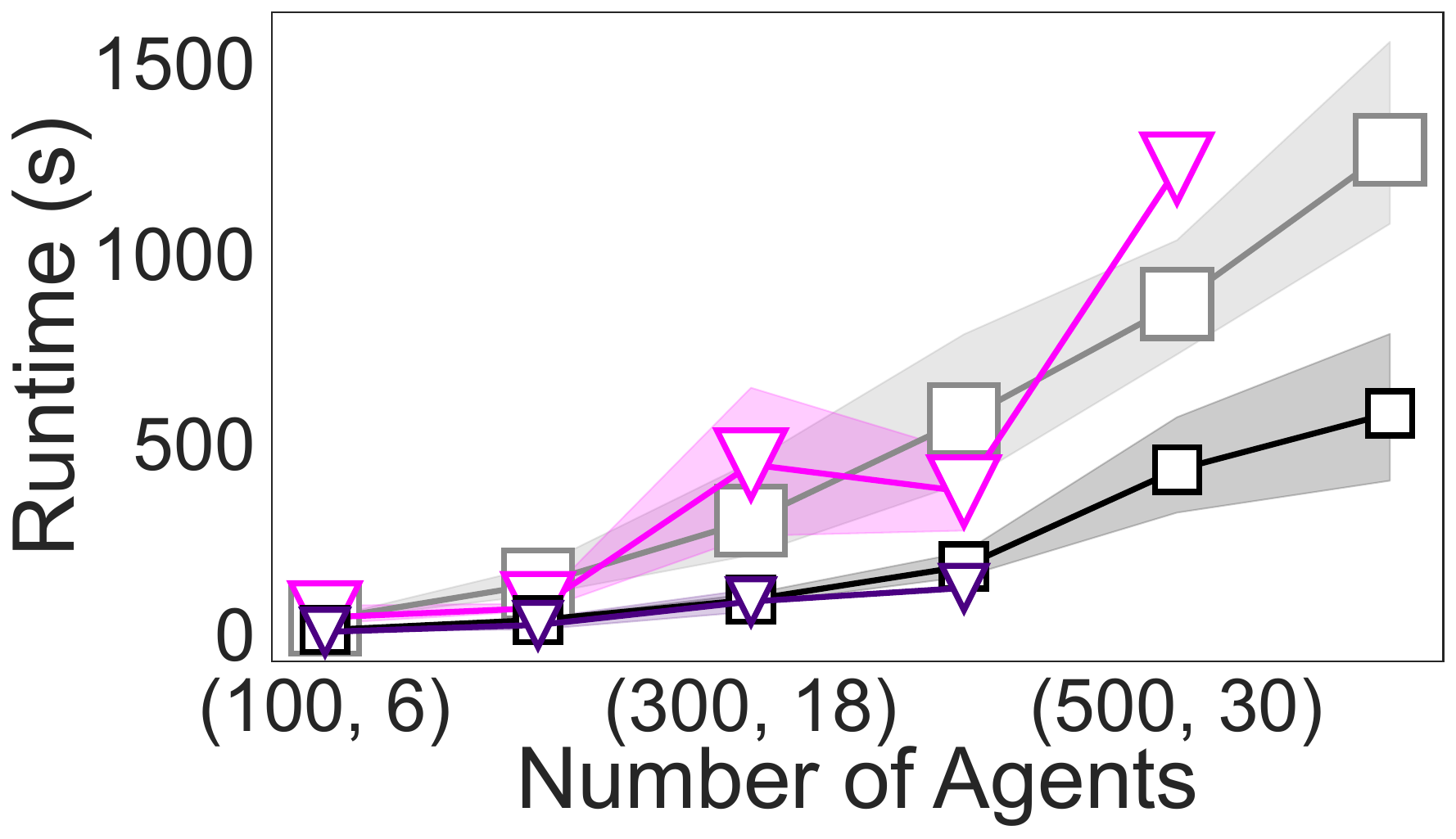}
    \end{subfigure}
    \par\vspace{-\abovecaptionskip}
    \captionsetup[subfigure]{skip=-0.6em}
    \subcaptionbox{\randomLargeLA\label{fig:oneshotmapf-large_agents-random_128_128}}
        {\phantom{\rule{0.24\textwidth}{0pt}}}\hfill
    \subcaptionbox{\emptyLargeLA\label{fig:oneshotmapf-large_agents-empty_64_64}}
        {\phantom{\rule{0.24\textwidth}{0pt}}}\hfill
    \subcaptionbox{\mazeLargeLA\label{fig:oneshotmapf-large_agents-maze_128_128}}
        {\phantom{\rule{0.24\textwidth}{0pt}}}\hfill
    \subcaptionbox{\roomLargeLA\label{fig:oneshotmapf-large_agents-room_128_128}}
        {\phantom{\rule{0.24\textwidth}{0pt}}}\hfill
    
    \caption{Success rate and runtime for different numbers of agents for one-shot MAPF with PMLA agents.}
    \label{fig:major-result-oneshotmapf-largeagents}
    \par\vspace{-\abovecaptionskip}
\end{figure*}

\begin{figure*}[!t]
    \centering
    \includegraphics[width=1\linewidth]{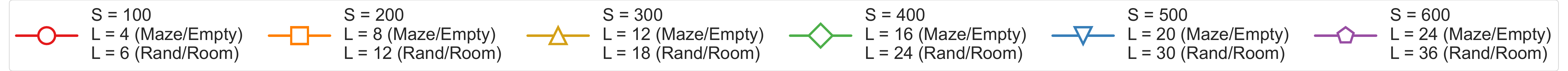}\par\medskip
    \vspace{-0.6em}
    \begin{subfigure}[t]{0.24\textwidth}
        \centering
        \includegraphics[width=\linewidth]{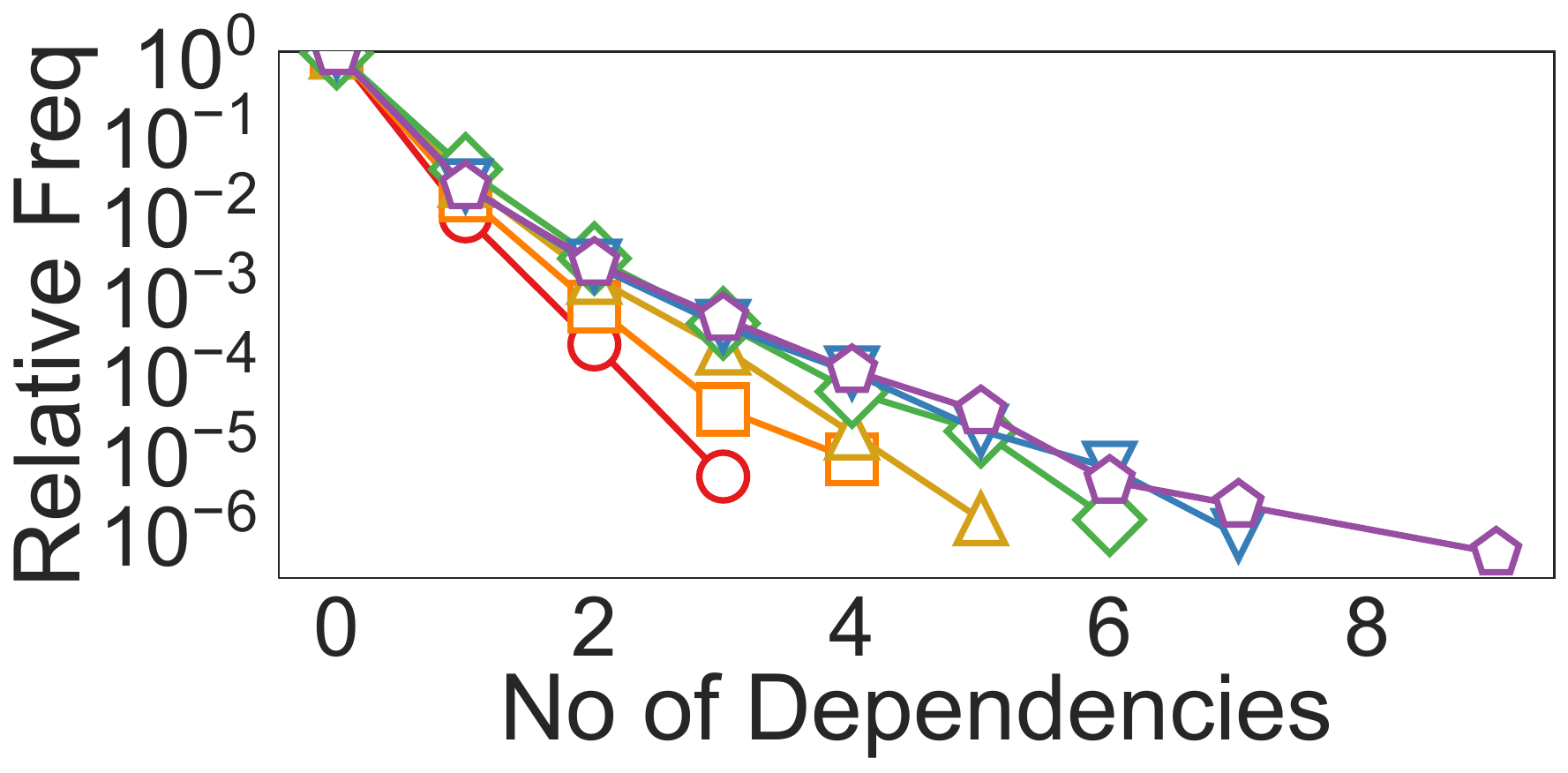}
        \vspace{-1.7em}
        \caption{\randomLargeLA\label{fig:oneshotmapf-large_agents-random_128_128-bump}}
    \end{subfigure}\hfill
    \begin{subfigure}[t]{0.24\textwidth}
        \centering
        \includegraphics[width=\linewidth]{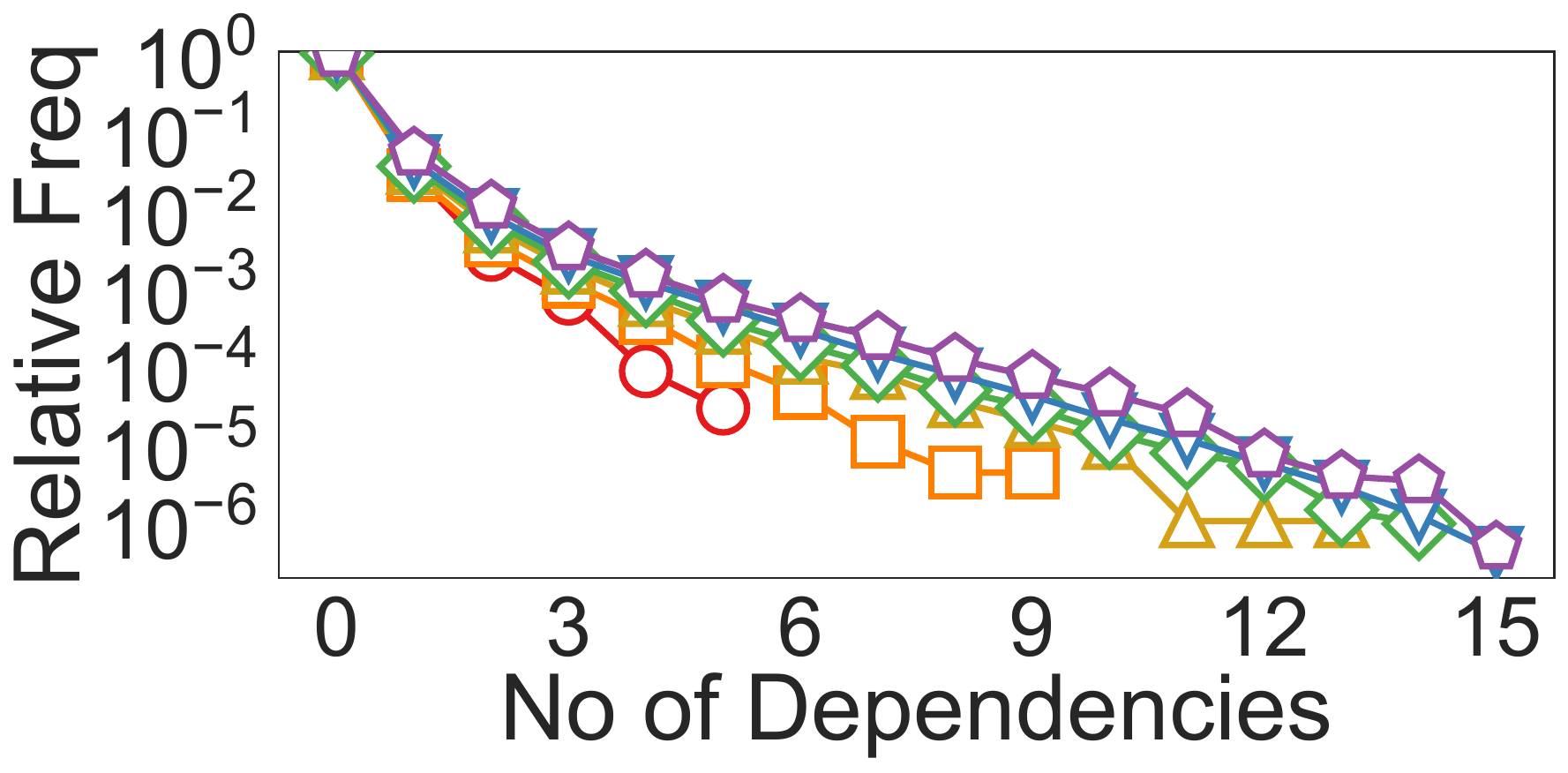}
        \vspace{-1.7em}
        \caption{\emptyLargeLA\label{fig:oneshotmapf-large_agents-empty_64_64-bump}}
    \end{subfigure}\hfill
    \begin{subfigure}[t]{0.24\textwidth}
        \centering
        \includegraphics[width=\linewidth]{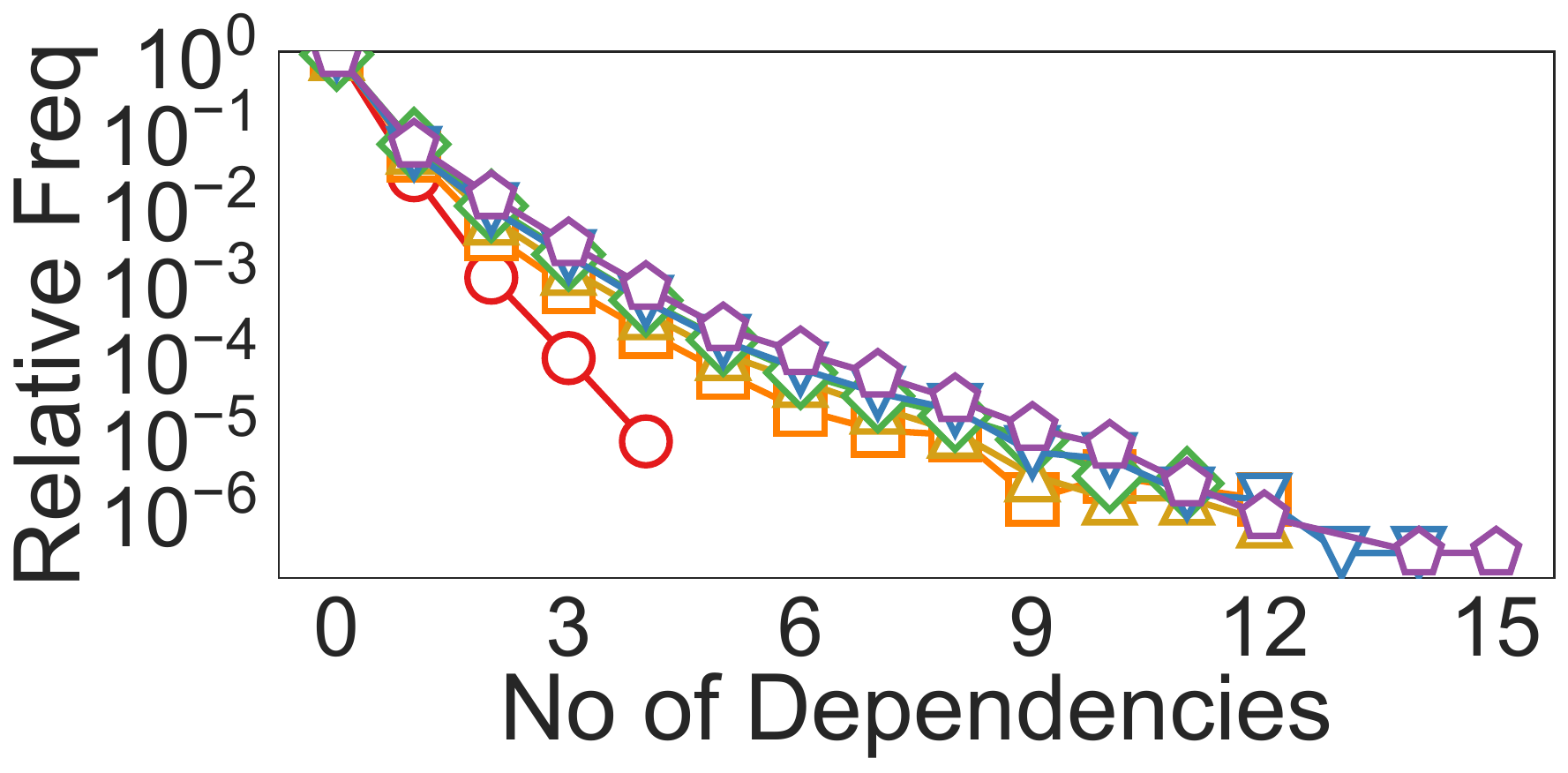}
        \vspace{-1.7em}
        \caption{\mazeLargeLA\label{fig:oneshotmapf-large_agents-maze_128_128-bump}}
    \end{subfigure}\hfill
    \begin{subfigure}[t]{0.24\textwidth}
        \centering
        \includegraphics[width=\linewidth]{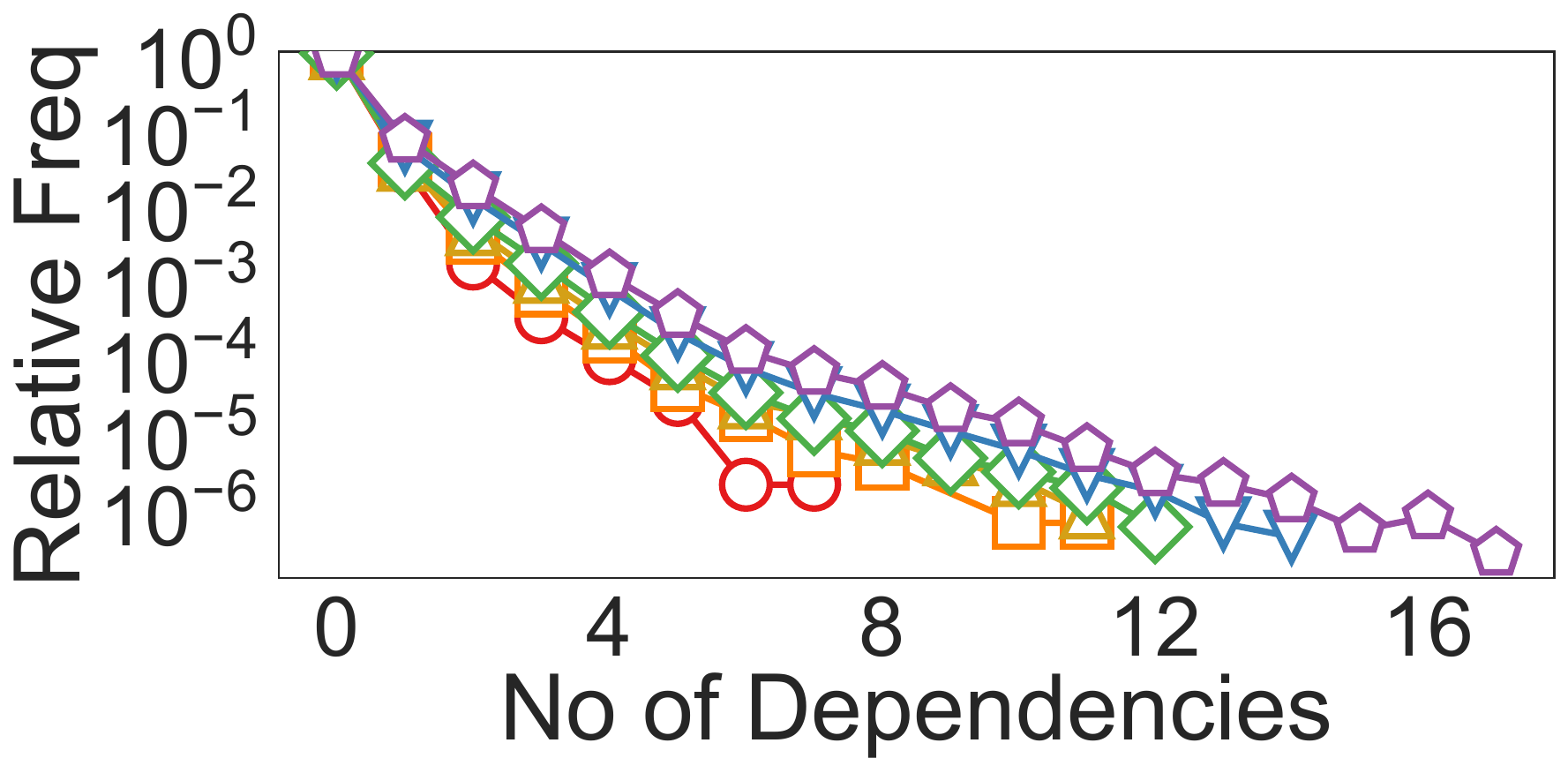}
        \vspace{-1.7em}
        \caption{\roomLargeLA\label{fig:oneshotmapf-large_agents-room_128_128-bump}}
    \end{subfigure}

    \caption{Distribution of the number of agent dependencies generated per agent per successful planning attempt, under different agent densities, with $w = 3$. $S$ and $L$ refers to the number of small and large agents, respectively.}
    \label{fig:major-result-oneshotmapf-largeagents-bump}
    \par\vspace{-\abovecaptionskip}
\end{figure*}

We compare MD-PIBT with PIBT~\cite{okumura2022priority} and EPIBT~\cite{yukhnevich2025epibt}. We do not compare with winPIBT~\cite{okumura2019winpibt}, as it has been shown to perform worse than PIBT or EPIBT.

\subsection{Experiment Setup}
\noindent \textbf{MAPF, Maps, and Agent Models.}
\Cref{tab:exp-setup} columns 1-4 summarize the MAPF problems, maps, and agent models used in the experiments.
We conduct experiments with both one-shot and lifelong MAPF and various agent models, including PM, PMLA, RM, and DDR, defined in \Cref{sec:background-mapf}. For maps, we select two maps from the MAPF benchmark~\cite{Stern2019benchmark} for PM, RM, and DDR, while creating maps by emulating patterns from the MAPF benchmark for PMLA. Each map is designed so that the largest agent can traverse the narrowest corridor. We show the agent models and maps in Appendix~\ref{appen:agent-model-map} and compute resources in Appendix~\ref{appen:compute}.

\begin{figure*}[!t]
    \centering
    \includegraphics[width=1\linewidth]{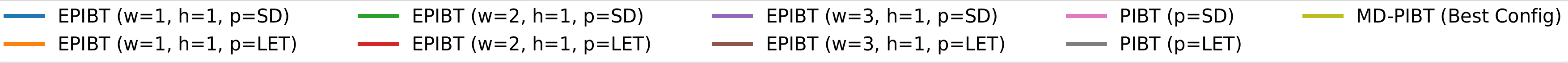}\par\medskip
    \vspace{-0.5em}
    \begin{subfigure}{0.49\textwidth}
        \centering
        \includegraphics[width=0.5\textwidth]{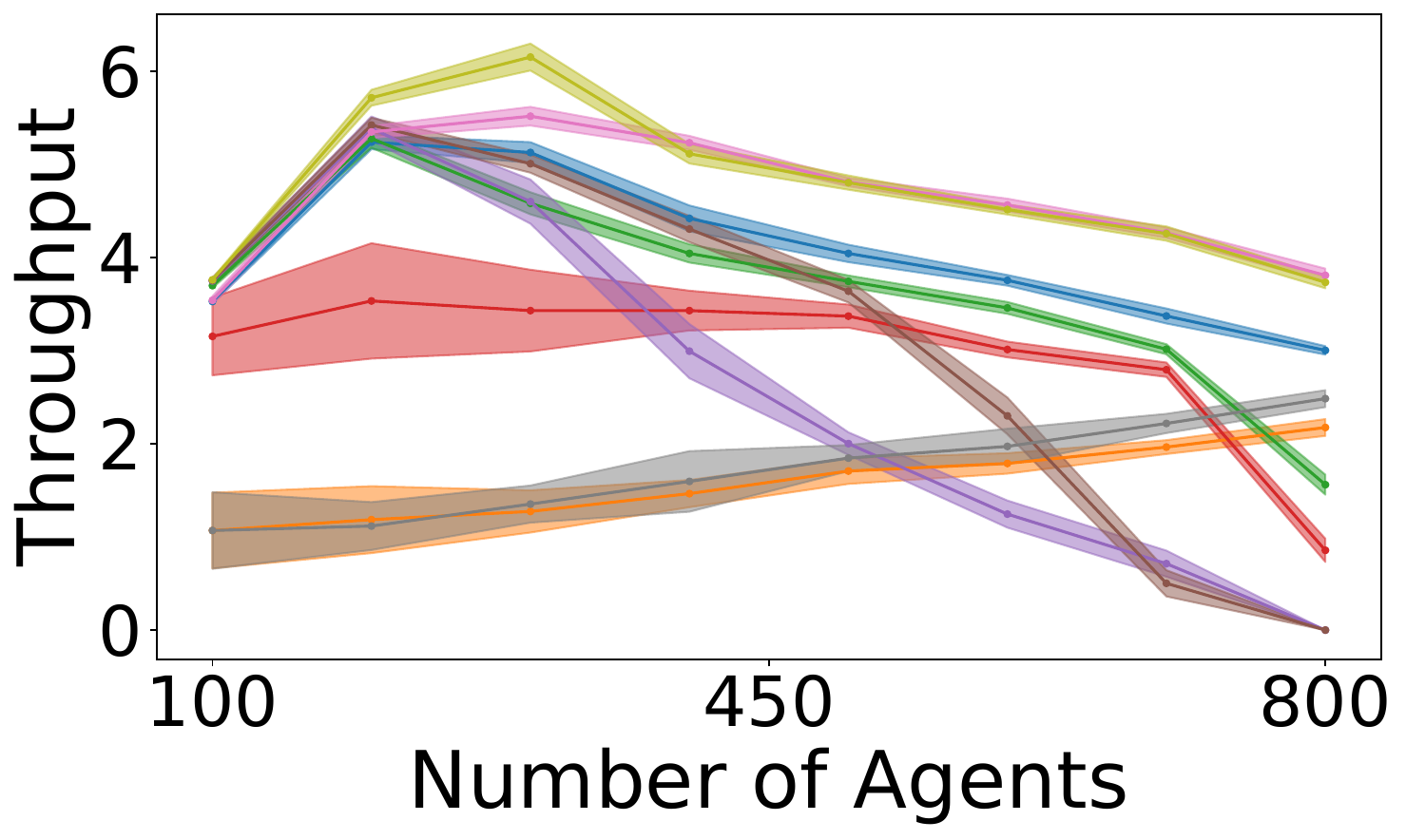}%
        \includegraphics[width=0.5\textwidth]{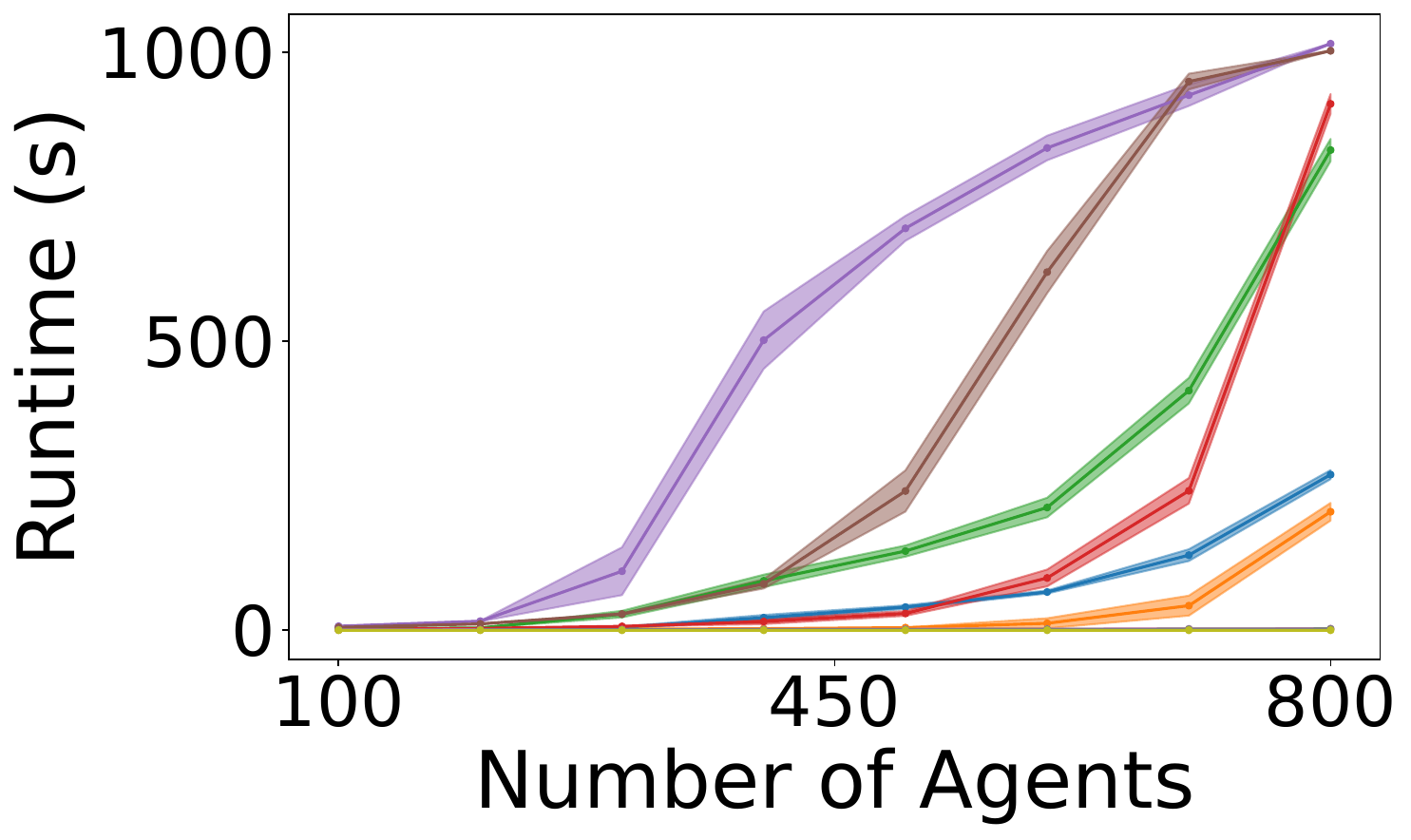}
        \vspace{-1.7em}
        \caption{\randomSmall}
        \label{fig:lmapf-pm-random-32-32-20}
    \end{subfigure}%
    \hfill
    \begin{subfigure}{0.49\textwidth}
        \centering
        \includegraphics[width=0.5\textwidth]{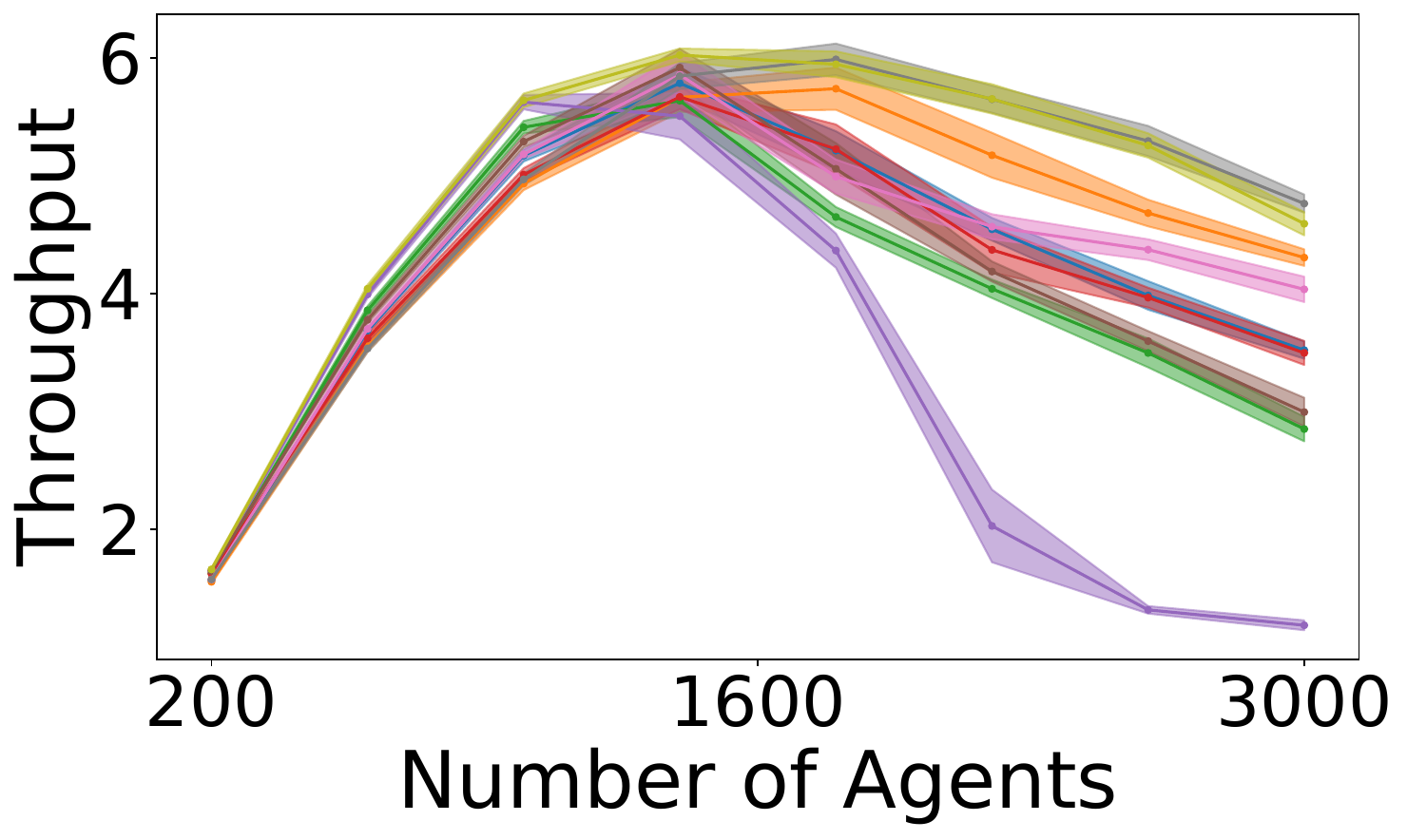}%
        \includegraphics[width=0.5\textwidth]{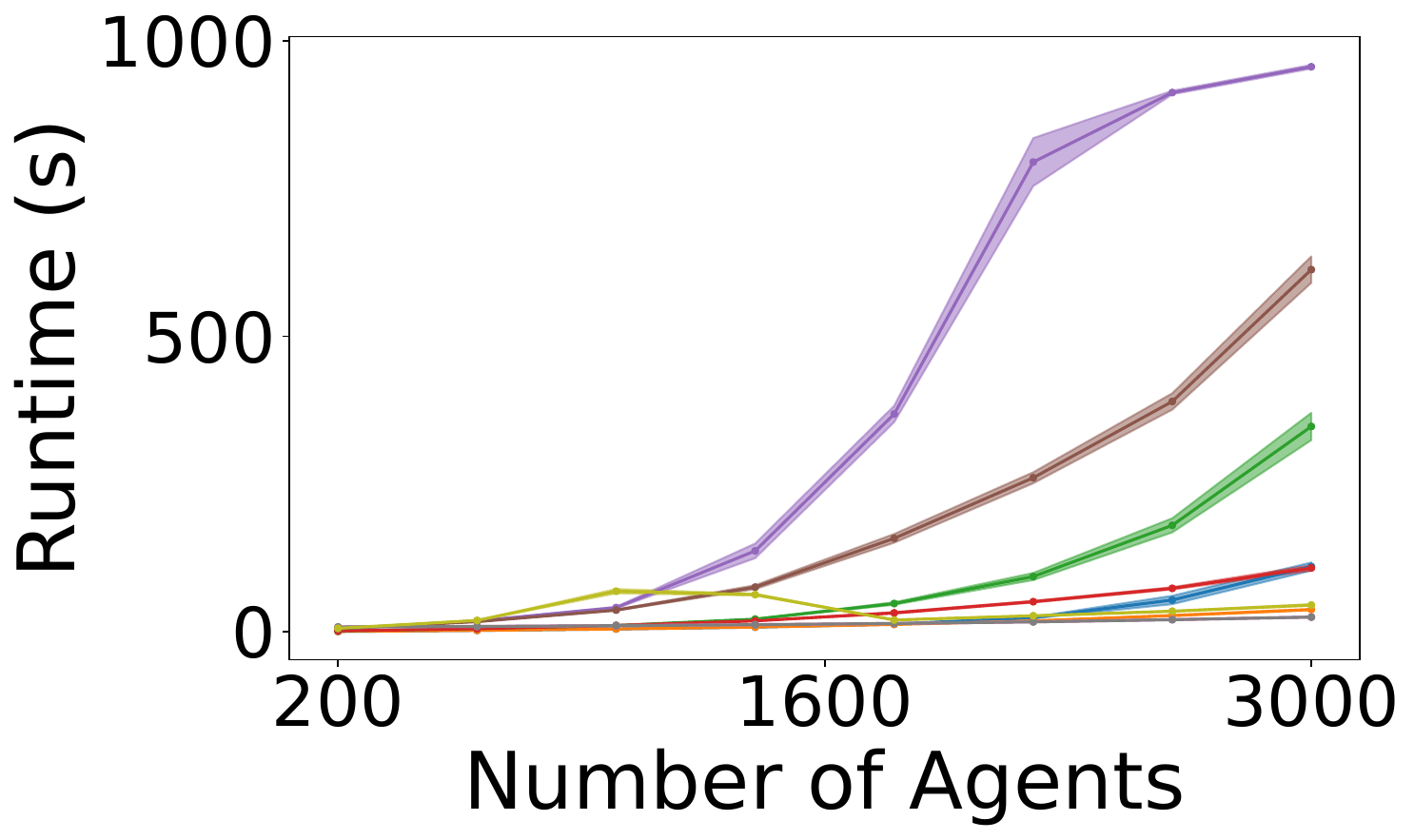}
        \vspace{-1.7em}
        \caption{\warehouseXlarge}
        \label{fig:lmapf-pm-warehouse-10-20-10-2-1}
    \end{subfigure}
    \hfill
    \caption{Throughput and runtime with different numbers of agents for LMAPF with PM agents. }
    \label{fig:major-result-lmapf-pm}
    \par\vspace{-\abovecaptionskip}
\end{figure*}

\begin{figure*}[!t]
    \centering
    \includegraphics[width=1\linewidth]{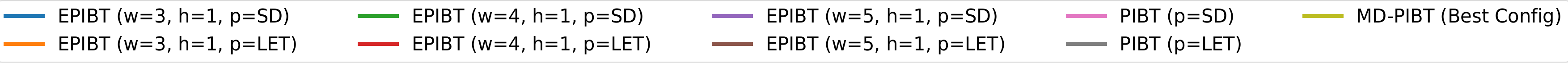}\par\medskip
    \vspace{-0.5em}
    \begin{subfigure}{0.49\textwidth}
        \centering
        \includegraphics[width=0.5\textwidth]{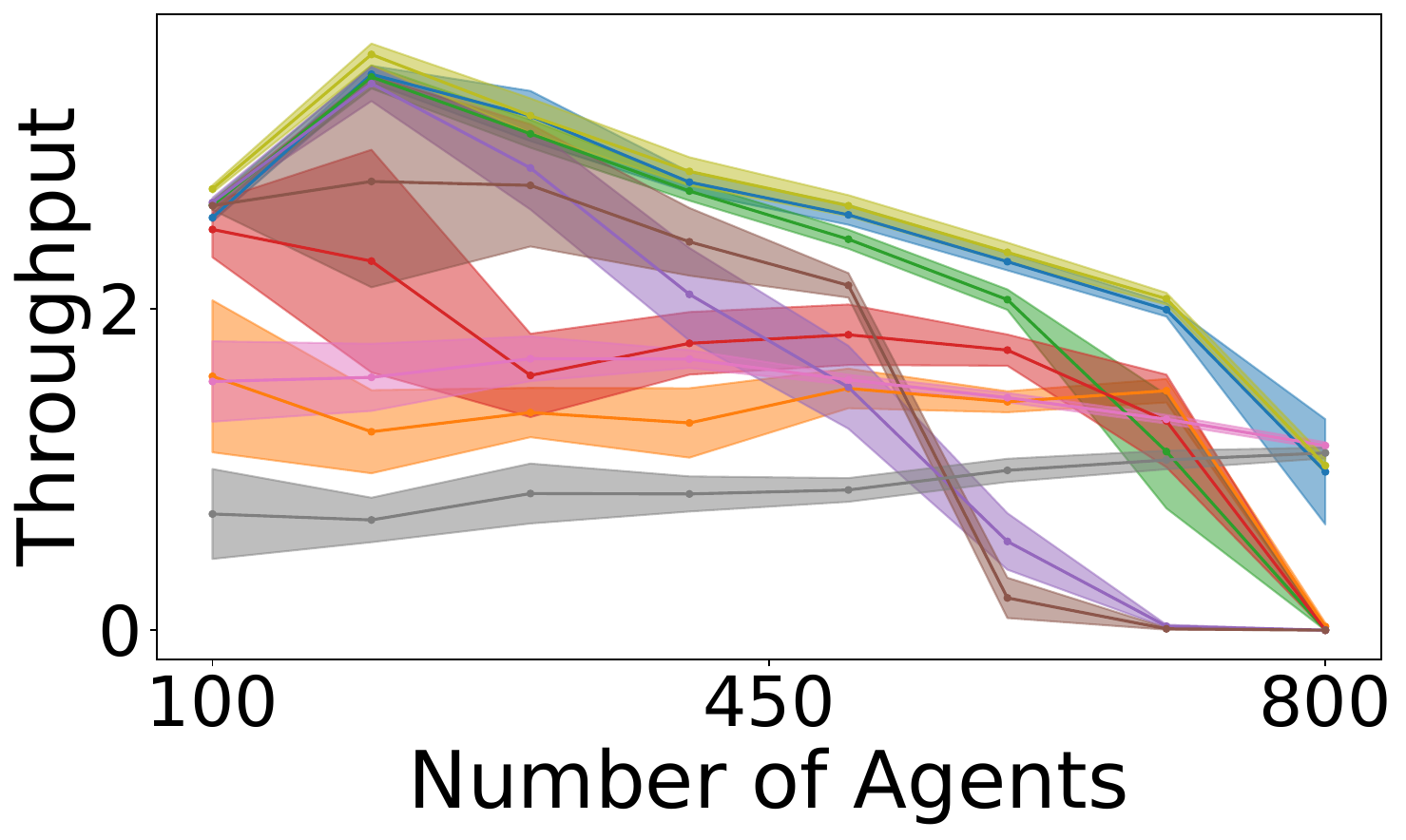}%
        \includegraphics[width=0.5\textwidth]{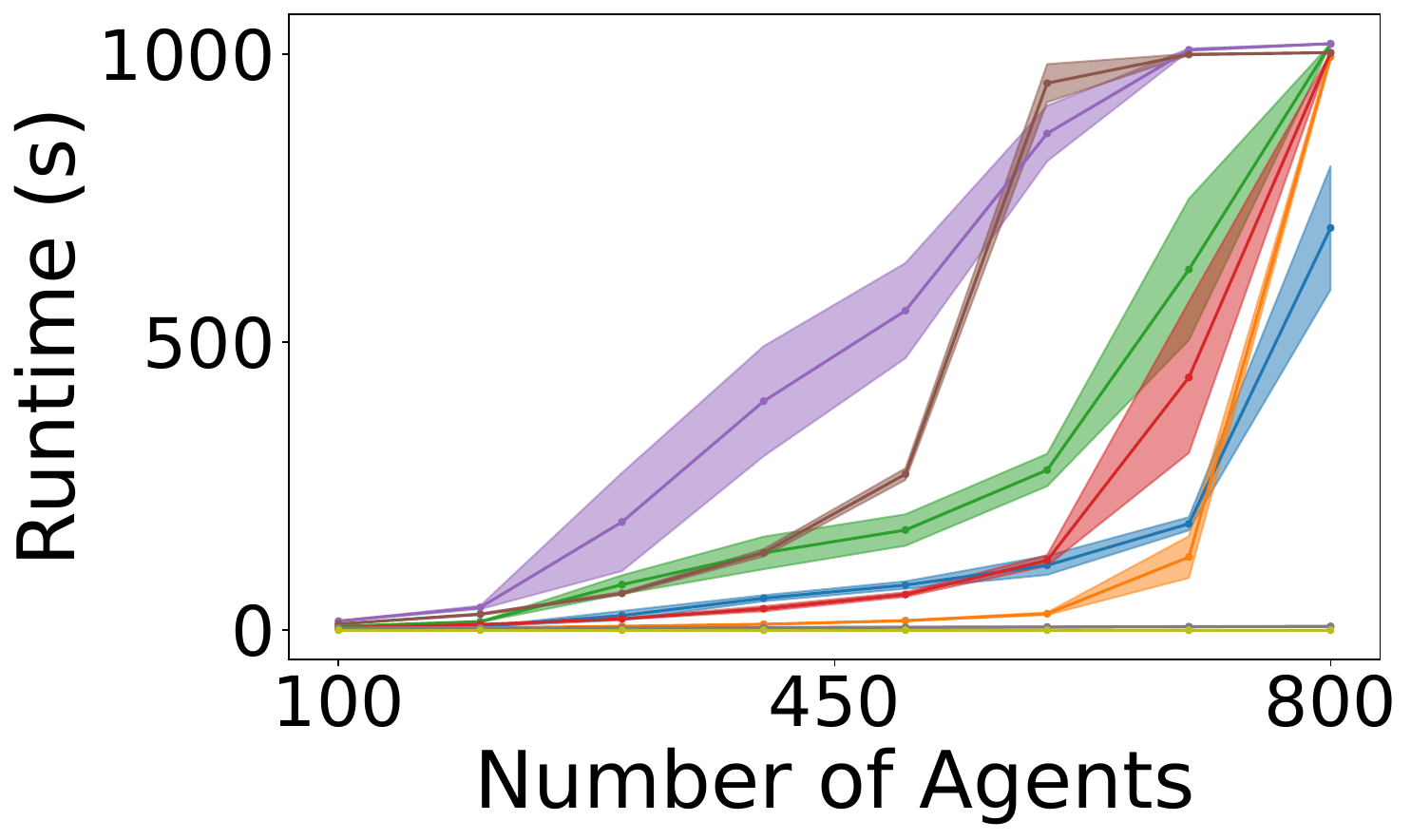}
        \vspace{-1.7em}
        \caption{\randomSmall}
        \label{fig:lmapf-rot-random-32-32-20}
    \end{subfigure}%
    \hfill
    \begin{subfigure}{0.49\textwidth}
        \centering
        \includegraphics[width=0.5\textwidth]{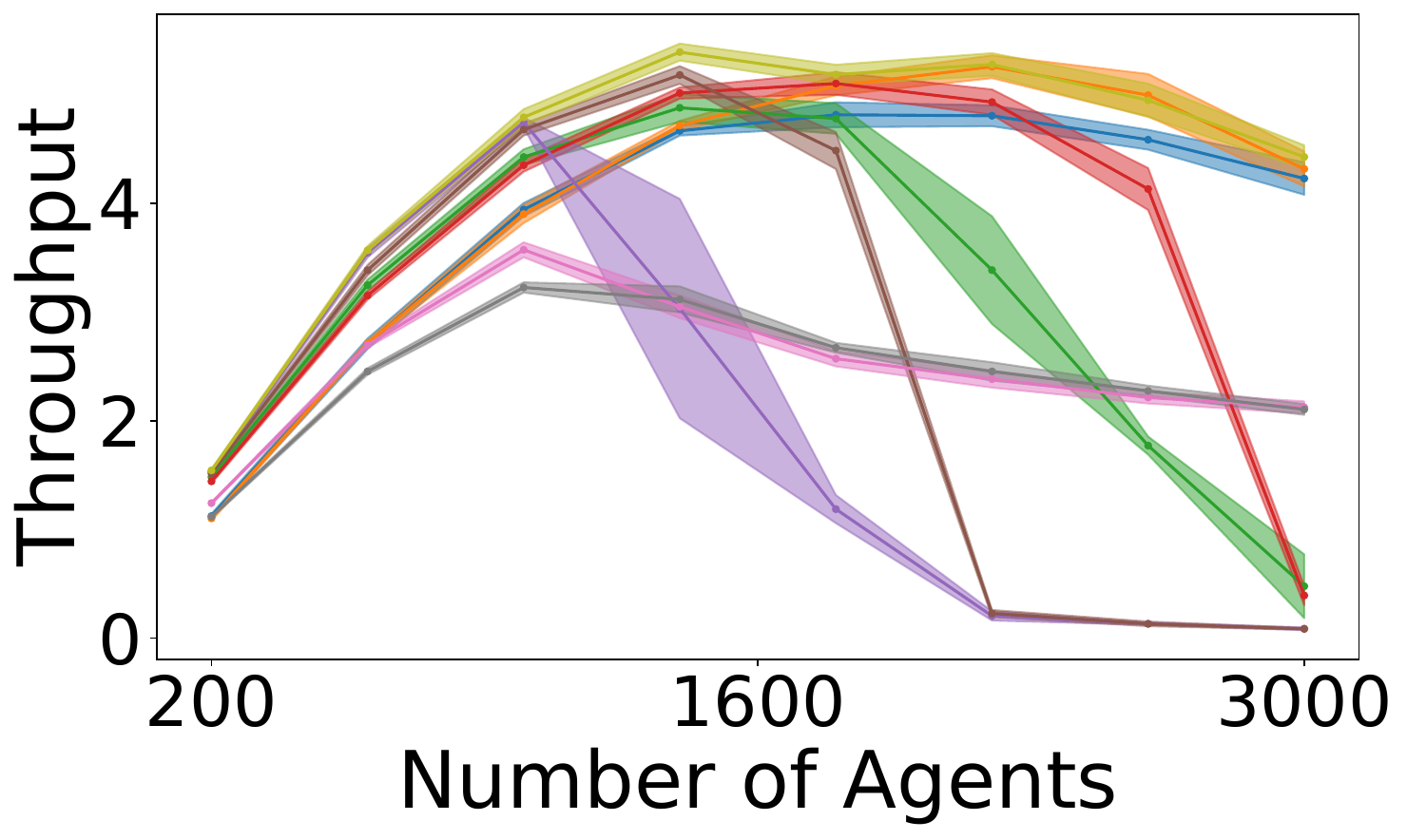}%
        \includegraphics[width=0.5\textwidth]{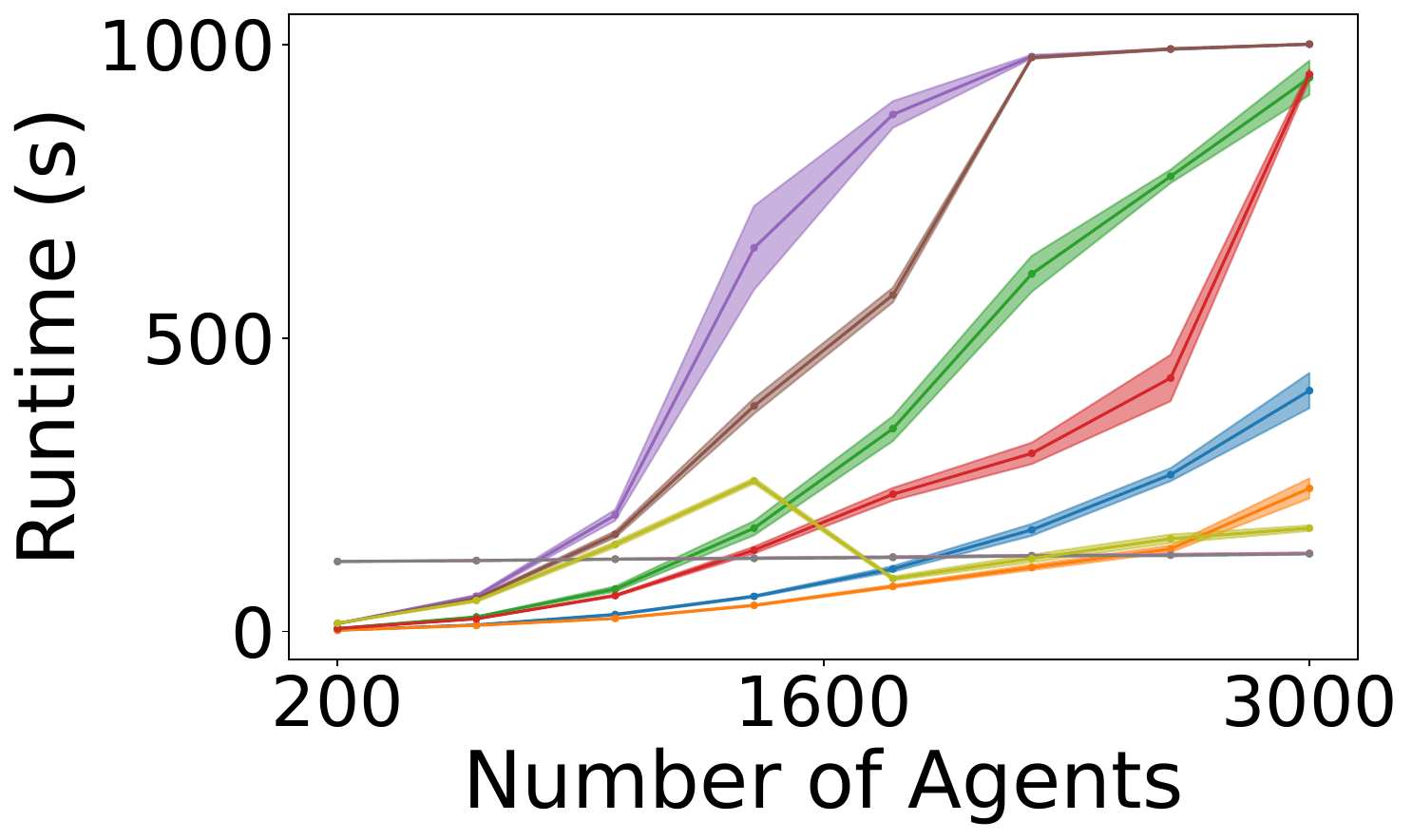}
        \vspace{-1.7em}
        \caption{\warehouseXlarge}
        \label{fig:lmapf-rot-warehouse-10-20-10-2-1}
    \end{subfigure}
    \hfill
    \caption{Throughput and runtime with different numbers of agents for LMAPF with RM agents. }
    \label{fig:major-result-lmapf-rot}
    \par\vspace{-\abovecaptionskip}
\end{figure*}

\begin{figure*}[!t]
    \centering
    \includegraphics[width=1\linewidth]{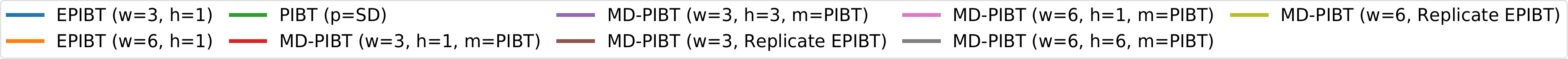}\par\medskip
    \vspace{-0.5em}
    \begin{subfigure}{0.49\textwidth}
        \centering
        \includegraphics[width=0.5\textwidth]{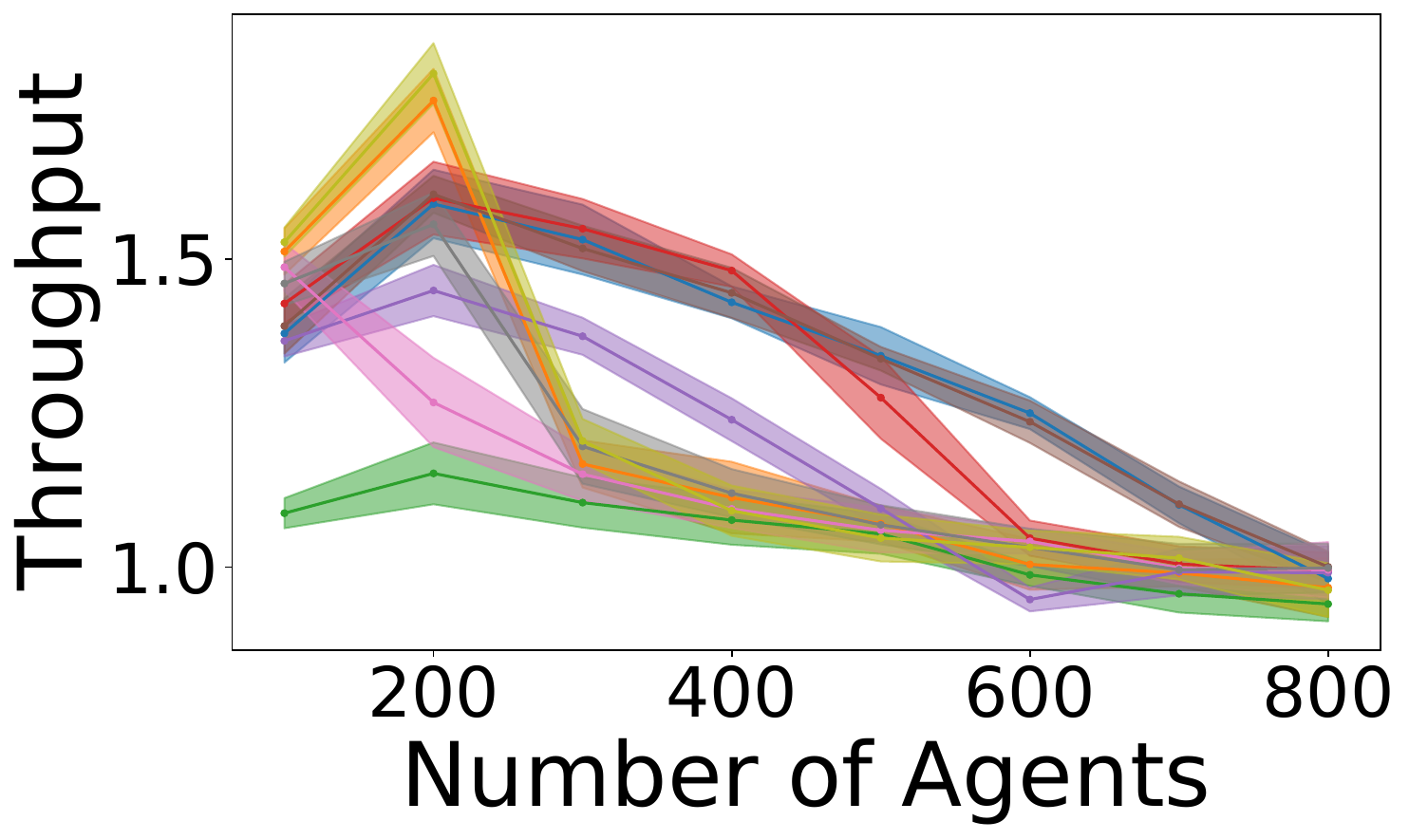}%
        \includegraphics[width=0.5\textwidth]{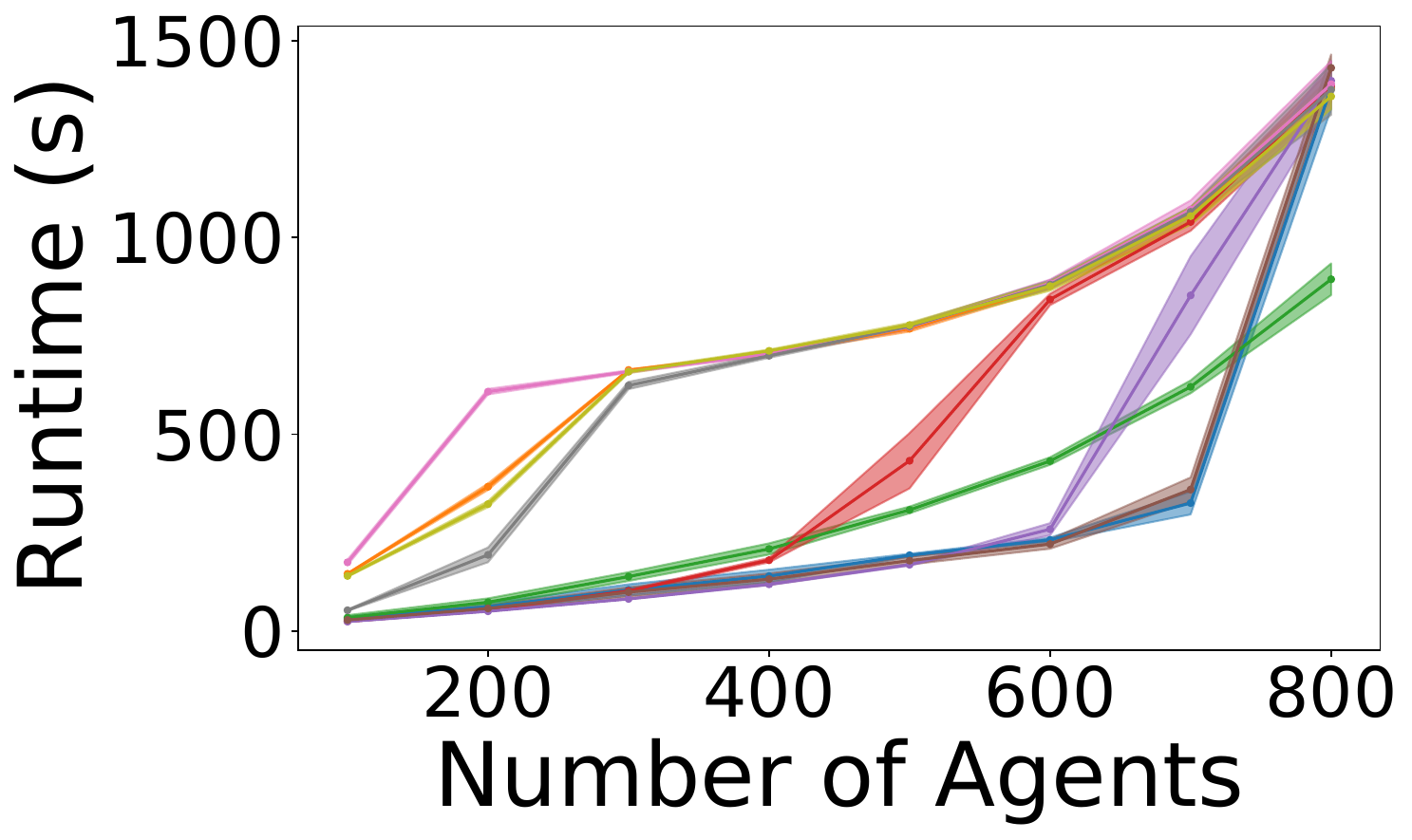}
        \vspace{-1.7em}
        \caption{\randomSmall}
        \label{fig:lsmart-random-32-32-20}
    \end{subfigure}%
    \hfill
    \begin{subfigure}{0.49\textwidth}
        \centering
        \includegraphics[width=0.5\textwidth]{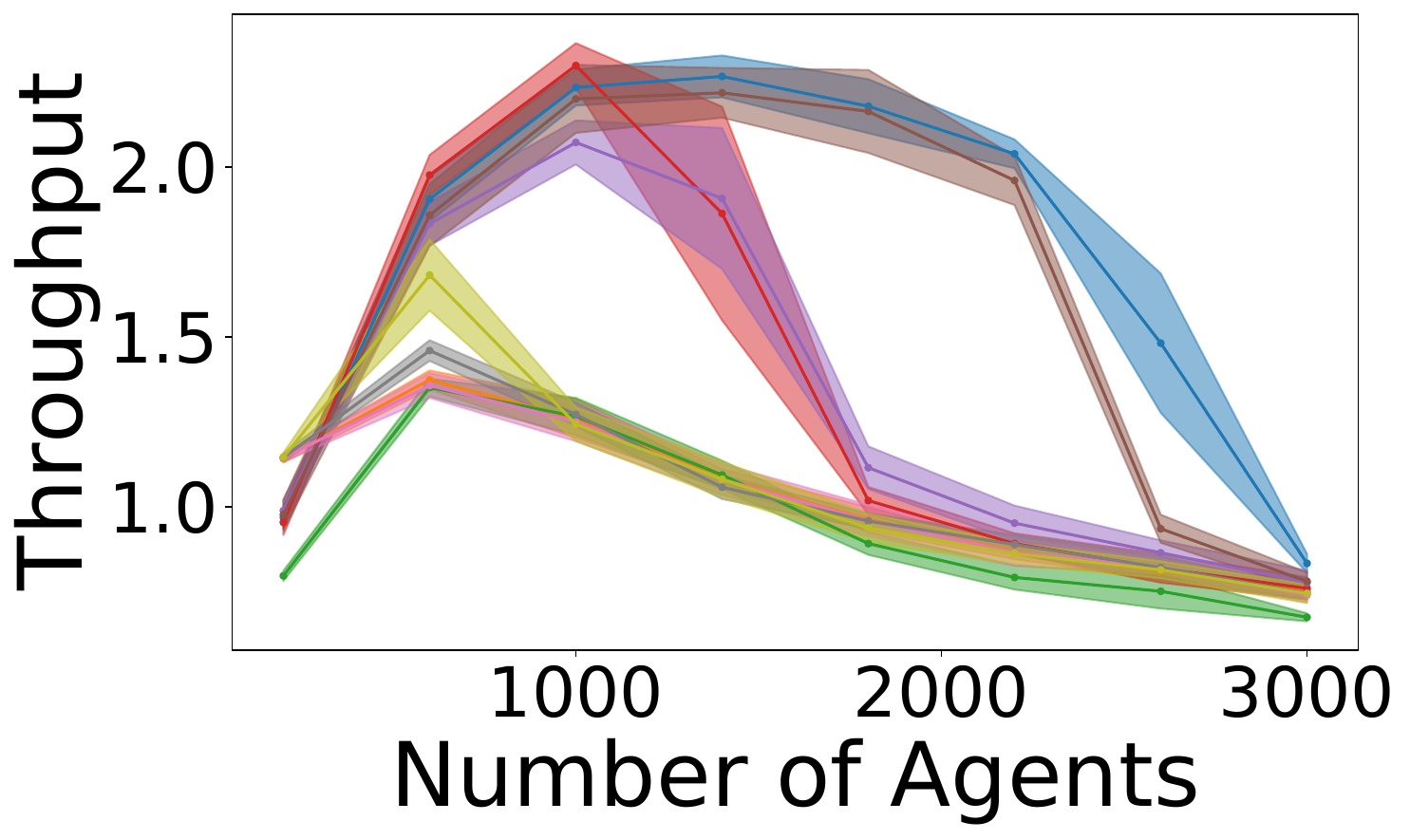}%
        \includegraphics[width=0.5\textwidth]{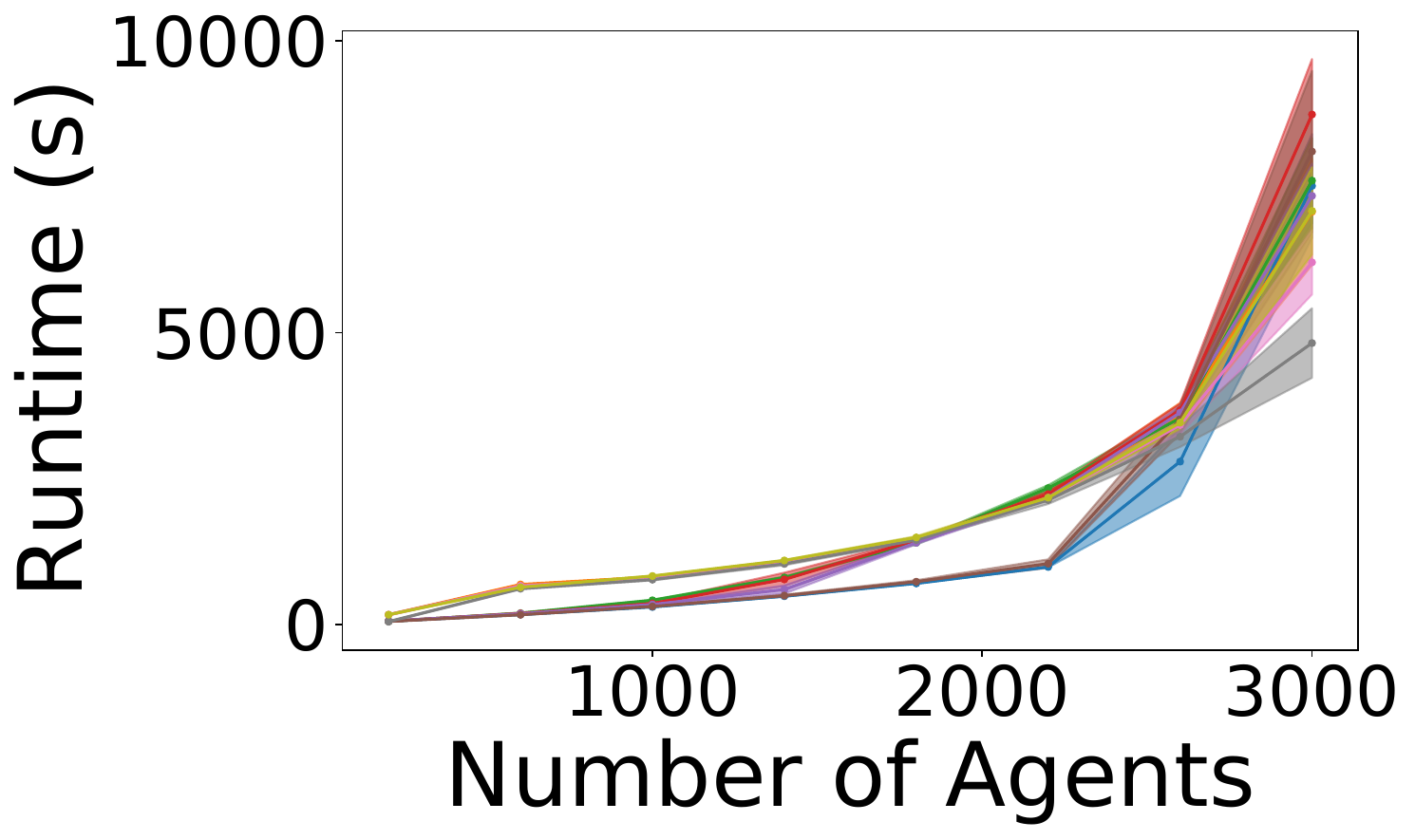}
        \vspace{-1.7em}
        \caption{\warehouseXlarge}
        \label{fig:lsmart-warehouse-10-20-10-2-1}
    \end{subfigure}
    \hfill
    \caption{Throughput and runtime with different numbers of agents for LMAPF with DDR agents. }
    \label{fig:major-result-lsmart}
    \par\vspace{-\abovecaptionskip}
    \vspace{-0.5em}
\end{figure*}

\noindent \textbf{MD-PIBT Hyper-Parameters.} 
We perform an exhaustive hyper-parameter search on MD-PIBT; \Cref{tab:exp-setup} columns 5-10 summarize the hyper-parameters. A prior work~\cite{Zhang2026mggo} shows that priority ordering in PIBT is important for resolving deadlocks in lifelong MAPF, so in lifelong experiments we also vary it: (1) Longer Elapsed Time (LET)~\cite{okumura2022priority}, sorting agents by elapsed timesteps since reaching the last goal, and (2) Shorter Distance (SD) to the goal~\cite{yukhnevich2025epibt}, sorting by distance to the next goal. For one-shot experiments, we always use LET because SD might let agents block others by staying at their goals.

All algorithms plan a $w$-step path and execute the first $h$ steps ($w \geq h \geq 1$)~\cite{yukhnevich2025epibt,li2021lifelong}. For EPIBT and MD-PIBT, if $w > h$, we initialize the safe paths to be the unexecuted paths of the previous run. For example, at timestep $t$, the algorithms plan collision-free paths $a_{1:N}.\pi^{t:t+w}$ and execute $a_{1:N}.\pi^{t:t+h}$. Then the remaining paths (i.e., $a_{1:N}.\pi^{t+h:t+w}$) plus $h$ wait actions is the initial safe path at timestep $t+h$. For PIBT, we always set $w=h=1$.

\subsection{One-Shot MAPF} \label{sec:exp-oneshot}
For one-shot experiments, we report the success rate (ratio of agents reach their goals) and runtime. We set a time limit of 1,800 seconds for PM and 3,600 seconds for PMLA. For each number of agents, we run 25 instances and plot the average (solid lines) and the 95\% confidence interval (shaded areas). We conduct all one-shot experiments in Python.

\subsubsection{Pebble Motion Agents}
In \Cref{fig:major-result-oneshotmapf-smallagents}, we show (1) a line that, for each number of agents, demonstrates the MD-PIBT variant with the highest success rate, (2) the configuration that replicates EPIBT in MD-PIBT, and (3) the baseline PIBT and EPIBT. The result shows that (1) the best MD-PIBT configuration always achieves a better or similar success rate with EPIBT, and (2) specific configurations of MD-PIBT are able to replicate EPIBT, despite longer runtimes due to the overhead of manipulating the AgDG. In general, $w=3$ or $4$ with $h = 1$ and $m =$ EPIBT perform better.

\subsubsection{Pebble Motion with Large Agents}

MD-PIBT allows an agent to bump into multiple agents. For homogeneous agents, multiple bumps occur only when $w > 1$. With PMLA, larger agents may simultaneously bump into multiple smaller agents in a single step.
In \emptyLargeLA and \mazeLargeLA, agents have sizes $1 \times 1$ and $5 \times 5$, while in \randomLargeLA and \roomLargeLA, agent sizes are $1 \times 1$ and $4 \times 4$.

\Cref{fig:major-result-oneshotmapf-largeagents} shows the result. MD-PIBT with $C>1$ significantly outperforms EPIBT in terms of success rate. Because EPIBT is limited to $C = 1$, large agents can be trapped by livelocks or deadlocks when surrounded by multiple small agents. However, when $C > 1$, large agents can push away the surrounding small agents. In addition, increasing $w$ also increases the success rate. 

\Cref{fig:major-result-oneshotmapf-largeagents-bump} shows the average number of dependencies in AgDG that each agent has in a successful run. Each line represents a scenario with different numbers of small and large agents. As the scenarios become more congested, the number of dependencies increases. Thus, it is crucial to be able to resolve multi-dependencies in scenarios with different agent sizes, especially in dense scenarios. 

\subsection{Lifelong MAPF}

For lifelong experiments with PM and RM, we set a time limit of 1 second for each timestep and run 1000 timesteps for each simulation.
In case of timeout, we run a rule-based backup planner with negligible runtime~\cite{li2021lifelong}. For DDRs, we use LSMART~\cite{YanAndZhang2026LSMART}, a tool to evaluate MAPF algorithms with DDRs in continuous time. We use planners to plan paths every 1 second and execute paths using ADG~\cite{honig2019warehouse}. Due to limits in compute, we do not perform a hyper-parameter search for DDR but rather plotted representative configurations of MD-PIBT. We report throughput and runtime. For each number of agents, we run 10 simulations and plot the average as solid lines and the 95\% confidence interval as shaded areas. We conduct all lifelong experiments in C++.

\Cref{fig:major-result-lmapf-pm,fig:major-result-lmapf-rot} shows the results of lifelong MAPF with PM and RM. In \randomSmall, with PM agents, planning with a longer window of $w=3$ is helpful to achieve the maximum throughput with 300 agents. However, with more agents, PIBT with SD priority surprisingly outperforms other methods, followed by the MD-PIBT replication of it.
Similarly, in \warehouseXlarge, PIBT with LET priority achieves the best performance.
With RM agents, EPIBT remains the best, followed by the MD-PIBT replication of it.
\Cref{fig:major-result-lsmart} shows the results of lifelong MAPF with DDR. The trend is similar to RM.

\subsection{Additional Experiments}
We show additional experiment results in Appendix~\ref{appen:add-result-result}, including 
(1) additional one-shot and lifelong MAPF results with PM, RM and PMLA agents, and
(2) a summary of the best MD-PIBT hyper-parameterizations in all settings.

\section{Conclusion}
We propose MD-PIBT as a general framework that searches over \emph{agent dependencies}. MD-PIBT removes the restrictions of PIBT/EPIBT, which can only reason about paths that collide with at most one other agent. We show that MD-PIBT performs similarly to PIBT/EPIBT in one-shot and lifelong MAPF settings while significantly outperforming them in MAPF with large agents.

There are many future works.
First, MD-PIBT could be used as a collision-shield for future learned multi-step MAPF policies~\cite{veerapaneni2024improving_mapf_policies_with_search}. 
Second, MD-PIBT is agnostic to the low-level search, similar to CBS~\cite{sharon2015conflict}. Thus, MD-PIBT could be developed as a future multi-robot protocol~\cite{veerapaneni2025cbsprotocol}.



\section*{Acknowledgments}
This work was partially supported by the National Science Foundation (NSF) under grant numbers \#$2328671$ and \#$2441629$. 
This work used Bridge-$2$ at Pittsburgh Supercomputing Center (PSC)~\cite{PSCBridgeTwo2021} through allocation CIS$220115$ from the Advanced Cyberinfrastructure Coordination Ecosystem: Services \& Support (ACCESS) program, which is supported by NSF under grant numbers \#$2138259$, \#$2138286$, \#$2138307$, \#$2137603$, and \#$2138296$.

\bibliographystyle{IEEEtran}
\bibliography{reference}

\clearpage


\section*{Appendix}
\setcounter{figure}{0}
\renewcommand{\thefigure}{A\arabic{figure}}
\setcounter{table}{0}
\renewcommand{\thetable}{A\arabic{table}}
\setcounter{section}{0}
\renewcommand{\thesection}{A\arabic{section}}

\section{Additional Algorithms} \label{appen:add-algo}

\begin{algorithm}[!t]
\caption{Path-Related Helper Functions}\label{alg:path-helper}\LinesNumbered
\SetKwProg{Fn}{Function}{:}{}
\SetKwFunction{getCollidingAgents}{getCollidingAgents}


\Fn{\parameter{\findNextBestPath}($a_k, A_{plan} \parameter{~|~m, C}$)} {

    $A_{curr} \gets \{a_i \in A | a_i.{\pi} \neq \perp \}$\\
    $\Lambda \gets \bigcup\limits_{a_i \in A \setminus A_{curr}} a_i.\tau \cup \bigcup\limits_{a_i\in A_{curr}} a_i.{\pi} $\\
    \While{$a_k.d < |a_k.\mathcal{P}|$}{
        $tp \gets a_k.\mathcal{P}[a_k.d]$\\
        $a_k.d \gets a_k.d + 1$\\
        $A_{collide} \gets$ \getCollidingAgents{$tp$, $\Lambda$}\\
        \If{$|A_{collide}| > C$}{
            \textbf{continue}
        }
        \uIf{$m = \text{PIBT}$ \textbf{and} $A_{collide} \cap A_{plan} = \emptyset$}{
            \Return $tp$\\
        }
        \ElseIf{$m = \text{EPIBT}$ \textbf{and} $A_{collide} \cap A_{curr} = \emptyset$ \textbf{and} $\forall a_c\in A_{collide},a_c.p > p, a_c.r < R$}{
            \Return $tp$\\
        }
    }
    \Return $\perp$
}

\end{algorithm}

\Cref{alg:path-helper} shows the pseudocode of our implementation of the \texttt{findBestPath} function, used in \Cref{alg:md-pibt-main}. Depending on the find path mode $m$, the function finds a valid path for agent $a_k$ that does not collide with a different set of agents. 

We start by finding all agents $A_{curr}$ that currently have a tentative path (line 2). We then form a reservation table $\Lambda$ that consists of \emph{the most recent path} of each agent (line 3). 
Specifically, if an agent $a_i \in A_{curr}$, we add its tentative path $a_i.\pi$ to $\Lambda$, otherwise, we add its safe path $a_i.\tau$ to $\Lambda$. 
We then iterate through the candidate paths $a_k.\mathcal{P}$ starting from the current path index $a_k.d$ (lines 4-5). After incrementing the path index (line 6), we obtain the set of agents $A_{collide}$ whose most recent path collides with the current candidate path $tp$ (line 7). Depending on the find path mode $m$ and the maximum number of colliding agents $C$, we judge whether $tp$ is valid. 
For all $m$, if $tp$ collides with more than $C$ agents, it is not valid, and we proceed to the next candidate path (lines 8-9).
If $m=\text{PIBT}$, then $tp$ should not collide with the most recent paths of the agents in $A_{plan}$ (line 10). If $m=\text{EPIBT}$, $a_k$ should not collide with (1) agents that have a tentative path in the current \texttt{MDPIBT} function call, (2) agents with higher priorities, and (3) agents that have no replan attempts left (line 12). If $tp$ is valid, we return it (lines 11 and 13). If no valid paths are found, we return the failure (line 14).

\section{Theoretical Analysis} \label{appen:theory}

\subsection{Runtime Complexity}

To compute the worst-case runtime of MD-PIBT, we consider a hyper-parameterization of MD-PIBT with $R=\infty$ and arbitrary $w$, $h$, $m$, $p$, and $C$ ($w \geq h \geq 1$, $C \geq 1$). In addition, suppose that each agent $a_i$ ($i \in \{1,\ldots,N\}$) has a set of all possible paths of length $w$, denoted by $a_i.\mathcal{P}$, with size $|a_i.\mathcal{P}| = K^w$, where $K$ is the number of discrete actions per timestep. Let
\begin{equation}
\mathbb{P} = a_1.\mathcal{P} \times a_2.\mathcal{P} \times \cdots \times a_N.\mathcal{P}
\end{equation}
be the set of all possible combinations of paths for the $N$ agents.

\begin{theorem} \label{th1}
With $R=\infty$, during one invocation of the \texttt{MDPIBT} function in \Cref{alg:md-pibt-main}, each path combination $\rho \in \mathbb{P}$ is tried at most once. 
\end{theorem}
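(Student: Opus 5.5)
We would argue this with a lexicographic-progress (odometer) argument on the vector of path indices. Fix one invocation of \texttt{MDPIBT} on $a_i$. Encode the search state by the tuple of path indices $(a_1.d,\ldots,a_N.d)$, ordered by the sequence in which agents enter the AgDG $\mathcal{G}$. A ``path combination tried'' is then the tuple of candidate paths $a_k.\mathcal{P}[a_k.d-1]$ currently held as tentative paths, with the remaining agents at their safe or unplanned status. The plan is to show that this configuration, read as a mixed-radix number, strictly increases in a suitable lexicographic order every time a new combination is formed. With $R=\infty$, the \texttt{fallToSafePath} branch is only reached when an agent has exhausted $\mathcal{P}$, not because of a replan budget, so the only index manipulations are the ones described below.

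The key steps are as follows. (i) Observe that \texttt{findBestPath} only ever increments $a_k.d$ (line 6 of \Cref{alg:path-helper}), so for a fixed state of its ancestors an agent never revisits a candidate. (ii) Observe that the only place an index decreases is line 18 of \Cref{alg:agdg-helper-main}, where $a_c.d\gets 0$. This happens only inside \texttt{removeDepend}$(a_p)$, and only for descendants $a_c$ of an agent $a_p$ that is about to replan. (iii) Show that whenever \texttt{removeDepend}$(a_p)$ is triggered, whether by attempt-to-replan (line 17) or fall-to-safe (line 13), $a_p$ is re-queued, and its next \texttt{findBestPath} call resumes from its current $a_p.d$, which was not reset. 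Hence $a_p$'s next tentative path has a strictly larger index than before. (iv) Conclude that every reset of a descendant's index coincides with a strict increase of an ancestor's index. This is exactly the carry of an odometer, so the sequence of tried combinations is strictly increasing in the lexicographic order induced by the ancestor-before-descendant order. Since the order is strict, no combination $\rho\in\mathbb{P}$ can appear twice.

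The main obstacle is step (iv): making the lexicographic order well-defined when $\mathcal{G}$ is a general DAG-like structure rather than PIBT's linked list. Soft edges can later be converted to hard ones (line 10), so the parent/child relation changes during the search, and an agent may be a child of several parents. I would handle this as follows. First, fix the order as the order in which agents were first inserted into $A_{plan}$ during this invocation. Next, argue that a reset only affects agents inserted after $a_p$: descendants through hard edges are always pushed after their parent planned. Finally, argue that the soft-to-hard conversion only makes an already-planned agent a parent, and never reorders indices of agents earlier than $a_p$. If this ordering argument fails in some corner case, my fallback is to index combinations by the full history prefix, which is a search tree. I would then show that each node of this tree corresponds to a distinct assignment of indices to agents, because siblings differ in the parent's (monotone) index.
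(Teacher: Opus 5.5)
Your strategy is the same as the paper's. You encode the state by the path-index vector and argue that every reset of an index is paid for by a strict increase of some ``ancestor'' index, so the vector never repeats. The step you rightly flag as the obstacle is not closed by the ordering you propose, because the soft-to-hard conversion (line 10 of \Cref{alg:agdg-helper-main}) creates hard edges from agents that entered $A_{plan}$ \emph{later} to agents that entered it \emph{earlier}. \Cref{fig:main-md-pibt} already contains one such edge. In step (7), $a_D \dashrightarrow a_C$ becomes $a_D \to a_C$, although $a_C$ was planned before $a_D$. If $a_C$ then exhausted its paths, \texttt{chooseParent} would return its most recently added parent $a_D$. \texttt{removeDepend}$(a_D)$ would then set $a_C.d \gets 0$, and $a_D.d$ only advances afterwards. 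That is a strict \emph{decrease} in your first-insertion lexicographic order. So the claims that a reset only affects agents inserted after $a_p$, and that conversion never touches indices of earlier agents, are false.

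Your fallback, and step (iii) itself, need the replanning agent's own index to be monotone (``resumes from its current $a_p.d$, which was not reset''). The paper's proof assumes the same thing (``any future $\mathbf{d}$ must contain this new value of $a_p.d$''). Under the literal pseudocode this fails even for the root. Step (2) of \Cref{fig:main-md-pibt} creates both $a_A \to a_B$ and $a_B \dashrightarrow a_A$. Suppose a hard child of $a_A$ whose only hard parent is $a_A$ later exhausts its paths and asks $a_A$ to replan. Then:
\begin{itemize}
\item \texttt{removeDepend}$(a_A)$ first turns $a_B \dashrightarrow a_A$ into $a_B \to a_A$ (line 10).
\item Its recursion enters $a_B$, whose hard children now include $a_A$, so line 18 of \Cref{alg:agdg-helper-main} executes $a_A.d \gets 0$.
\item $a_A$ is re-pushed by line 18 of \Cref{alg:md-pibt-main}. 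All its hard descendants have been unplanned and reset, so it restarts from $a_A.\mathcal{P}[0]$ in the invocation's initial configuration.
\end{itemize}
The initial index vector therefore recurs, and with $R=\infty$ the replay repeats. What is missing, in your argument and in the paper's, is an invariant guaranteeing that the recursive unplanning never resets the agent being replanned, or any agent whose increment is supposed to pay for the resets. The literal \texttt{removeDepend} does not satisfy it. A correct proof therefore needs an extra hypothesis or a modified algorithm, not just a better ordering. For example, performing the conversion only after the recursion means soft edges from unplanned descendants are already deleted at line 15, which removes the cycle above.
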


\begin{proof}
According to \Cref{alg:agent-class-main}, we associate each agent $a_i$ with a path index $a_i.d \in \{0,1,\ldots, |a_i.\mathcal{P}|\}$. When $a_i.d = 0$, $a_i$ has no tentative path. Otherwise, it is trying the $a_i.d$-th path in $a_i.\mathcal{P}$. To capture the case where $a_i.d=0$, let $a_i.\mathcal{P'} = a_i.\mathcal{P} \cup \{\perp\}$ for all agents. We then let

\begin{equation}
\mathbb{P'} = a_1.\mathcal{P'} \times a_2.\mathcal{P'} \times \cdots \times a_N.\mathcal{P'}.
\end{equation}

A path combination $\rho' \in \mathbb{P'}$ can then be represented by the path-index vector
\begin{equation}
\mathbf{d} = (a_1.d, a_2.d, \ldots, a_N.d),
\end{equation}
where $a_i.d$ specifies the current path selected for agent $a_i$ (or $a_i.d=0$ denoting that no path has yet been selected). Thus, proving that no path combination is tried twice is equivalent to proving that the same path-index vector $\mathbf{d}$ cannot appear twice during a single \texttt{MDPIBT} function call.

With $R=\infty$, according to \Cref{alg:md-pibt-main}, when an agent $a_i$ fails to find a valid path, it always asks one of its parent agents $a_j$ to replan. That parent agent then advances to another candidate path, i.e., its path index $a_j.d$ is increased to a previously untried value. 

If the failure does not backtrack to the root agent, this will advance $\mathbf{d}$ to a untried state where some ancestor agent $a_p$ of $a_i$ incremented path index.
After $a_p.d$ is incremented, any future $\mathbf{d}$ must contain this new value of $a_p.d$. Hence, the previous $\mathbf{d}$ cannot be reconstructed. Descendant agents may subsequently choose new paths, but the changed parent index prevents the entire vector from becoming identical to the failed one.

If the failure eventually backtracks to the root agent $a_{root}$, it is guaranteed that $a_{root}$ can select its safe path, because all other agents are also taking their safe paths. Therefore, the search either returns a valid path combination before exhausting $\mathbb{P'}$, or, in the worst case, it enumerates all combinations in $\mathbb{P'}$ exactly once before terminating. Therefore, each path combination in $\mathbb{P}$ is tried at most once.
\end{proof}

Now we can compute the worst-case runtime of MD-PIBT with $R=\infty$.

\begin{theorem}
    With $R=\infty$, the worst-case runtime of MD-PIBT is $O(K^{wN})$.
\end{theorem}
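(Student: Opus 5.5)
The plan is to bound the total work of MD-PIBT by the number of distinct path combinations it can visit, using Theorem~\ref{th1}, and then multiply by the cost of processing each one. There are three steps: count the states visited within one call to \texttt{MDPIBT}, bound the cost per state, and sum over the calls made by the outer loop in \Cref{alg:main-loop-main}.

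\textbf{Step 1: states per call.} By Theorem~\ref{th1}, within one invocation of \texttt{MDPIBT} with $R=\infty$, each path-index vector $\mathbf{d}$ appears at most once. Each iteration of the while loop in \Cref{alg:md-pibt-main} increments some $a_k.d$ inside \texttt{findBestPath}, so it changes $\mathbf{d}$. Hence the number of distinct configurations visited is at most
\[
|\mathbb{P}'| = (K^w+1)^N = O(K^{wN}).
\]
A single iteration may advance $a_k.d$ several times inside \texttt{findBestPath} before returning. I would charge each such increment to the new index vector it produces, so the total number of candidate-path checks is also $O(K^{wN})$.

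\textbf{Step 2: cost per state.} I would argue that each candidate check, and the graph bookkeeping it triggers, costs time polynomial in $N$ and $w$ and independent of the exponential term. This covers building the reservation table $\Lambda$, collision checks, \texttt{addDepend}, and the recursive \texttt{removeDepend}. A recursive \texttt{removeDepend} touches each edge and agent at most once, so it costs $O(N^2)$. Treating $N$, $w$ and the map as fixed when stating the exponential bound (or absorbing these polynomial factors as lower-order), each state contributes $O(1)$ relative to $K^{wN}$.

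\textbf{Step 3: summing over outer calls.} The outer loop calls \texttt{MDPIBT} only for agents with $a_i.r=0$, so there are at most $N$ calls. Naively this gives $O(N K^{wN})$. To reach the stated bound, I would note that agents planned in an earlier call keep $r>0$ and their indices are never revisited. The calls therefore explore disjoint sets of index coordinates, and the product of their search spaces is still bounded by $|\mathbb{P}'|$. Alternatively, the factor $N$ is polynomial and can be absorbed in the same way as in Step 2.

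\textbf{Main obstacle.} The delicate step is Step 3 together with the charging in Step 1. Resetting $a_c.d\gets 0$ in \texttt{removeDepend} lets descendant indices decrease, so the per-call count rests entirely on Theorem~\ref{th1} rather than on monotonicity. Across calls, safe paths change, so the disjointness argument needs care. If it fails, I would state the bound as $O(\mathrm{poly}(N)\,K^{wN})$, which is $O(K^{wN})$ up to the polynomial factors already suppressed.
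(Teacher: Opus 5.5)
Your Step 1 is essentially the paper's entire proof. The paper invokes Theorem~\ref{th1}, takes the runtime to be the number of path combinations tried, and bounds that by $|\mathbb{P}| = K^{wN}$. It never accounts for per-iteration bookkeeping or for the outer loop, so your Steps 2 and 3 go beyond it.

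In that extra material, however, the disjointness argument in Step 3 is false. An agent with $r>0$ can be pushed onto $q$ again in a later call. After the outer-loop reset every agent has $\pi=\perp$, so \texttt{addDepend} adds a hard edge to any agent whose (updated) safe path is hit.
\begin{itemize}
\item In EPIBT mode, \texttt{findBestPath} allows collisions with any lower-priority agent outside $A_{curr}$. Earlier-call agents are outside $A_{curr}$, so a lower-priority agent planned in an earlier call can be bumped.
\item In either mode, a descendant unplanned by \texttt{removeDepend} and never re-added leaves $A_{plan}$ with $r>0$ and $d=0$. A later call can then bump it and re-enumerate its paths from the start.
\end{itemize}
Safe paths also change between calls, so the index vector is not a global state. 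Theorem~\ref{th1} therefore gives only $|\mathbb{P}'|$ trials per call, i.e., $N\,|\mathbb{P}'|$ overall. Drop the disjointness argument and keep your fallback.

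That fallback is valid only with $N$ (and $w$) held fixed, and the same holds for your step $(K^w+1)^N = O(K^{wN})$. Neither $(K^w+1)^N = K^{wN}(1+K^{-w})^N$ nor $N K^{wN}$ is $O(K^{wN})$ uniformly in $N$. This is not a defect relative to the paper, which makes the same implicit moves: its proof of Theorem~\ref{th1} enumerates $\mathbb{P}'$ while the runtime bound uses $|\mathbb{P}|$, and the up to $N$ outer calls are never mentioned. Still, you should say explicitly that the bound counts path-combination trials with these factors suppressed, rather than present it as a uniform asymptotic statement.
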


\begin{proof}
According to \Cref{th1}, the worst-case runtime of MD-PIBT is upper bounded by the number of path combinations we have tried, which is upper bounded by
\begin{equation}
    O(|\mathbb{P}|)
    =
    O(\prod_{i=1}^{N} |a_i.\mathcal{P}|)
    =
    O((K^w)^N)
    =
    O(K^{wN}).
\end{equation}
\end{proof}

This bound is both useless and useful. The bound is useless in that $O(K^{wN})$ is the size of enumerating all possible paths, which is the naive bound. However, the bound is insightful as it reveals how MD-PIBT is searching through the entire search space when encountering failures. Finally, we can show that this bound is somewhat tight in that we can construct an instance (where all agents lined up going into a dead-end corridor) which takes $O(K^{wN})$ planning calls.

Now consider a finite $R$ ($R \ll \infty$).

\begin{theorem}
    With a finite positive $R$, the worst-case runtime of MD-PIBT is $O(NRK^w)$.
\end{theorem}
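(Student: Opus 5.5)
The plan is to bound the total work by the number of times a call to \texttt{findBestPath} can examine a candidate path. Each examination of a candidate costs a constant amount of work per candidate, treating collision checking against the reservation table $\Lambda$ as $O(1)$ under the usual spatial-hash implementation, or absorbing it into the constant. In contrast with \Cref{th1}, we will not count path combinations. Instead we count, per agent, how many times it is popped from $q$ and how many candidate paths it scans across those pops.

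\textbf{Step 1: pops per agent.} Each time $a_k$ is popped in \Cref{alg:md-pibt-main}, its counter $a_k.r$ is incremented. The function \texttt{fallToSafePath}$(a_k\,|\,R)$ returns true once $a_k.r \geq R$; after that, $a_k$ is fixed to its safe path and added to $A_{plan}$. Under the EPIBT mode, agents with $a_c.r \geq R$ are also treated as non-bumpable in \Cref{alg:path-helper}.

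We would argue that an agent cannot re-enter $q$ once it has exhausted $R$ attempts, or that any such re-entry terminates immediately. Counters $r$ are never reset within the outer loop of \Cref{alg:main-loop-main}, since only $d$ is reset by \texttt{removeDepend}. Hence across the whole run of MD-PIBT each agent is popped at most $R$ times (plus $O(1)$), giving at most $NR$ pops in total.

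\textbf{Step 2: work per pop.} A single call of \texttt{findBestPath} advances $a_k.d$ monotonically through $a_k.\mathcal{P}$, so it scans at most $|a_k.\mathcal{P}| = K^w$ candidates. The bookkeeping in \texttt{addDepend} and the non-recursive part of \texttt{removeDepend} is charged to edges. Each recursive invocation of \texttt{removeDepend} on a child $a_c$ removes a hard edge that was created by a successful pop. Therefore the total number of recursive unplanning steps is bounded by the number of successful pops times the out-degree, and the out-degree is at most $C \leq N$ hard edges per accepted path. Alternatively, since the out-degree is bounded by the number of agents whose safe paths a $w$-step path can touch, this cost can be treated as a constant for fixed agent footprints. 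Under either accounting it is dominated by the $K^w$ scan.

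\textbf{Step 3: combine.} Multiplying the bounds gives $O(NR) \cdot O(K^w) = O(NRK^w)$ over all calls of \texttt{MDPIBT} in the outer loop, because each agent is a root at most once (line 3 checks $a_i.r = 0$).

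The main obstacle is Step 1. The concern is showing that the $R$-cap really bounds the number of pops, given that \texttt{removeDepend} can push an agent back onto $q$ after it has fallen to its safe path. The first approach would be to show that such an agent either reaches \texttt{fallToSafePath} again immediately, at $O(1)$ cost with no scan, or is excluded by the $a_c.r < R$ condition in \Cref{alg:path-helper}. In both cases every pop beyond the $R$-th costs $O(1)$. The number of such extra pops is bounded by the number of hard-edge removals, which is itself charged to productive pops as in Step 2.
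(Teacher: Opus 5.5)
Your overall route is the paper's. The paper also bounds the number of \texttt{findBestPath} invocations by $NR$, because $r$ is incremented on every invocation and never decremented. It multiplies this by the $K^w$ candidates one invocation can scan, and it measures runtime simply as the number of candidate paths tried. The difference is that the paper gets your Step 1 by definition. It reads $R$ as a cap on each agent's planning attempts: an agent at its limit takes its safe path and ``is never replanned again afterwards.'' Your last paragraph instead tries to derive that cap from the pseudocode, and that derivation does not work as written.

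\begin{itemize}
\item In \Cref{alg:md-pibt-main}, \texttt{findBestPath} runs at line 6, before \texttt{fallToSafePath} is consulted at line 11. So a pop with $a_k.r \ge R$ is not $O(1)$: it scans up to $K^w$ candidates, and it may even succeed and create new hard edges.
\item The $a_c.r < R$ filter exists only in the $m=\text{EPIBT}$ branch of \texttt{findBestPath}.
\item Even in EPIBT mode, an exhausted agent can be pushed again. This happens through line 18, when it is the parent chosen by a failing child, or through lines 11--12 of \texttt{removeDepend}, when it still has incoming hard edges.
\item The charging ``extra pops $\le$ hard-edge removals $\le$ successful pops $\times$ out-degree'' is circular. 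The successful pops you charge to include the extra pops themselves.
\end{itemize}

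The clean repair is to state the paper's reading explicitly. $R$ bounds the number of \texttt{findBestPath} calls per agent; equivalently, the $r$-check happens before the scan. Step 1 is then immediate, and your Steps 2--3 give $O(NRK^w)$ exactly as in the paper.

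Also drop the claim that edge bookkeeping with out-degree up to $C \le N$ is dominated by the $K^w$ scan. That claim fails whenever $N > K^w$. The paper's bound counts candidate paths tried and does not charge this bookkeeping at all.
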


\begin{proof}
    Using a finite $R$ gives an upper bound for how many times an agent may replan. As mentioned in \Cref{alg:agent-class-main,alg:md-pibt-main}, each agent $a_i$ has a planning attempt counter $a_i.r$, which increments by 1 each time $a_i$ invokes \texttt{findBestPath} in \Cref{alg:md-pibt-main} line 6. Given that we have $N$ agents and we never decrement $r$, in the worst case, we invoke \texttt{findBestPath} $NR$ times. From the proof of \Cref{th1}, we know that each agent $a_i$ has $|a_i.\mathcal{P}| = K^w$ paths, where $K$ is the number of discrete actions and $w$ is the planning window. Therefore, the worst case runtime of MD-PIBT with a finite $R$ is given by the total number of paths being tried: $O(NRK^w)$.
\end{proof}

Note that $O(NRK^w) \ll O(K^{wN})$, as each agent can fail only a maximum of $R$ times before choosing the safe path, and is never replanned again afterwards.

\section{Additional Experiments} \label{appen:add-result}

\subsection{Visualization of Agent Models and Maps} \label{appen:agent-model-map}

\begin{figure*}[!t]
    \begin{subfigure}[t]{0.24\textwidth}
        \centering
        \includegraphics[width=\linewidth]{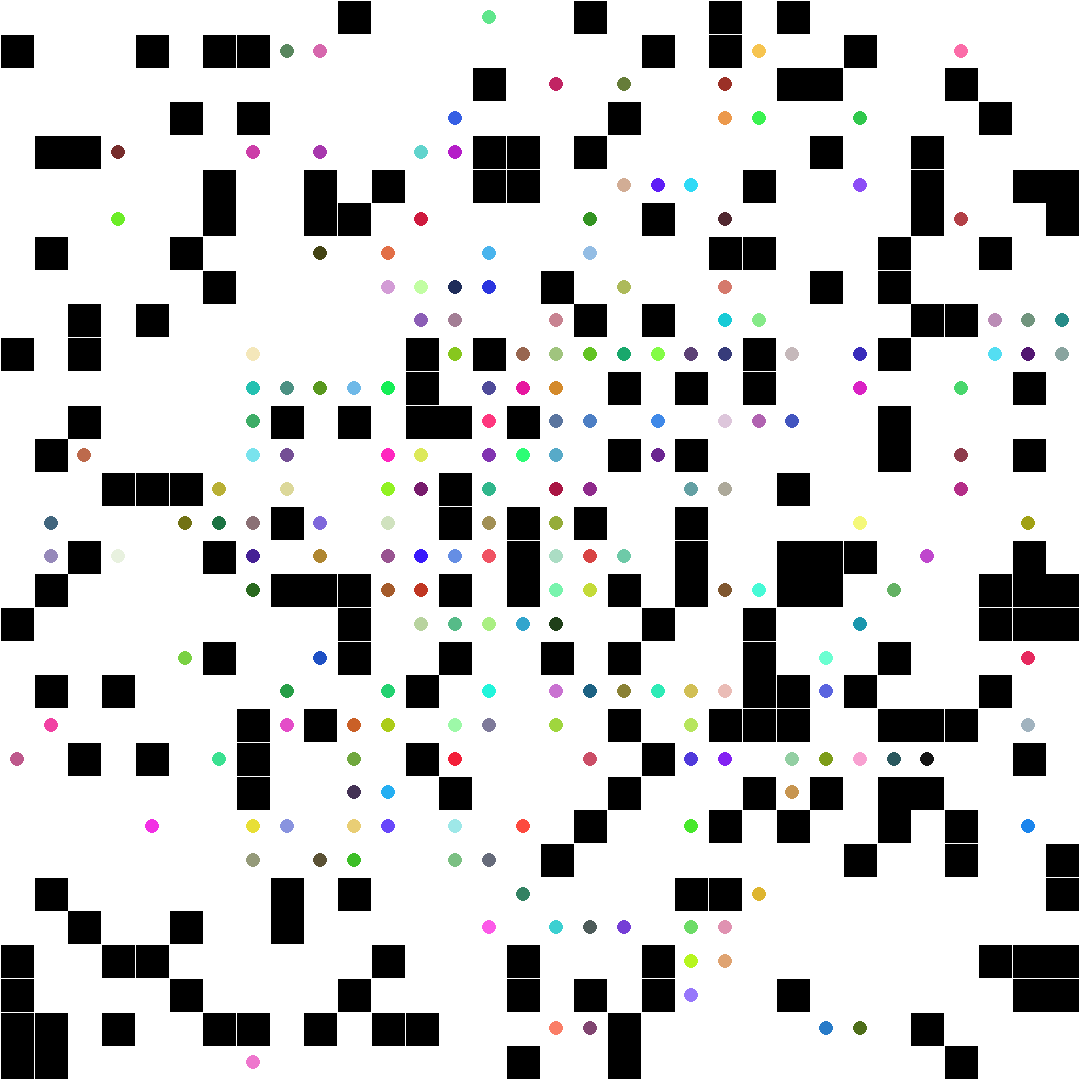}
        \caption{Pebble Motion (PM)}
    \end{subfigure}\hfill
    \begin{subfigure}[t]{0.24\textwidth}
        \centering
        \includegraphics[width=\linewidth]{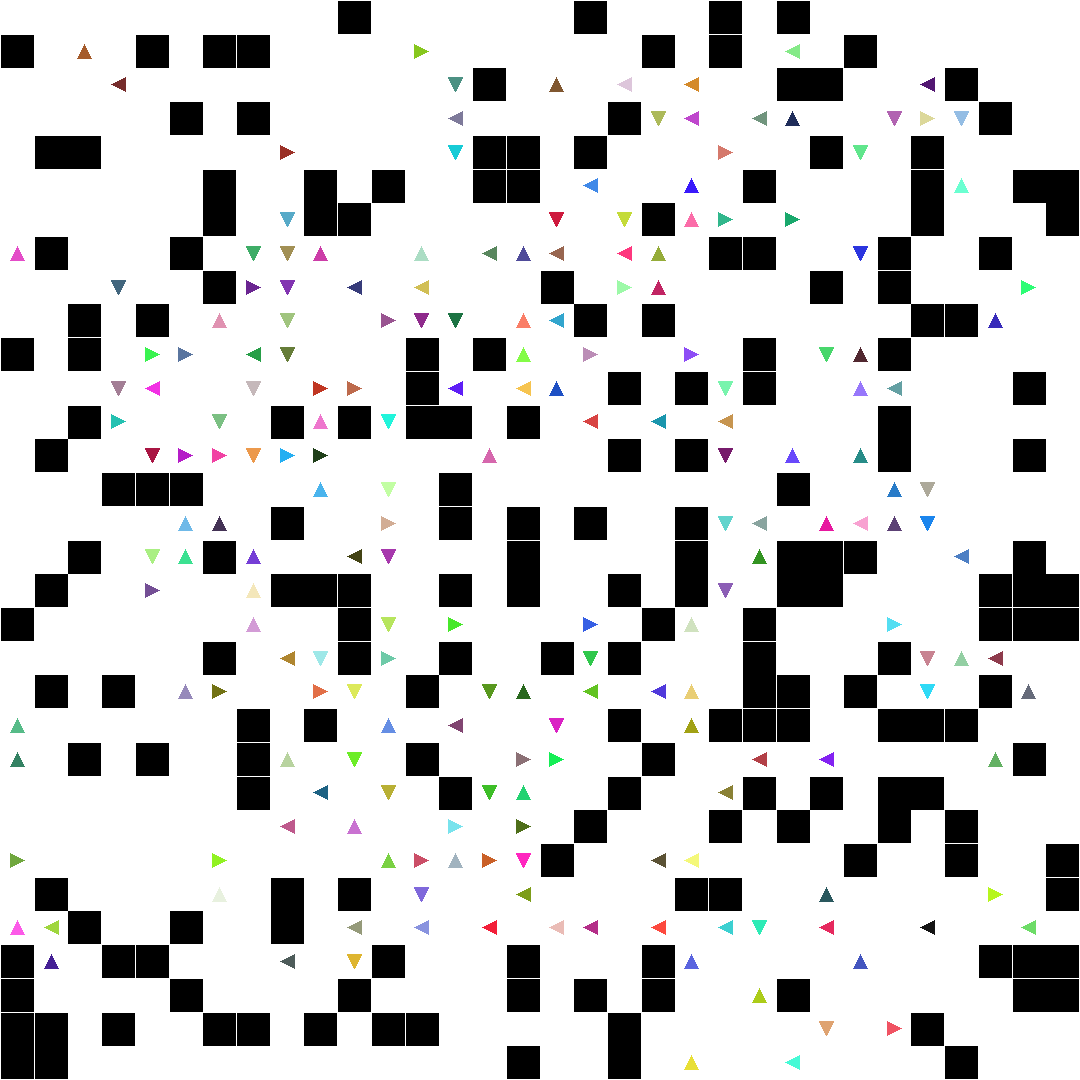}
        \caption{Rotation Motion (RM)}
    \end{subfigure}\hfill
    \begin{subfigure}[t]{0.26\textwidth}
        \centering
        \includegraphics[width=\linewidth]{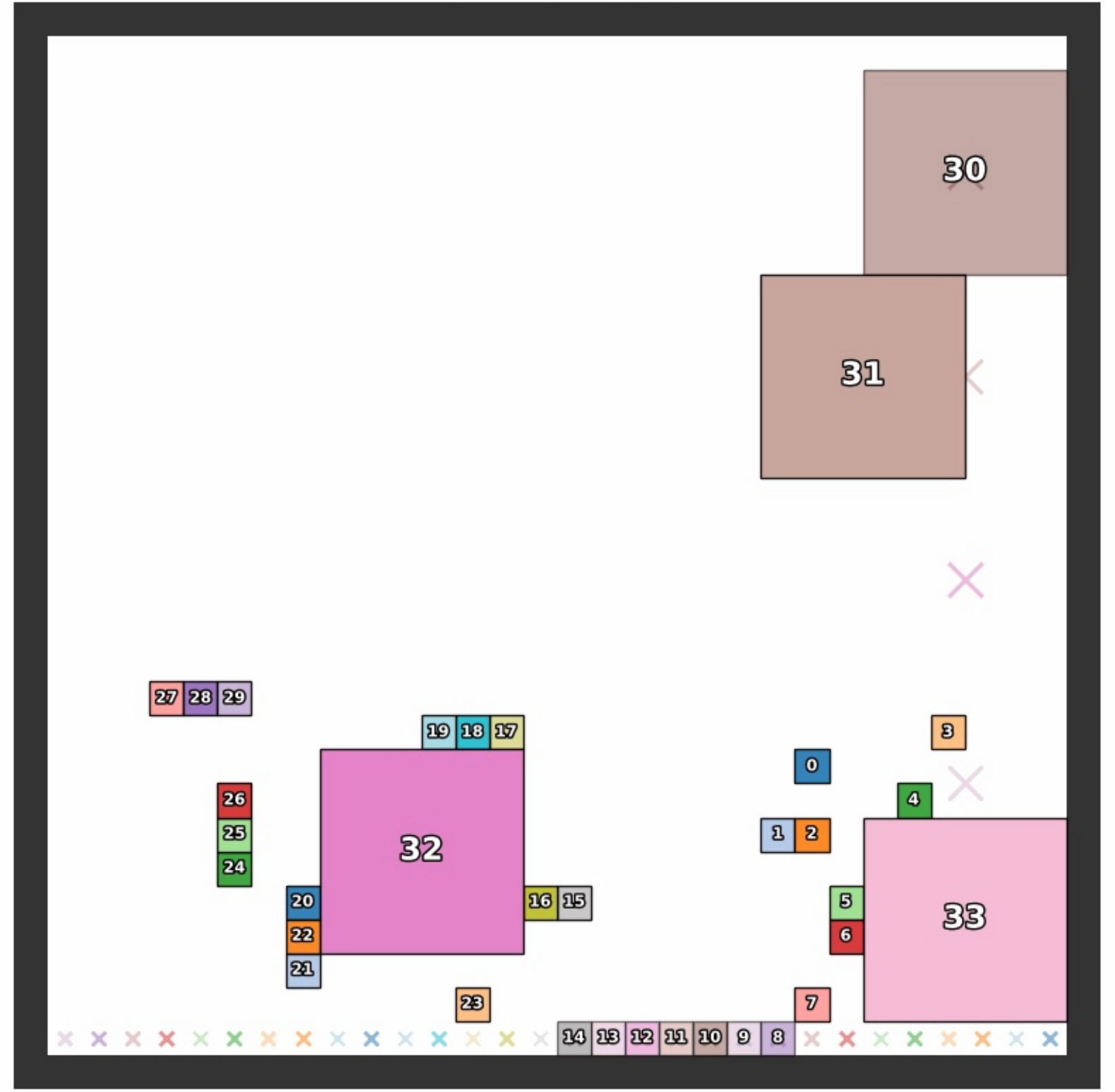}
        \caption{Pebble Motion w/ Large Agents (PMLA)}
    \end{subfigure}\hfill
    \begin{subfigure}[t]{0.24\textwidth}
        \centering
        \includegraphics[width=\linewidth]{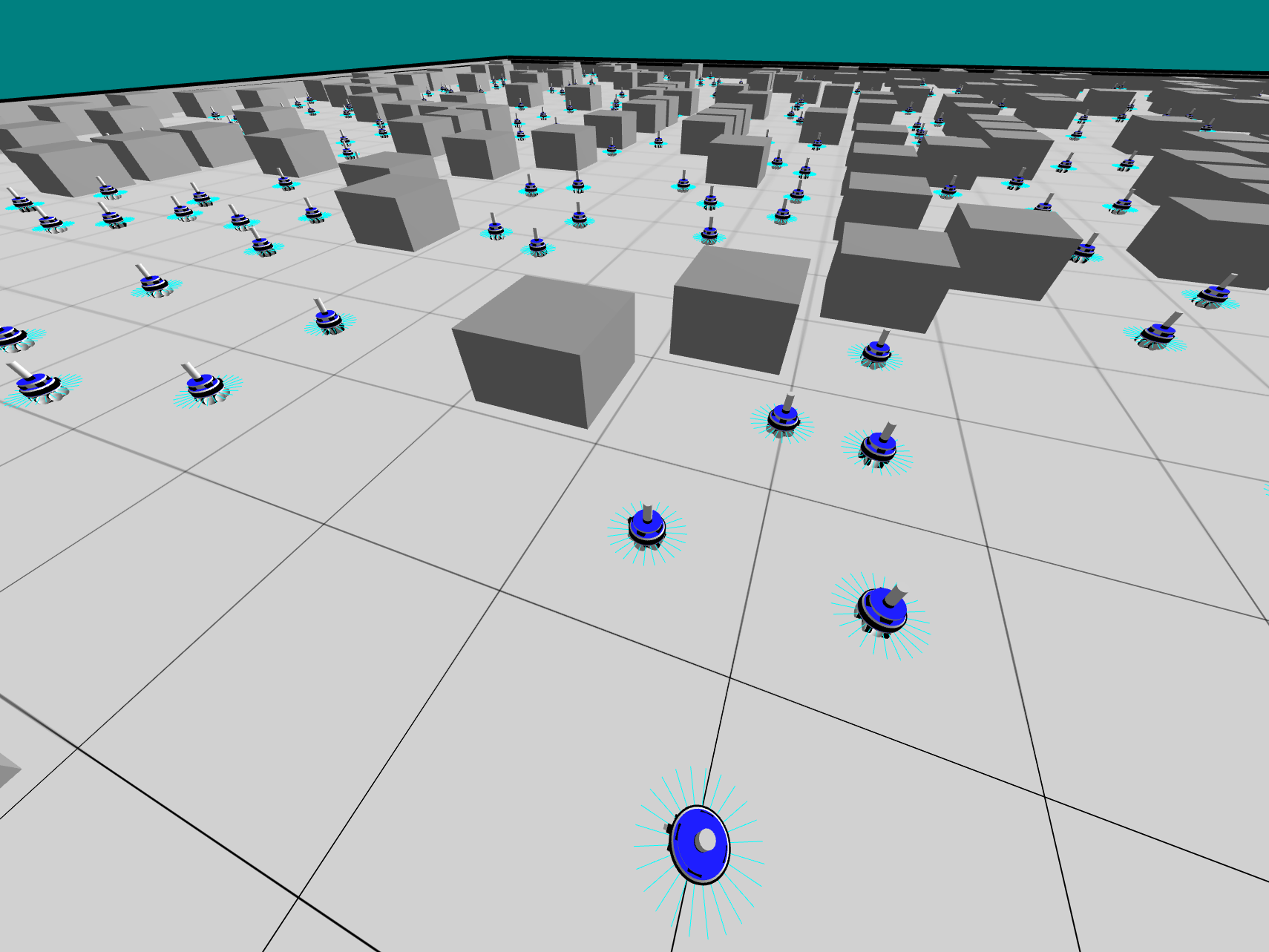}
        \caption{Differential Drive Robots (DDR)}
    \end{subfigure}\hfill
    \caption{Visualization of agent models used in our one-shot/lifelong MAPF experiments.}
    \label{fig:mapf-agents}
\end{figure*}

\begin{figure*}[!tp]
    \begin{subfigure}[t]{0.19\textwidth}
        \centering
        \includegraphics[width=\linewidth]{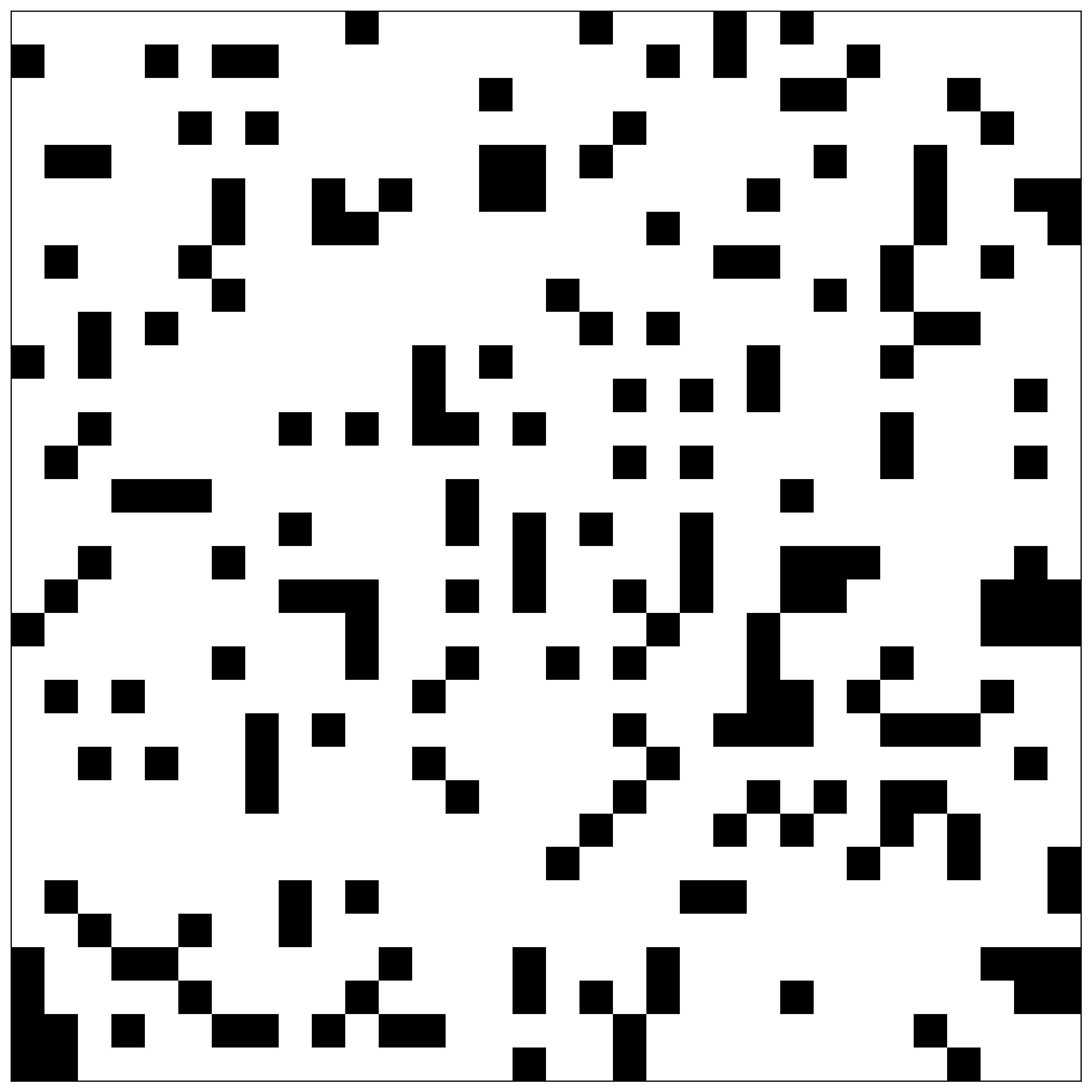}
        \caption{\randomSmall}
    \end{subfigure}\hfill
    \begin{subfigure}[t]{0.19\textwidth}
        \centering
        \includegraphics[width=\linewidth]{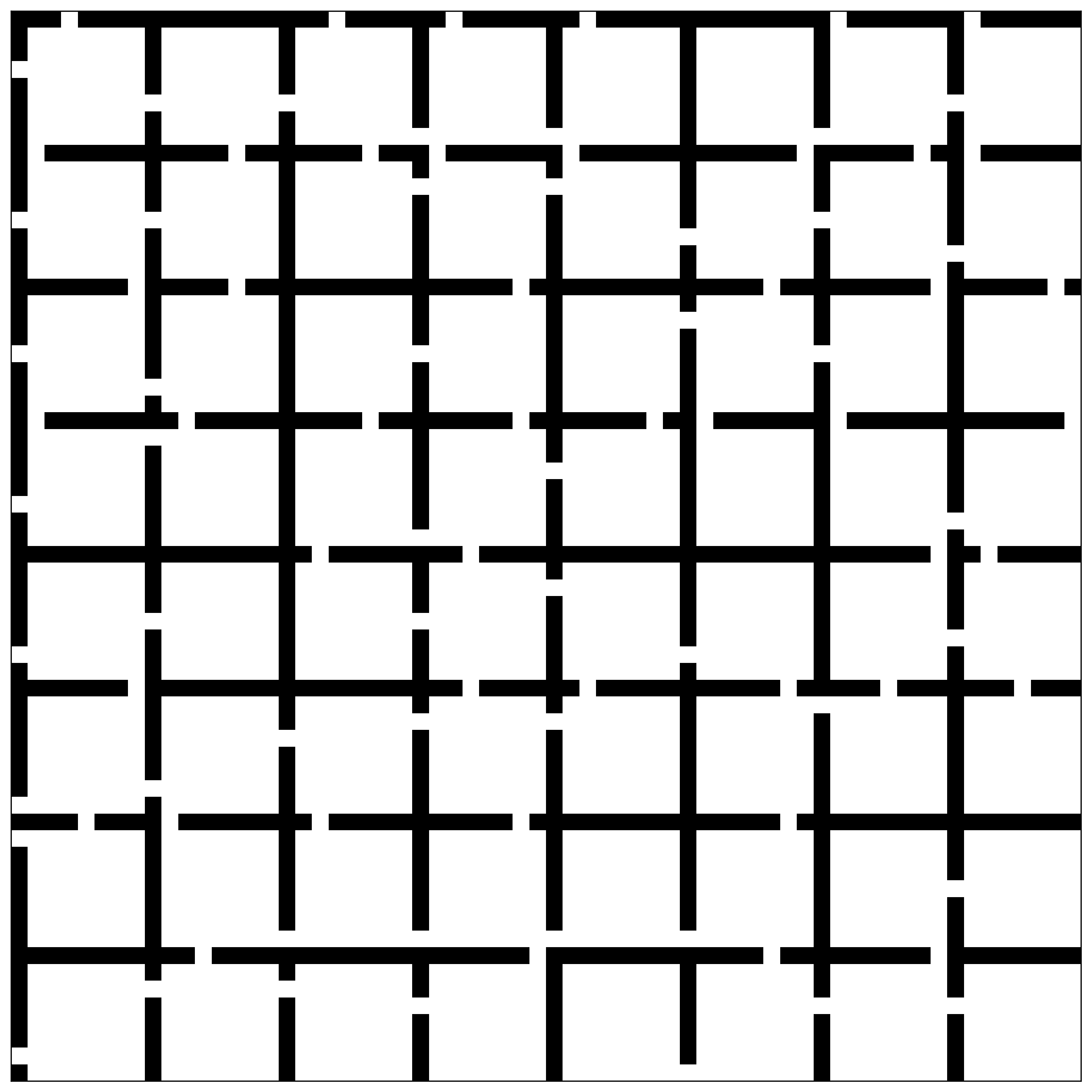}
        \caption{\roomLarge}
    \end{subfigure}\hfill
    \begin{subfigure}[t]{0.19\textwidth}
        \centering
        \includegraphics[width=\linewidth]{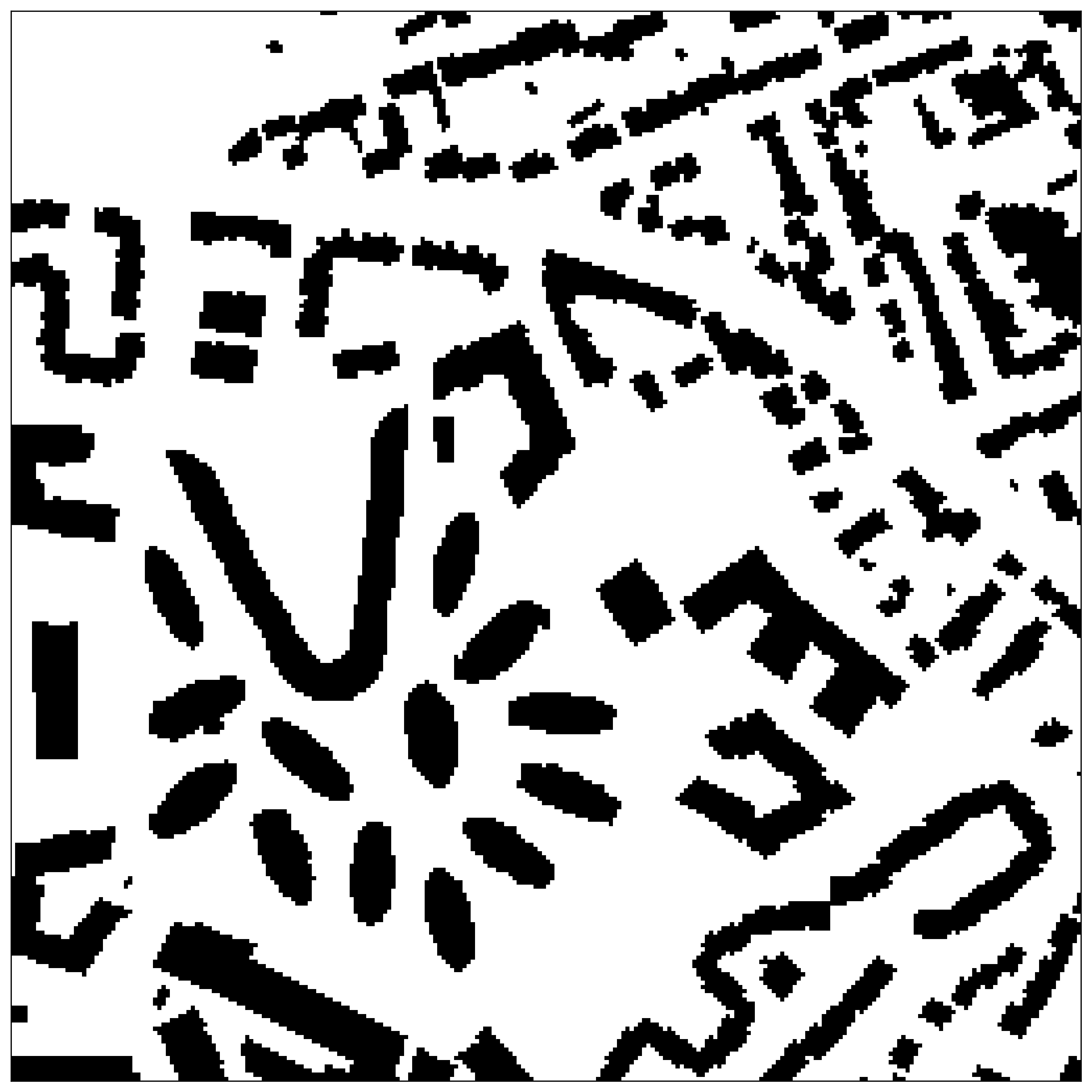}
        \caption{\paris}
    \end{subfigure}\hfill
    \begin{subfigure}[t]{0.4\textwidth}
        \centering
        \includegraphics[width=\linewidth]{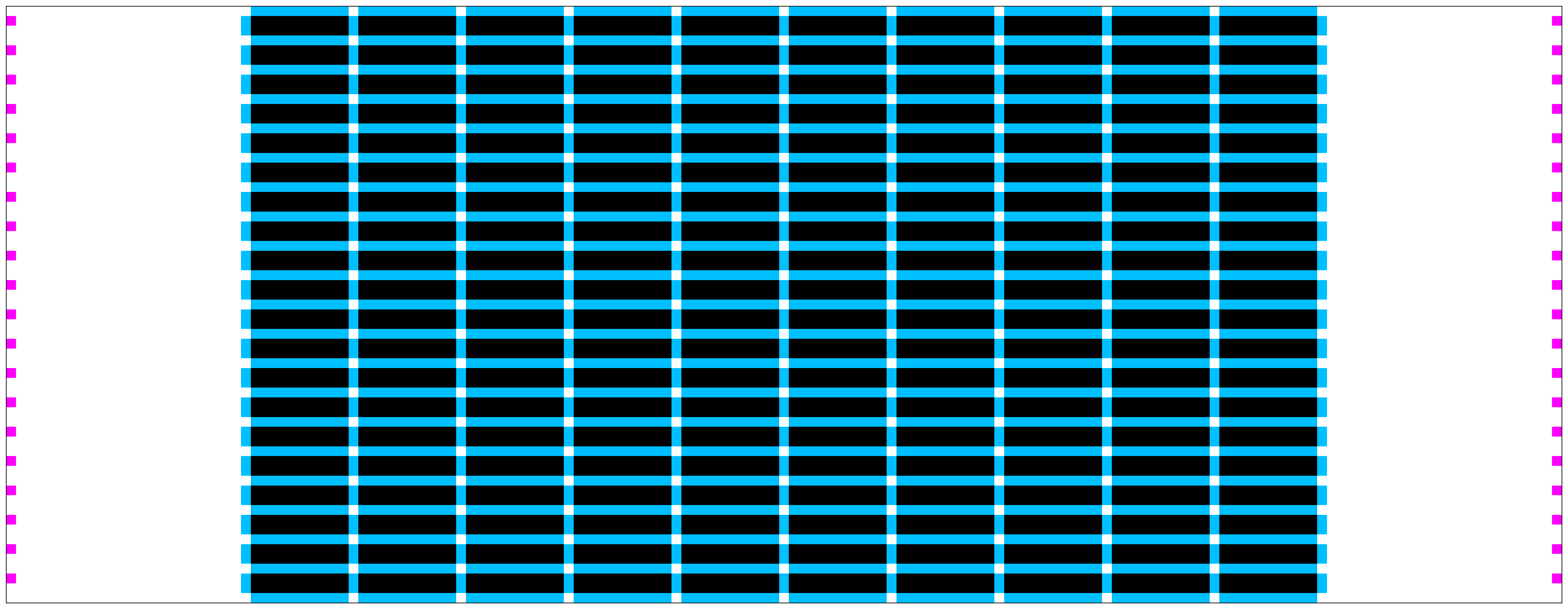}
        \caption{\warehouseXlarge}
    \end{subfigure}\hfill
    \caption{Visualization of maps used in one-shot and lifelong experiments with PM, RM, and DDR agents. In all maps, black tiles are obstacles, white tiles are empty spaces, blue tiles are endpoints, and purple tiles are workstations. In \randomSmall, \roomLarge, and \paris, agents move between randomly selected empty spaces (white). In \warehouseXlarge, agents move between randomly selected workstations (purple) and endpoints (blue).}
    \label{fig:mapf-maps}
\end{figure*}

\begin{figure*}[!tp]
    \begin{subfigure}[t]{0.24\textwidth}
        \centering
        \includegraphics[width=\linewidth]{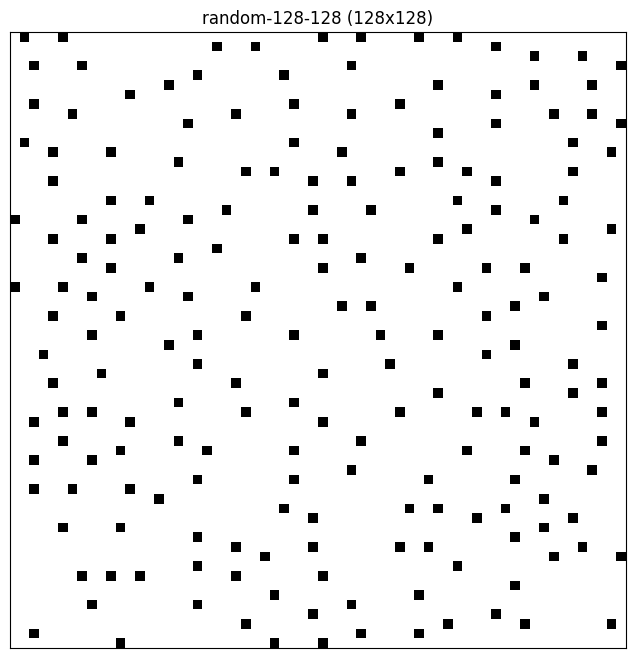}
        \caption{\randomLargeLA}
    \end{subfigure}\hfill
    \begin{subfigure}[t]{0.24\textwidth}
        \centering
        \includegraphics[width=\linewidth]{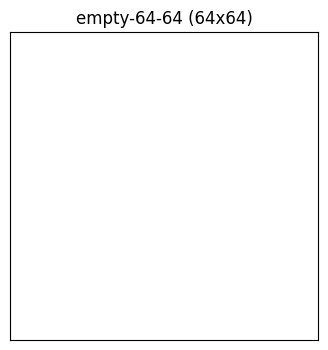}
        \caption{\emptyLargeLA}
    \end{subfigure}\hfill
    \begin{subfigure}[t]{0.24\textwidth}
        \centering
        \includegraphics[width=\linewidth]{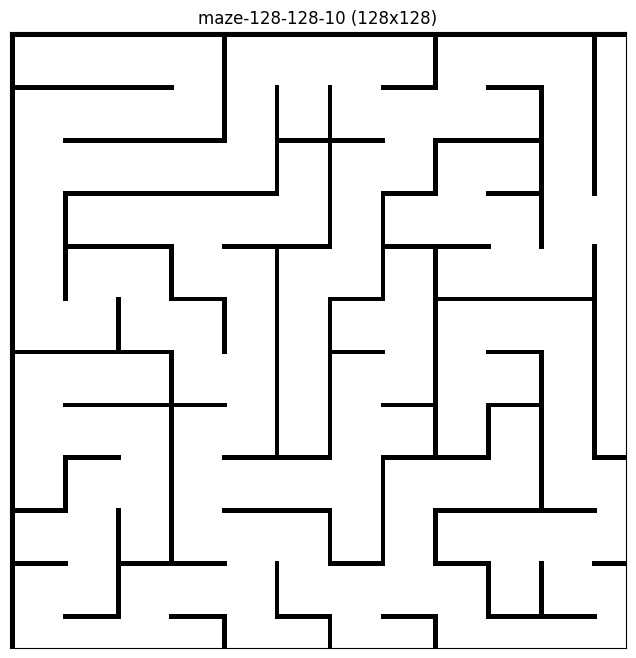}
        \caption{\mazeLargeLA}
    \end{subfigure}\hfill
    \begin{subfigure}[t]{0.24\textwidth}
        \centering
        \includegraphics[width=\linewidth]{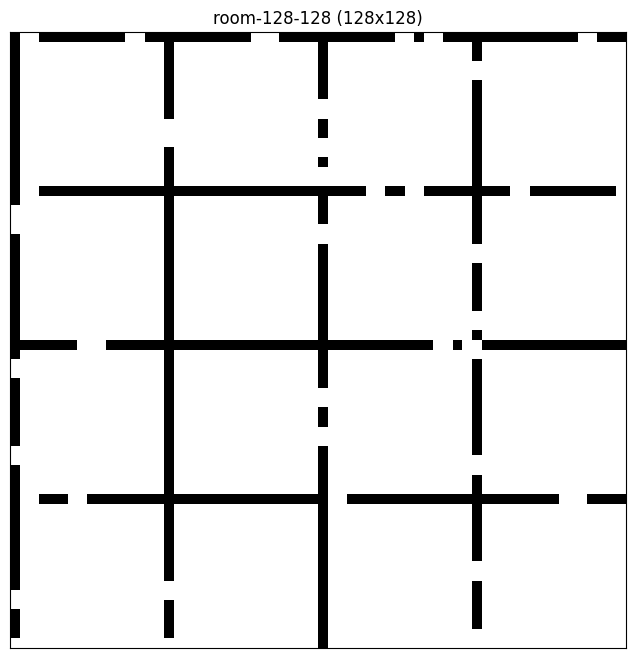}
        \caption{\roomLargeLA}
    \end{subfigure}\hfill
    \caption{Visualization of maps used in one-shot experiments with PMLA agents. In all maps, black tiles are obstacles and while tiles are empty spaces. Agents move between randomly selected empty spaces (white).}
    \label{fig:Large-mapf-maps}
\end{figure*}

\Cref{fig:mapf-agents} visualizes the agent models used in our experiments. \Cref{fig:mapf-maps} visualizes the maps used in one-shot and lifelong experiments with PM, RM, and DDR agents. For \paris, the original map is not connected. Therefore, we use the largest connected component in the original map to conduct the experiments. \Cref{fig:Large-mapf-maps} visualizes the maps used in one-shot experiments with PMLA agents. 
All maps are selected from the MAPF benchmark~\cite{Stern2019benchmark}.

\begin{table*}[!t]
    \centering
    \small
    \resizebox{\linewidth}{!}{
    \begin{tabular}{cccccccccc}
    \toprule
    MAPF & Map & $|V|$ & Agent Model & $w$ & $h$ & $R$ & $C$ & $m$ & $p$ \\
    \midrule
    \multirow{2}{*}{One-shot} & \paris       & 47,240 & PM & \multirow{2}{*}{$\{1,2,3,4\}$} & \multirow{2}{*}{$\{1,w\}$} & \multirow{2}{*}{100} & \multirow{2}{*}{$\{1,\infty\}$} & \multirow{2}{*}{$\{\text{PIBT},\text{EPIBT}\}$} & \multirow{2}{*}{LET}              \\

                              & \roomLarge         & 3,232 & PM                     \\
    \midrule
    \multirow{2}{*}{Lifelong} & \paris       & 47,240 &  PM, RM & \multirow{2}{4cm}{PM:$\{1,2,3\}$\ RM:$\{3,4,5\}$}  & \multirow{2}{*}{$\{1,w\}$} & \multirow{2}{3cm}{PM/RM:$\{10,100,1000\}$} & \multirow{2}{*}{$\{1,\infty\}$} & \multirow{2}{*}{$\{\text{PIBT},\text{EPIBT}\}$} & \multirow{2}{2cm}{PM/RM:\{LET,SD\}}   \\

                              & \roomLarge        & 3,232 & PM, RM                   \\
    \bottomrule
    \end{tabular}
    }
    \caption{
    Summary of MAPF model, maps, and agent models used in the additional experiments. $|V|$ is the number of vertices in the maps. For agent priority strategies ($p$), we consider Longer Elapsed Time (LTE) and Shorter Dist (SD).
    }
    \label{tab:exp-setup-add}
\end{table*}

\begin{figure*}[!tp]
    \centering
    \includegraphics[width=1\linewidth]{figs/major_result/oneshotmapf/small_agents/Universal_Algorithm_Legend.pdf}\par\medskip
    \begin{subfigure}{\textwidth}
        \centering
        \includegraphics[width=0.24\textwidth]{figs/major_result/oneshotmapf/small_agents/random-32-32-20_success.pdf}%
        \includegraphics[width=0.24\textwidth]{figs/major_result/oneshotmapf/small_agents/random-32-32-20_runtime.pdf}
        \includegraphics[width=0.24\textwidth]{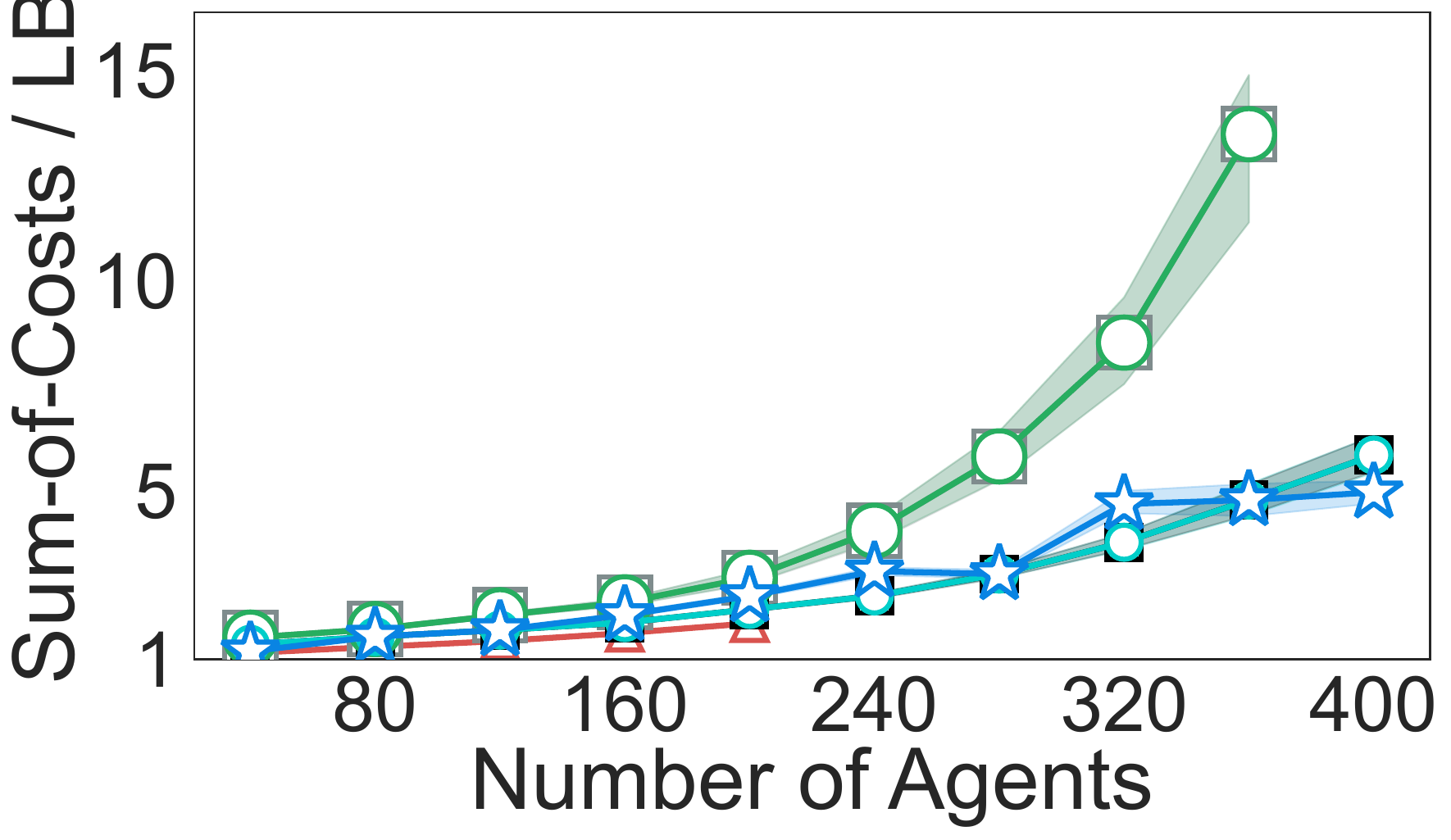}%
        \includegraphics[width=0.24\textwidth]{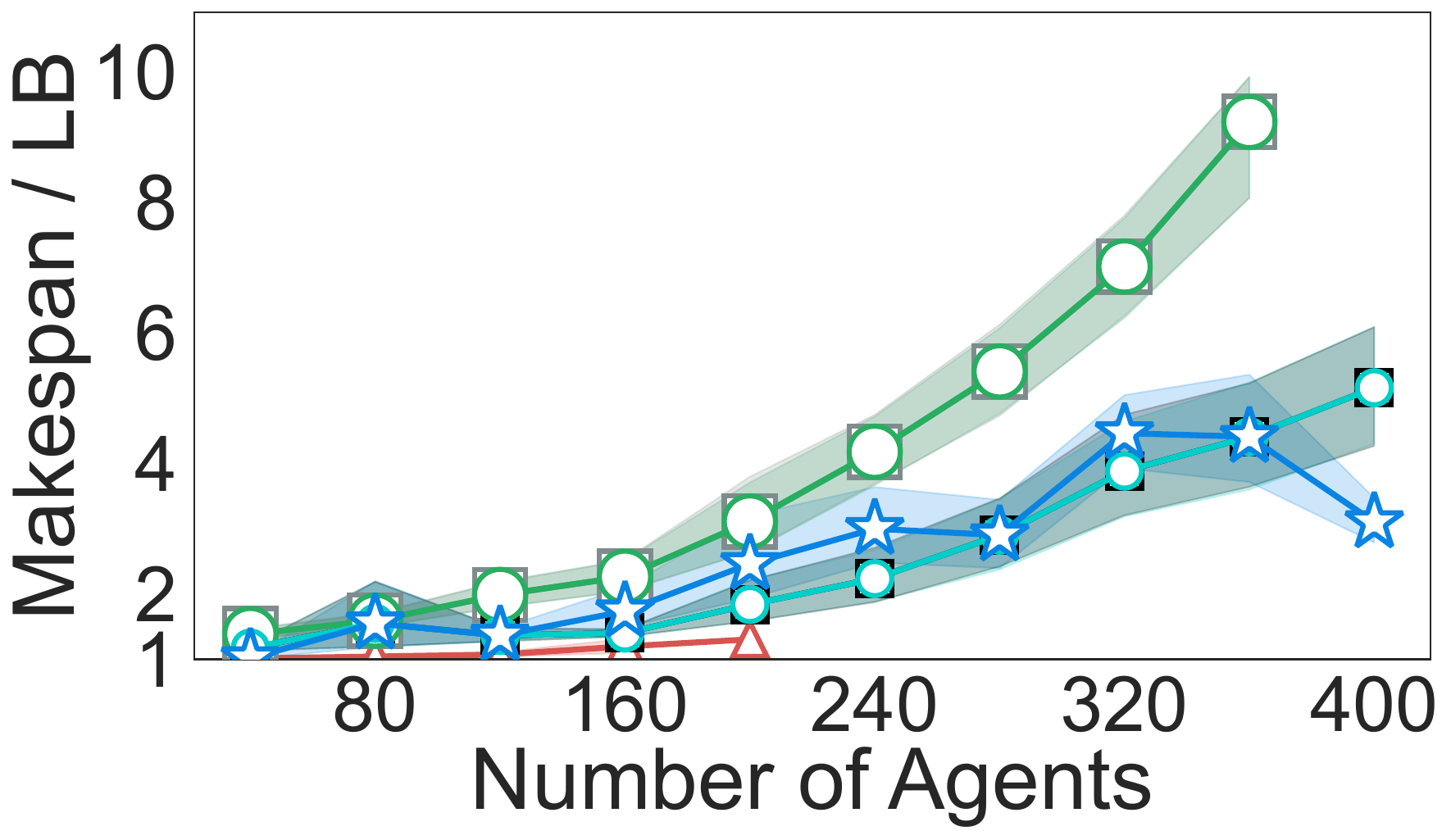}
        \caption{\randomSmall}
        \label{fig:oneshotmapf-small_agents-random_32_32_20-add}
    \end{subfigure}
    \begin{subfigure}{\textwidth}
        \centering
        \includegraphics[width=0.24\textwidth]{figs/major_result/oneshotmapf/small_agents/warehouse-10-20-10-2-1_success.pdf}%
        \includegraphics[width=0.24\textwidth]{figs/major_result/oneshotmapf/small_agents/warehouse-10-20-10-2-1_runtime.pdf}
        \includegraphics[width=0.24\textwidth]{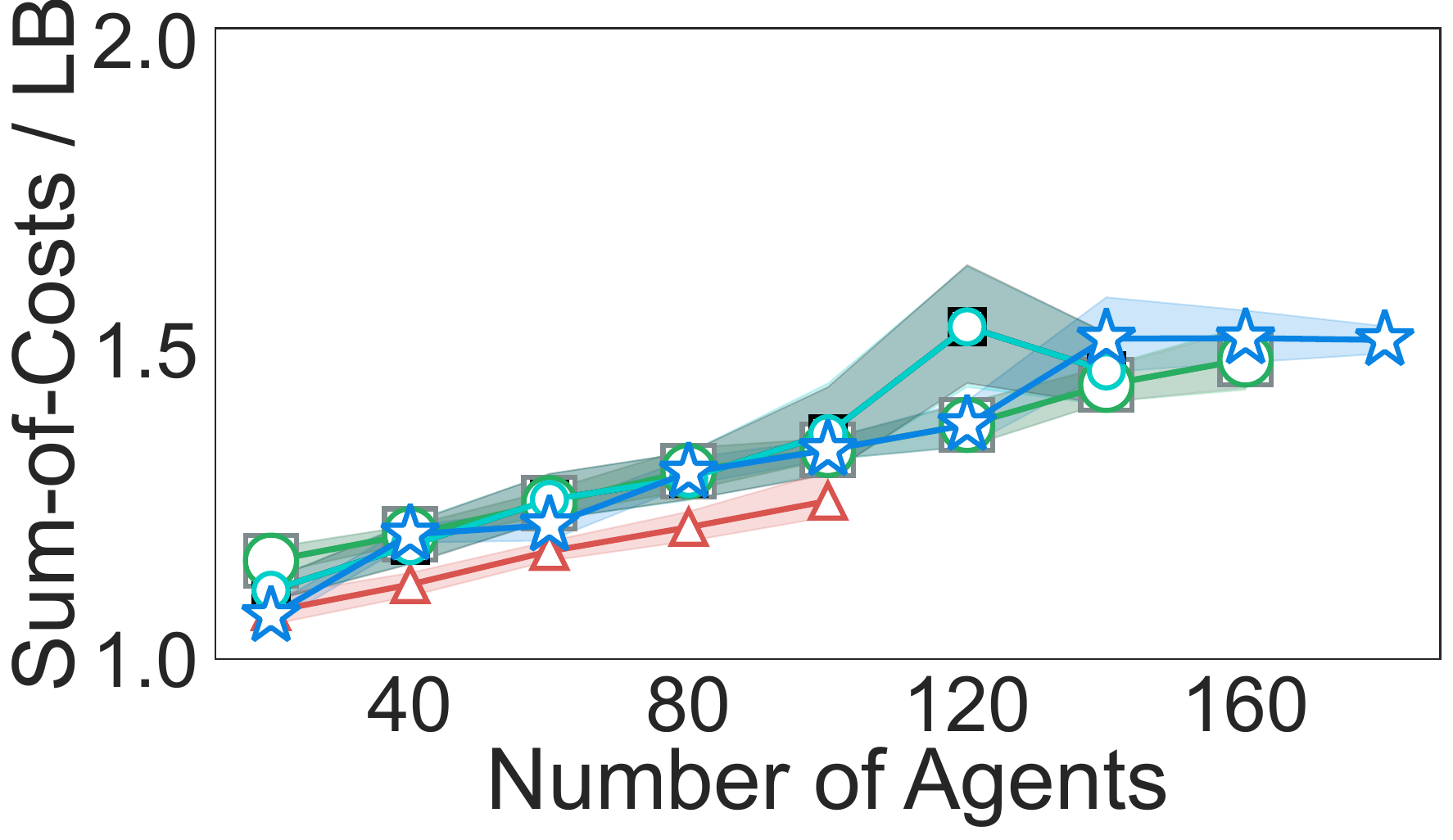}%
        \includegraphics[width=0.24\textwidth]{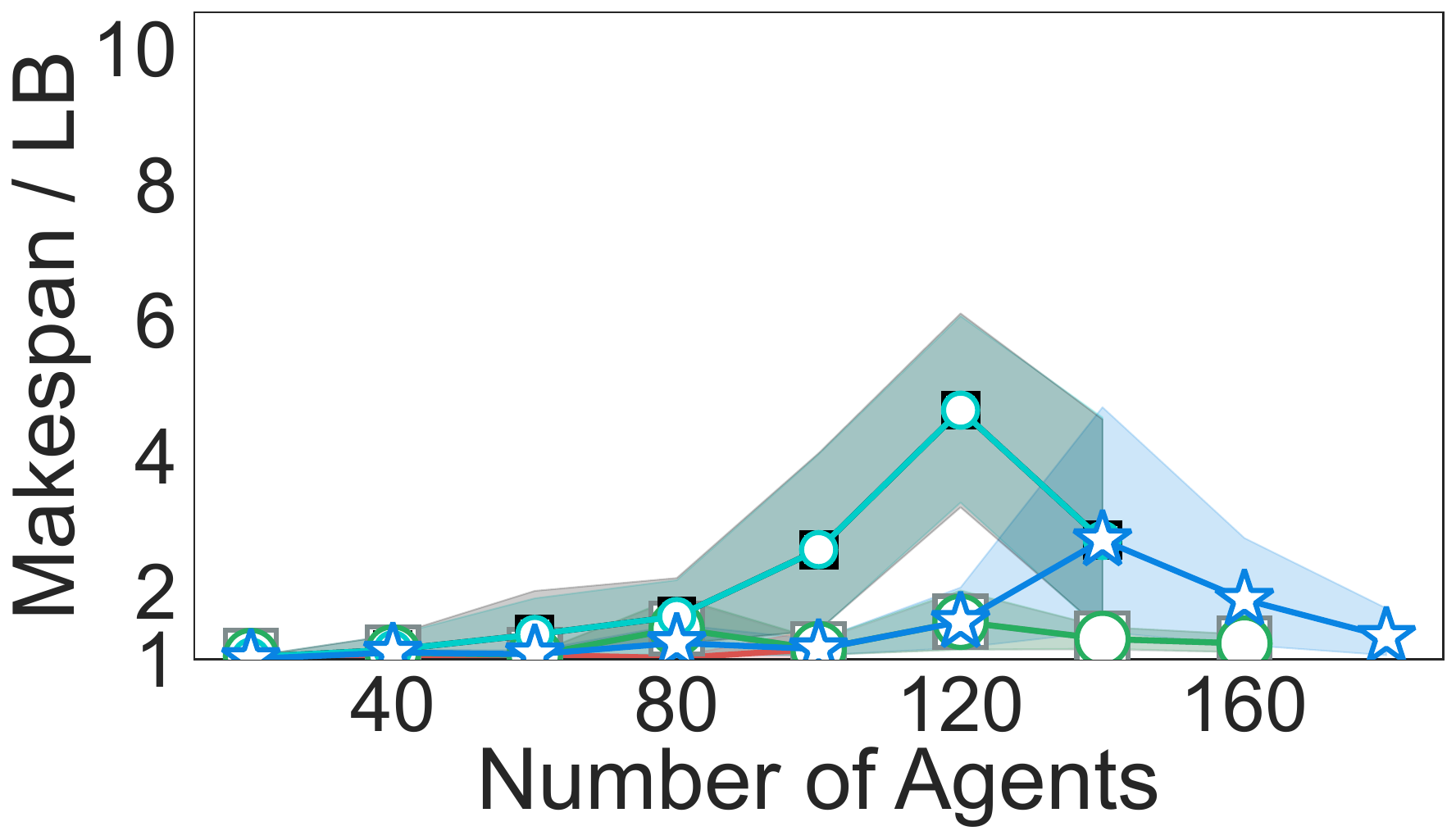}
        \caption{\warehouseXlarge}
        \label{fig:oneshotmapf-small_agents-warehouse_10_20_10_2_1-add}
    \end{subfigure}
    \begin{subfigure}{\textwidth}
        \centering
        \includegraphics[width=0.24\textwidth]{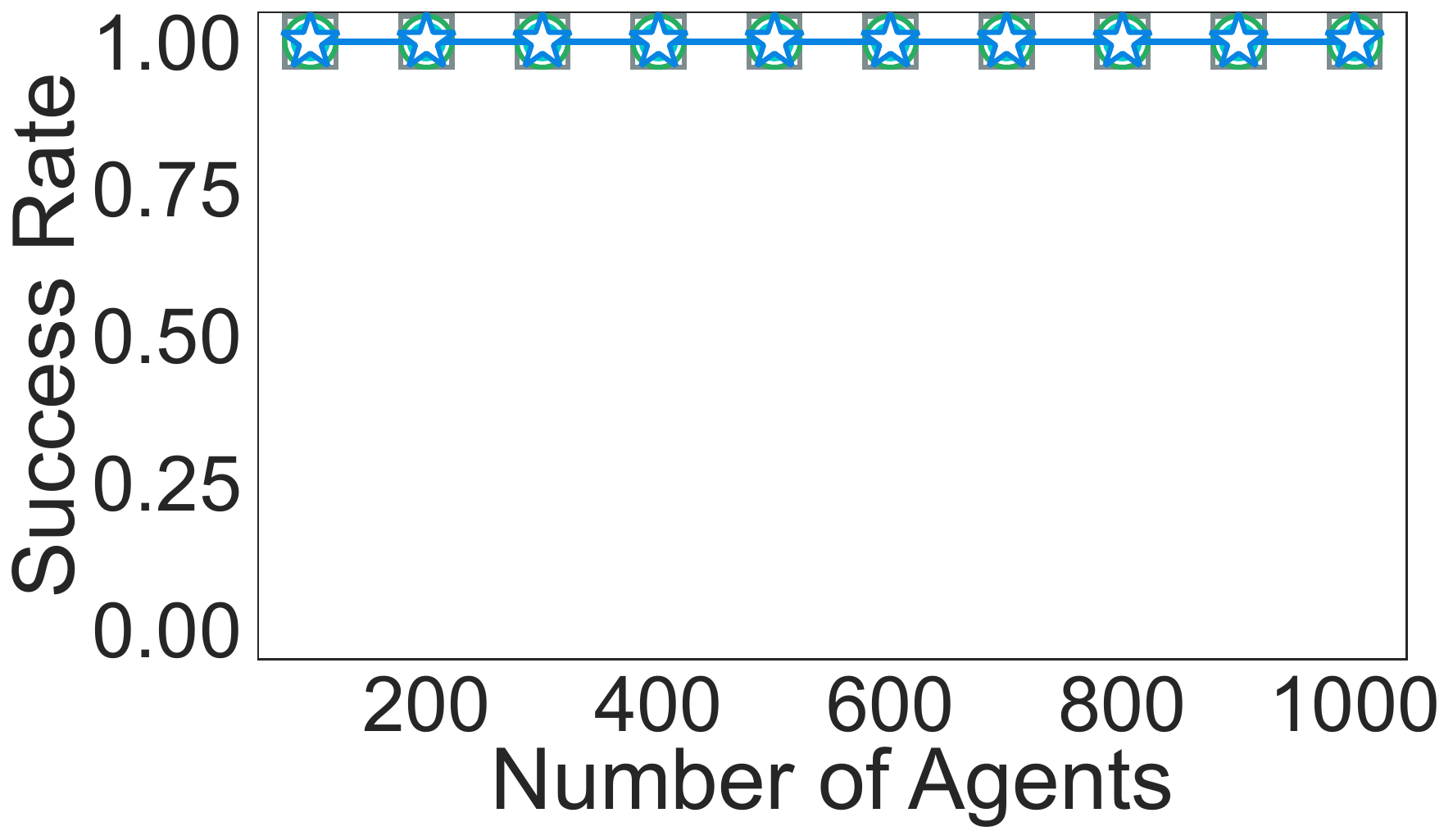}%
        \includegraphics[width=0.24\textwidth]{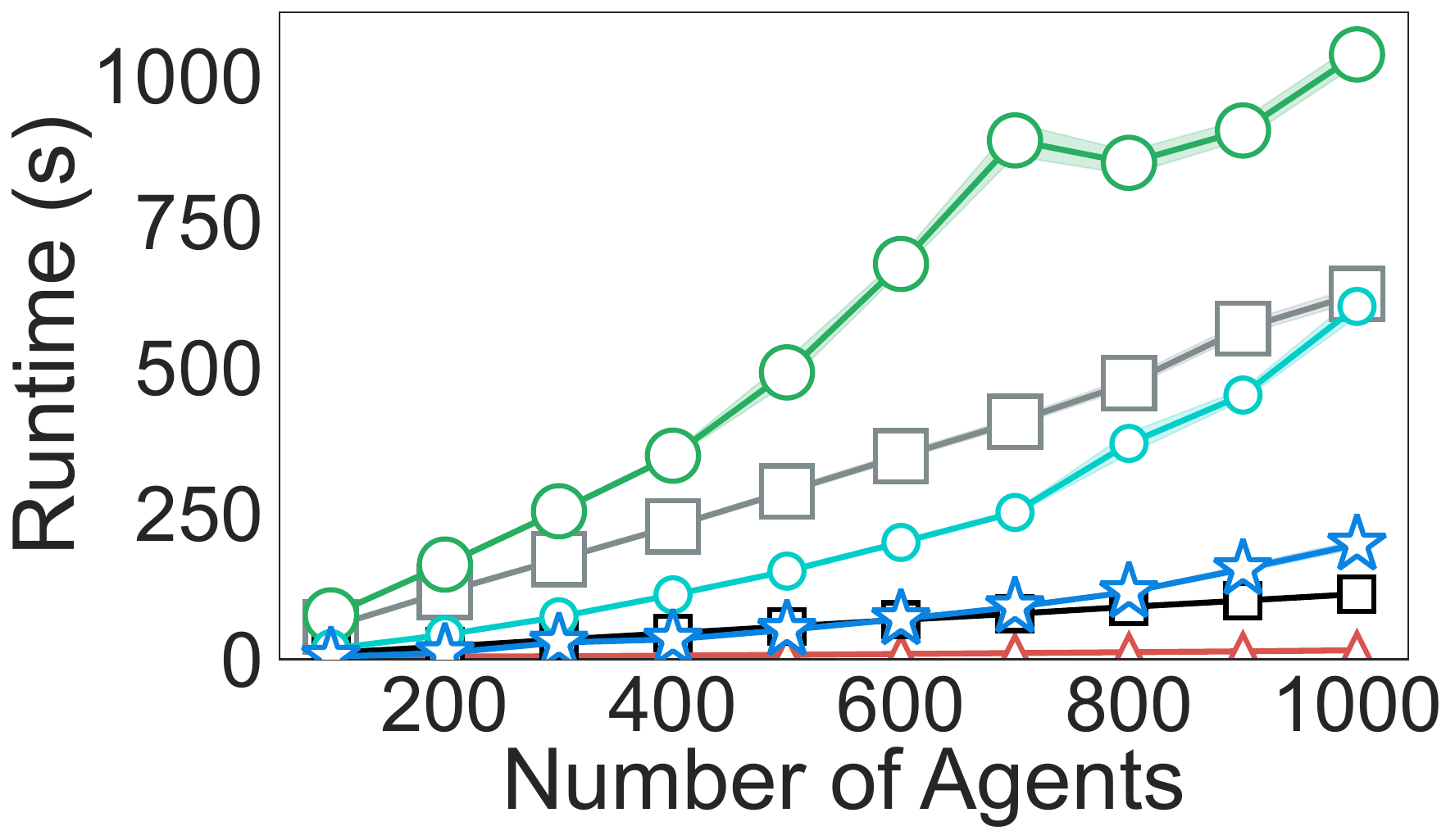}
        \includegraphics[width=0.24\textwidth]{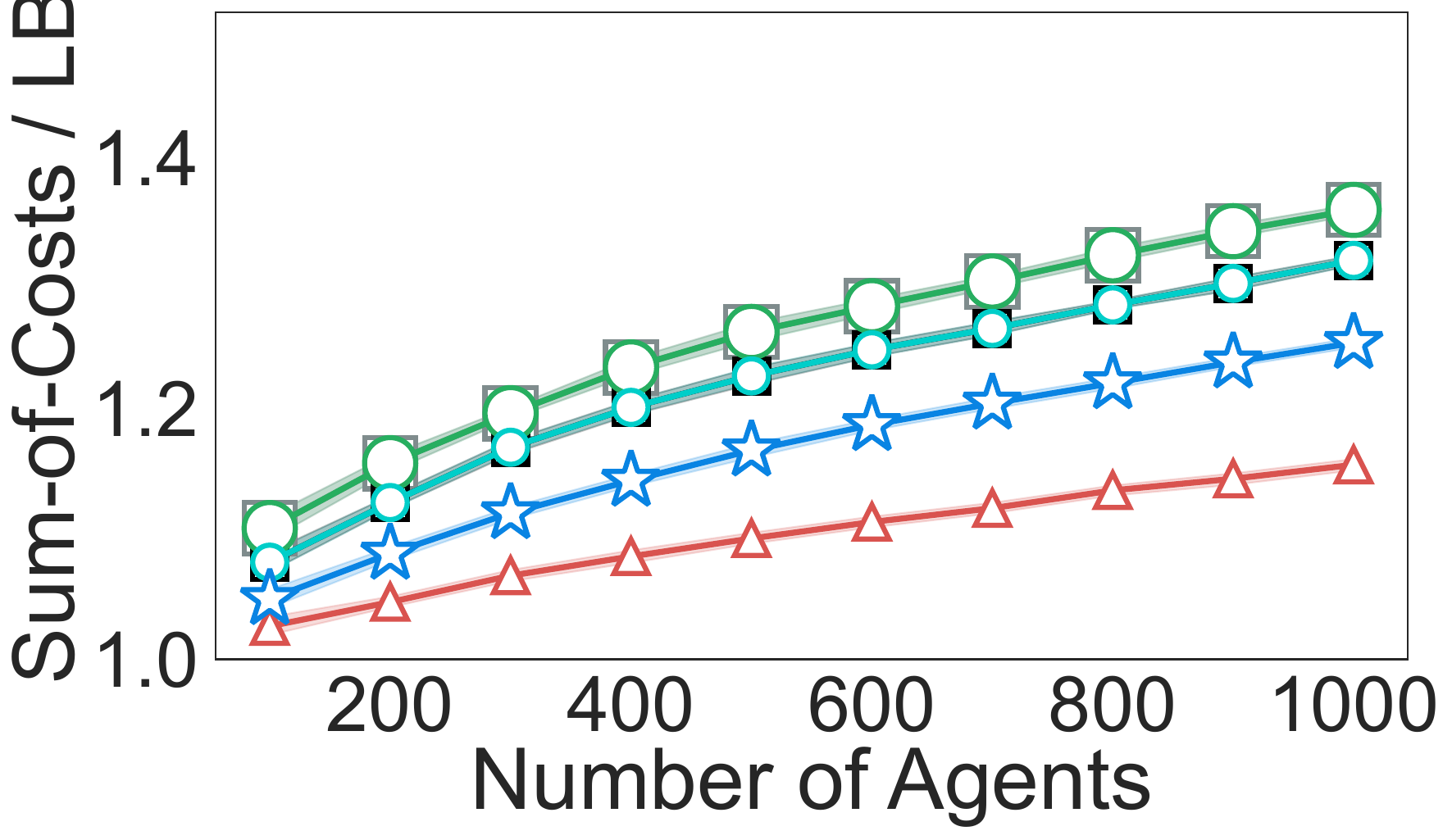}%
        \includegraphics[width=0.24\textwidth]{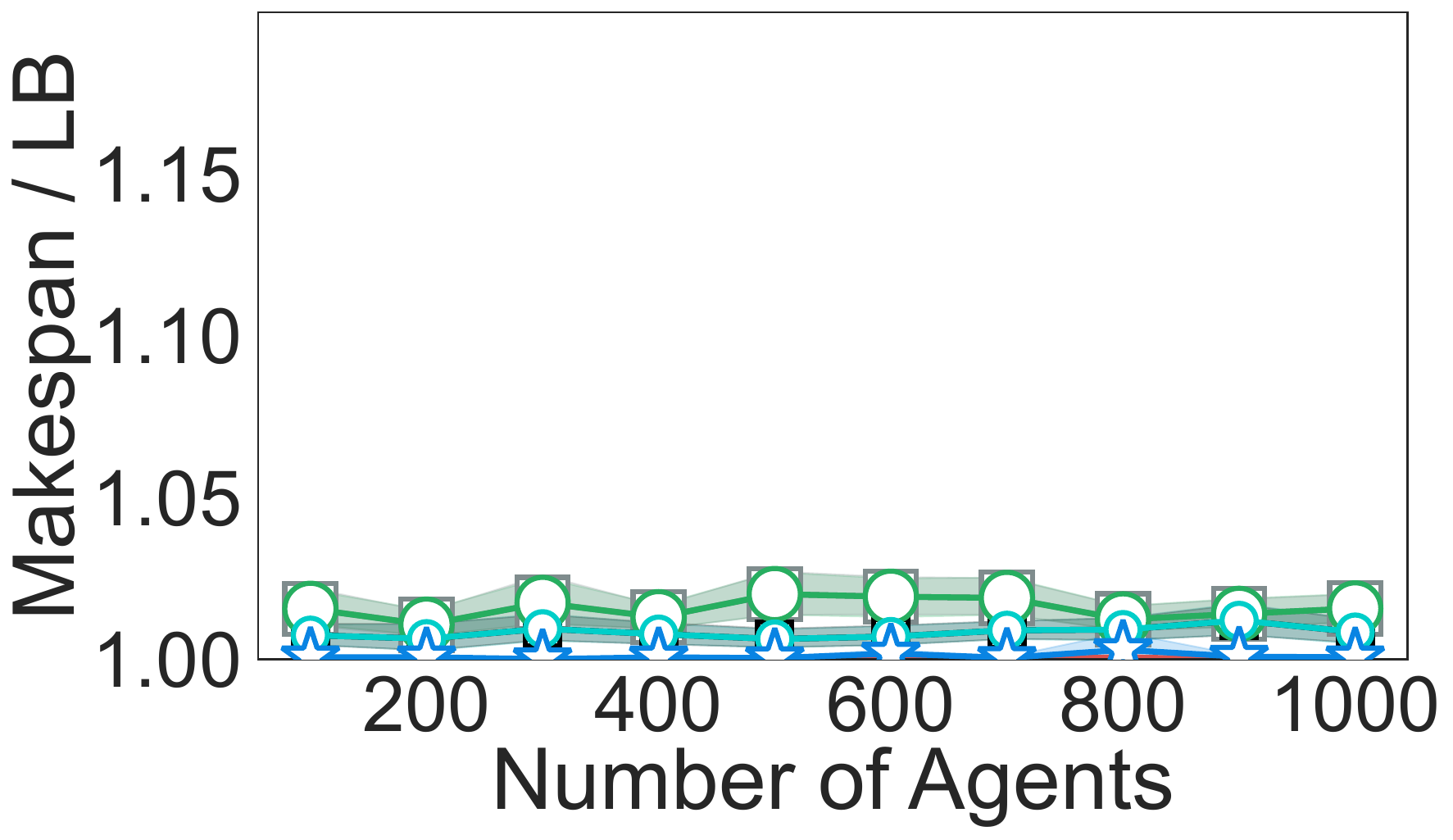}
        \caption{\paris}
        \label{fig:oneshotmapf-small_agents-Paris_1_256-add}
    \end{subfigure}
    \begin{subfigure}{\textwidth}
        \centering
        \includegraphics[width=0.24\textwidth]{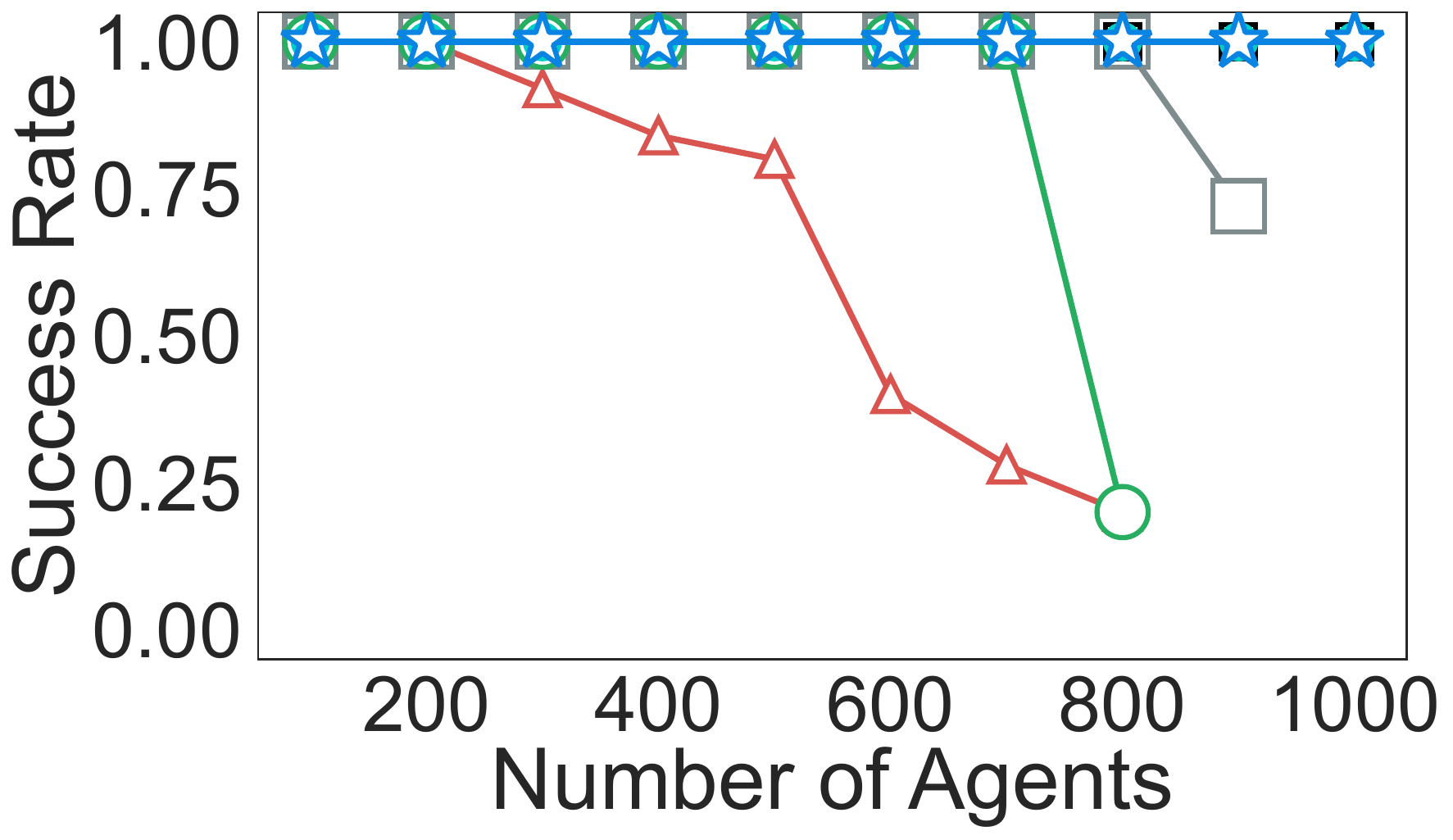}%
        \includegraphics[width=0.24\textwidth]{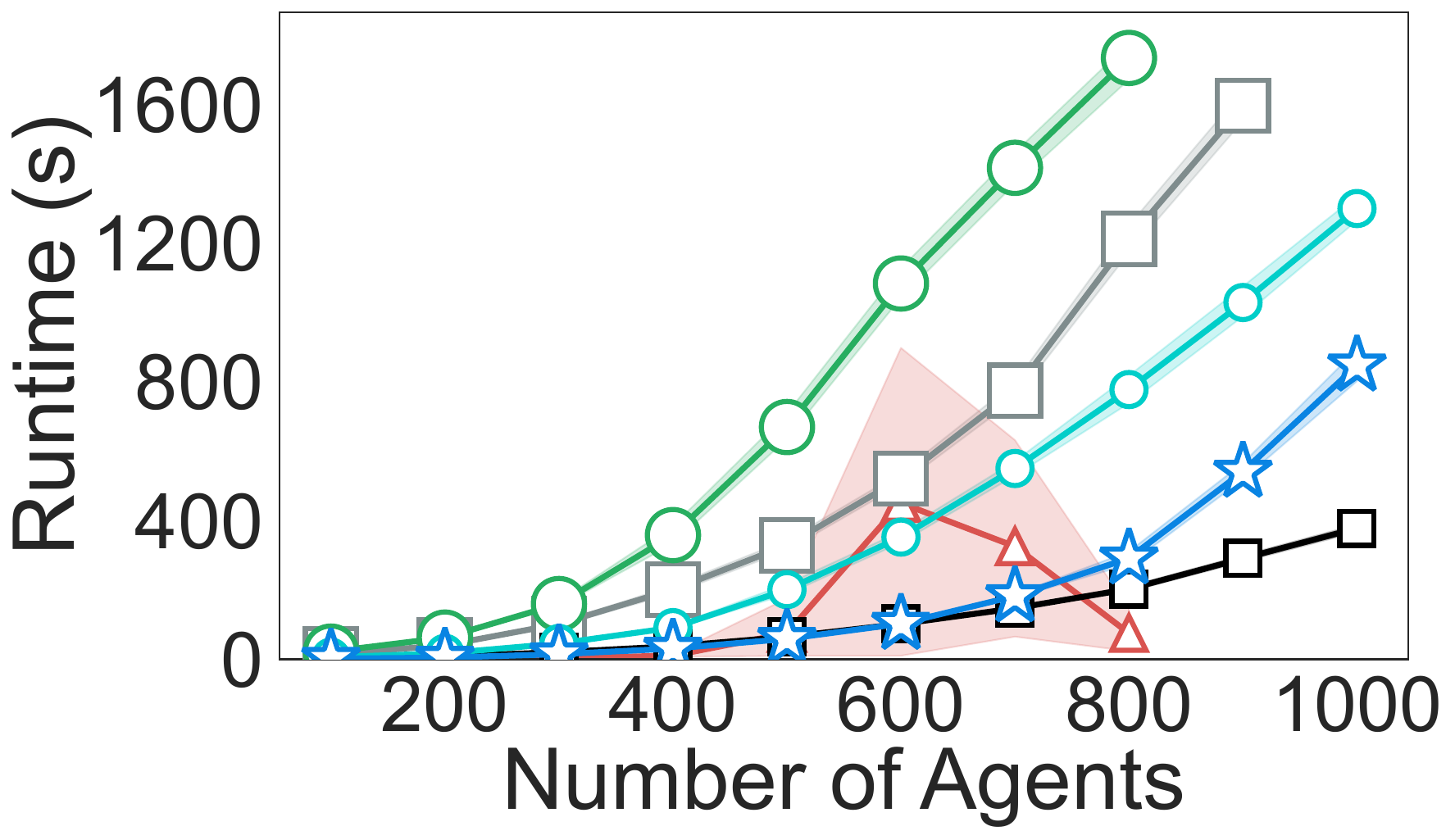}
        \includegraphics[width=0.24\textwidth]{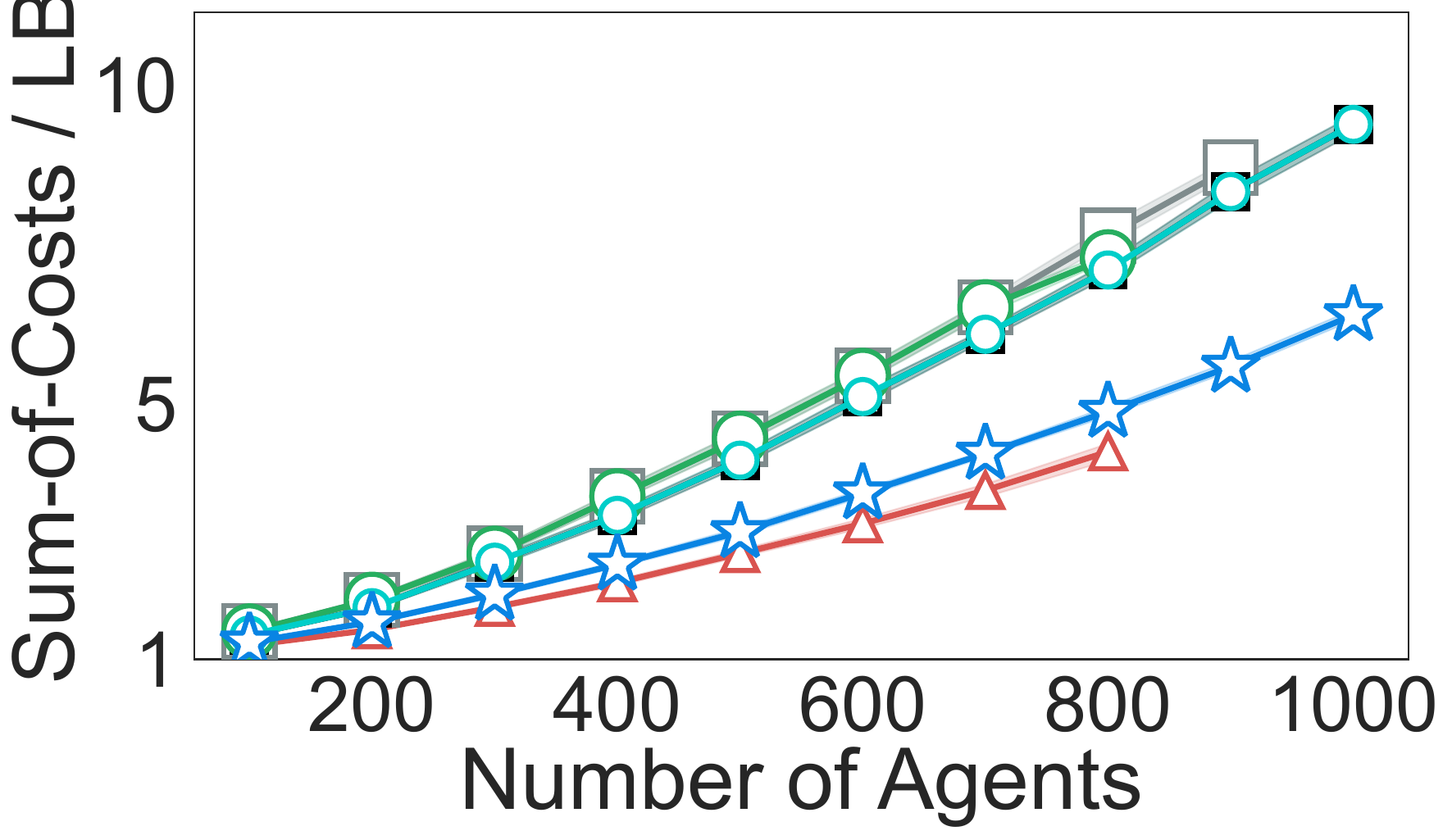}%
        \includegraphics[width=0.24\textwidth]{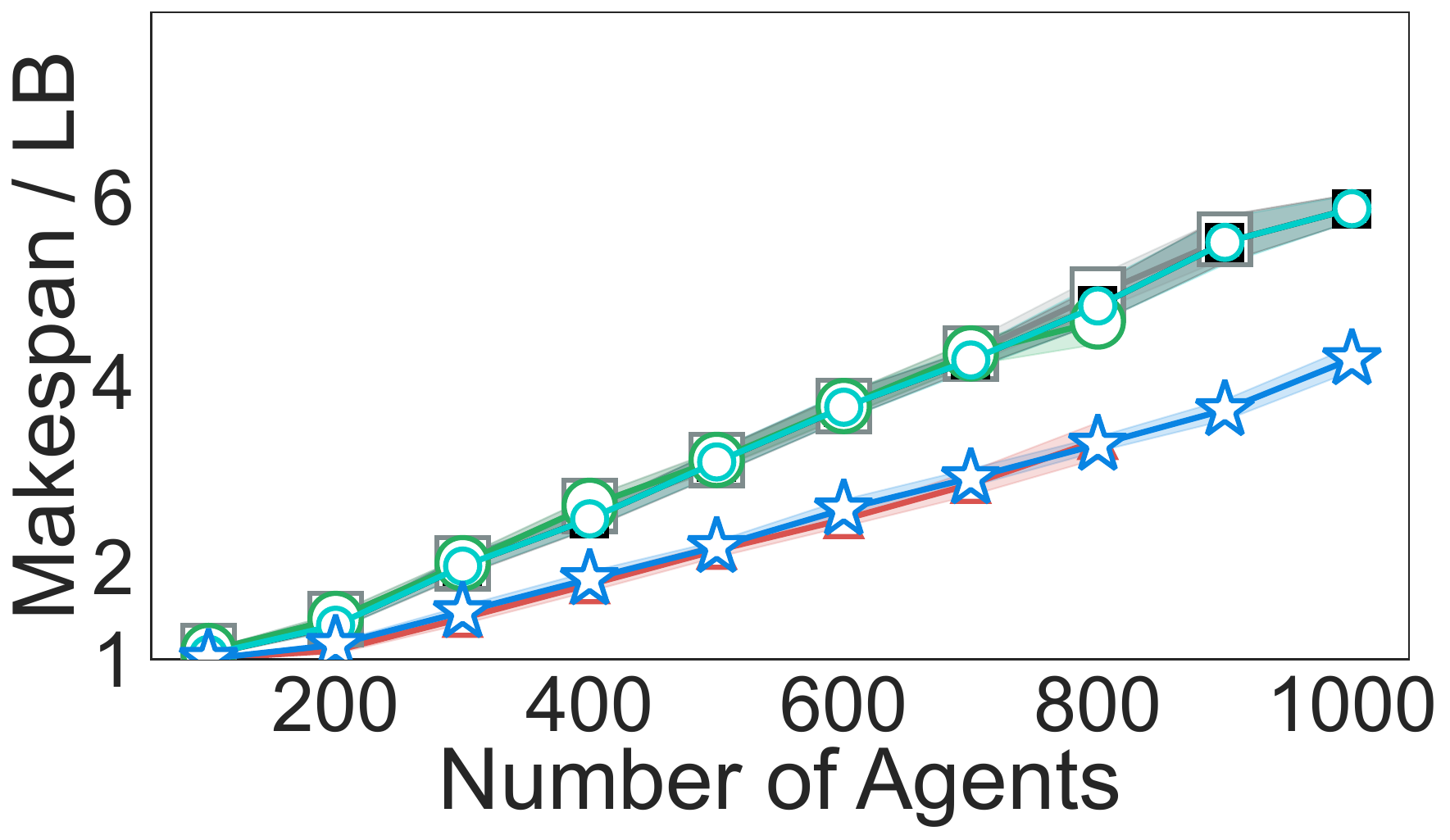}
        \caption{\roomLarge}
        \label{fig:oneshotmapf-small_agents-room_64_64_8-add}
    \end{subfigure}
    \hfill

    \caption{Results of one-shot MAPF with PM agents.}
    \label{fig:major-result-oneshotmapf-smallagents-add}
    \par\vspace{-\abovecaptionskip}
\end{figure*}

\begin{figure*}[!t]
    \centering
    \includegraphics[width=1\linewidth]{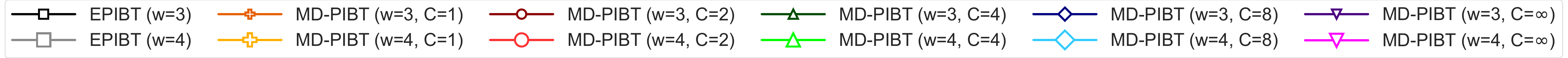}\par\medskip
    \begin{subfigure}[t]{0.24\textwidth}
        \centering
        \includegraphics[width=\linewidth]{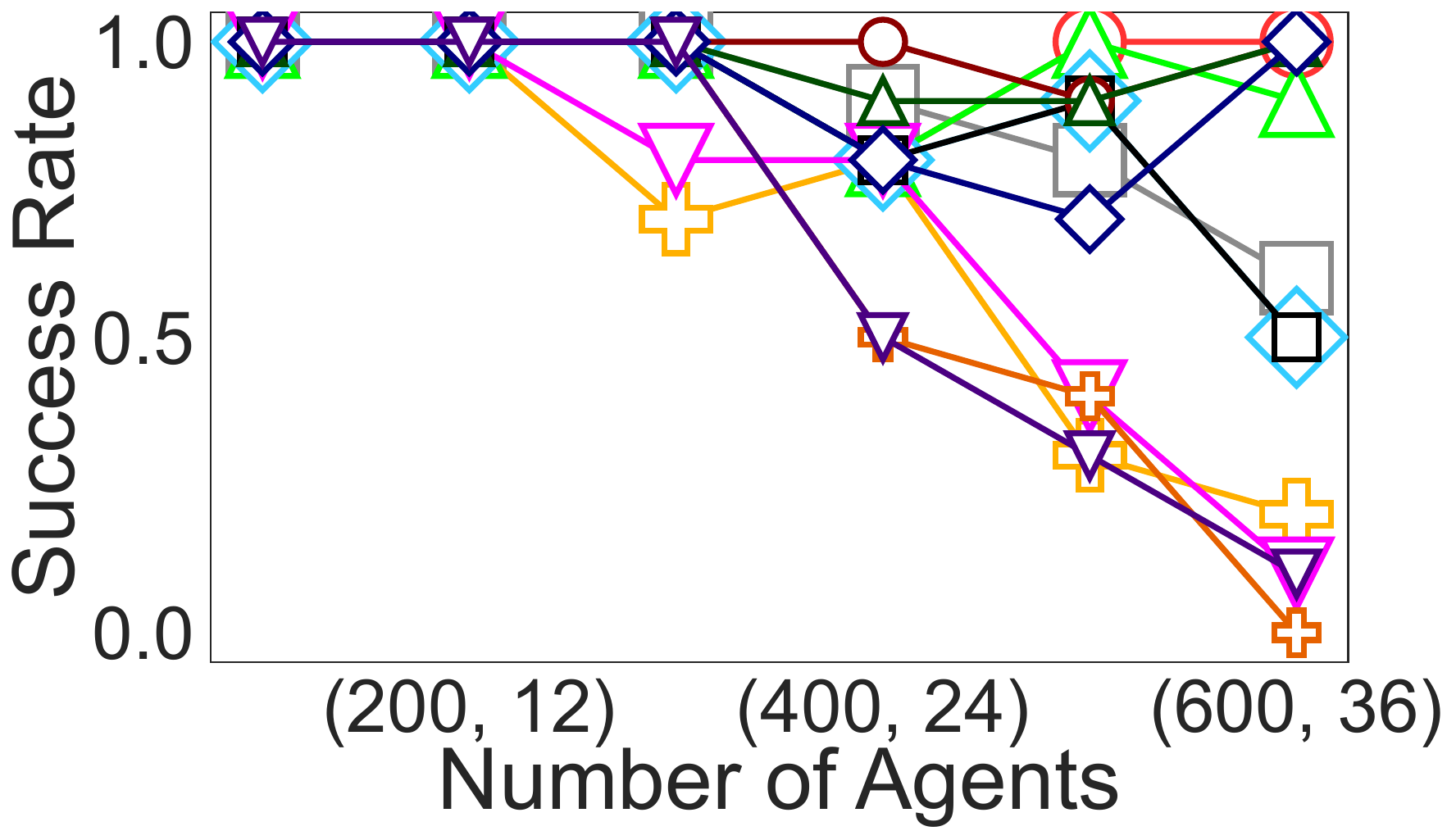}
    \end{subfigure}\hfill
    \begin{subfigure}[t]{0.24\textwidth}
        \centering
        \includegraphics[width=\linewidth]{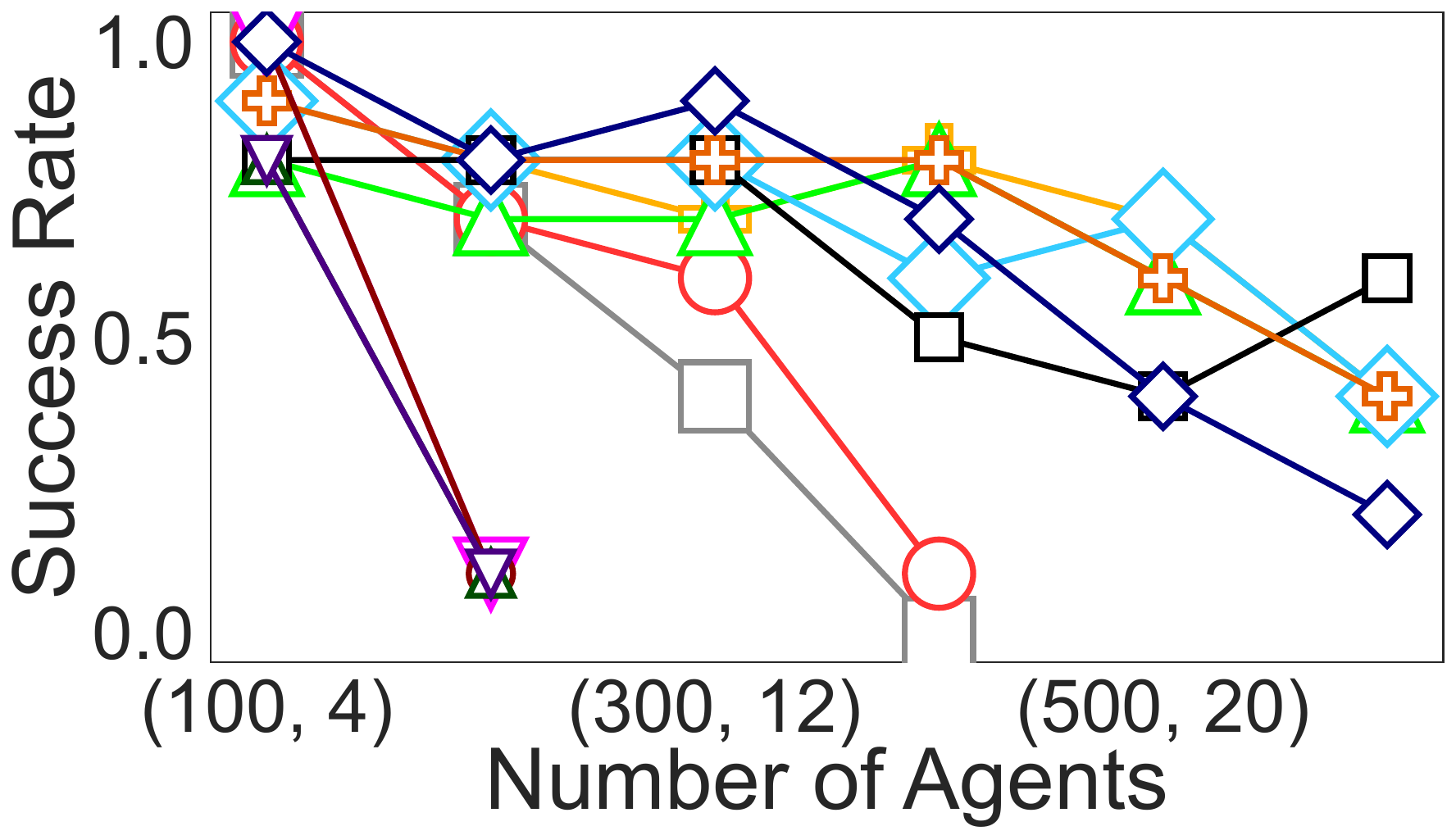}
    \end{subfigure}\hfill
    \begin{subfigure}[t]{0.24\textwidth}
        \centering
        \includegraphics[width=\linewidth]{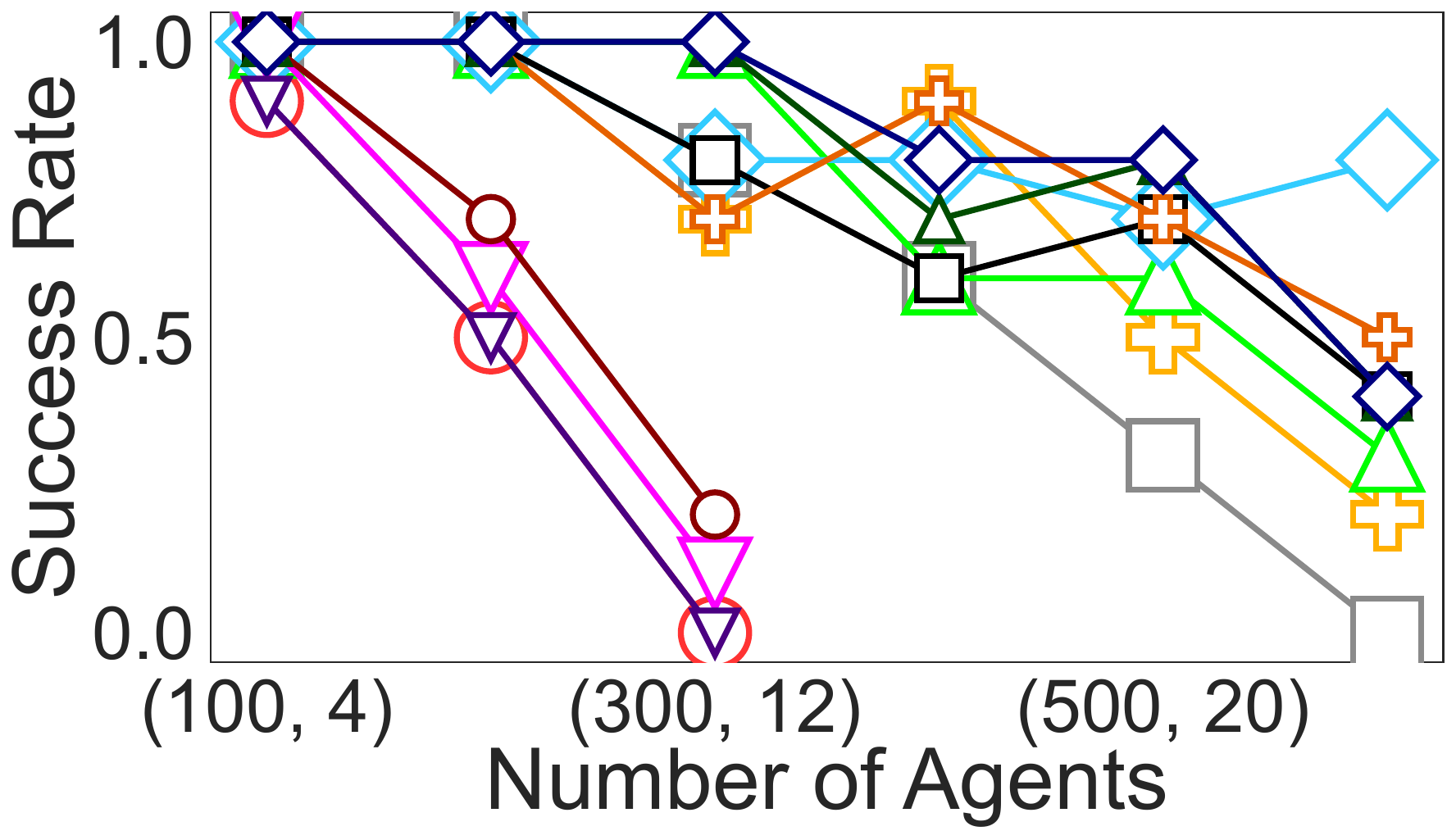}
    \end{subfigure}\hfill
    \begin{subfigure}[t]{0.24\textwidth}
        \centering
        \includegraphics[width=\linewidth]{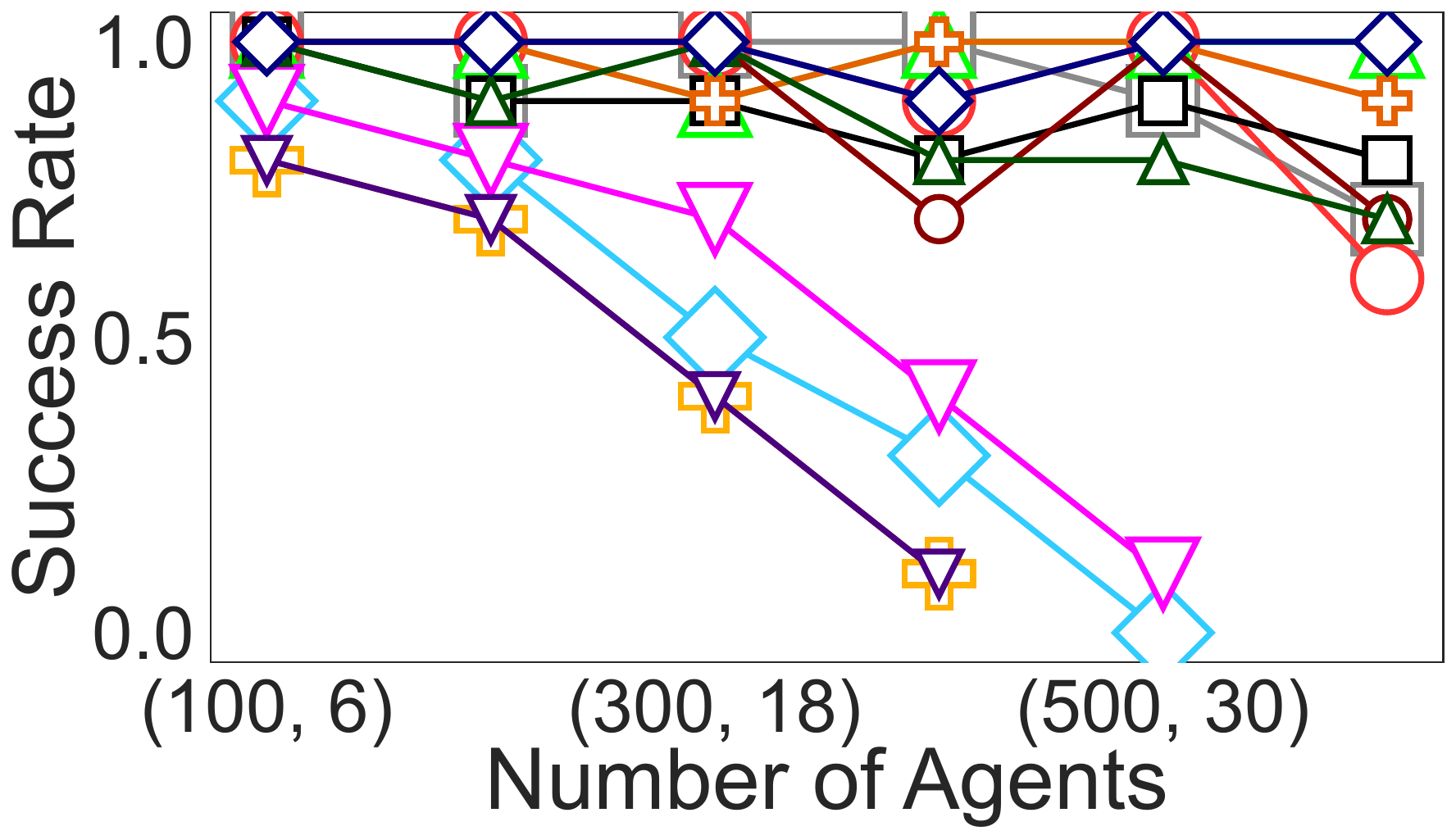}
    \end{subfigure}

    \begin{subfigure}[t]{0.24\textwidth}
        \centering
        \includegraphics[width=\linewidth]{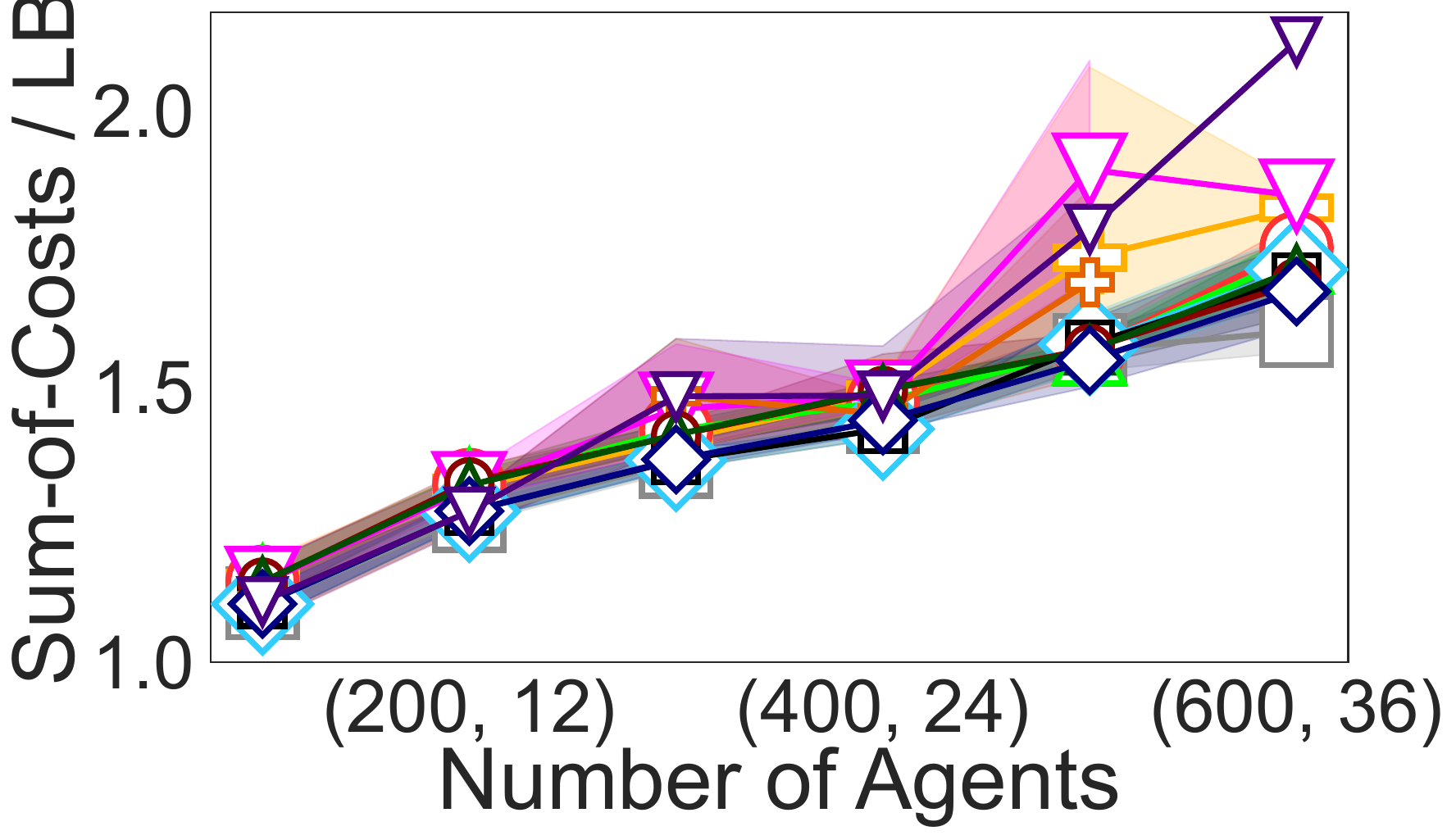}
    \end{subfigure}\hfill
    \begin{subfigure}[t]{0.24\textwidth}
        \centering
        \includegraphics[width=\linewidth]{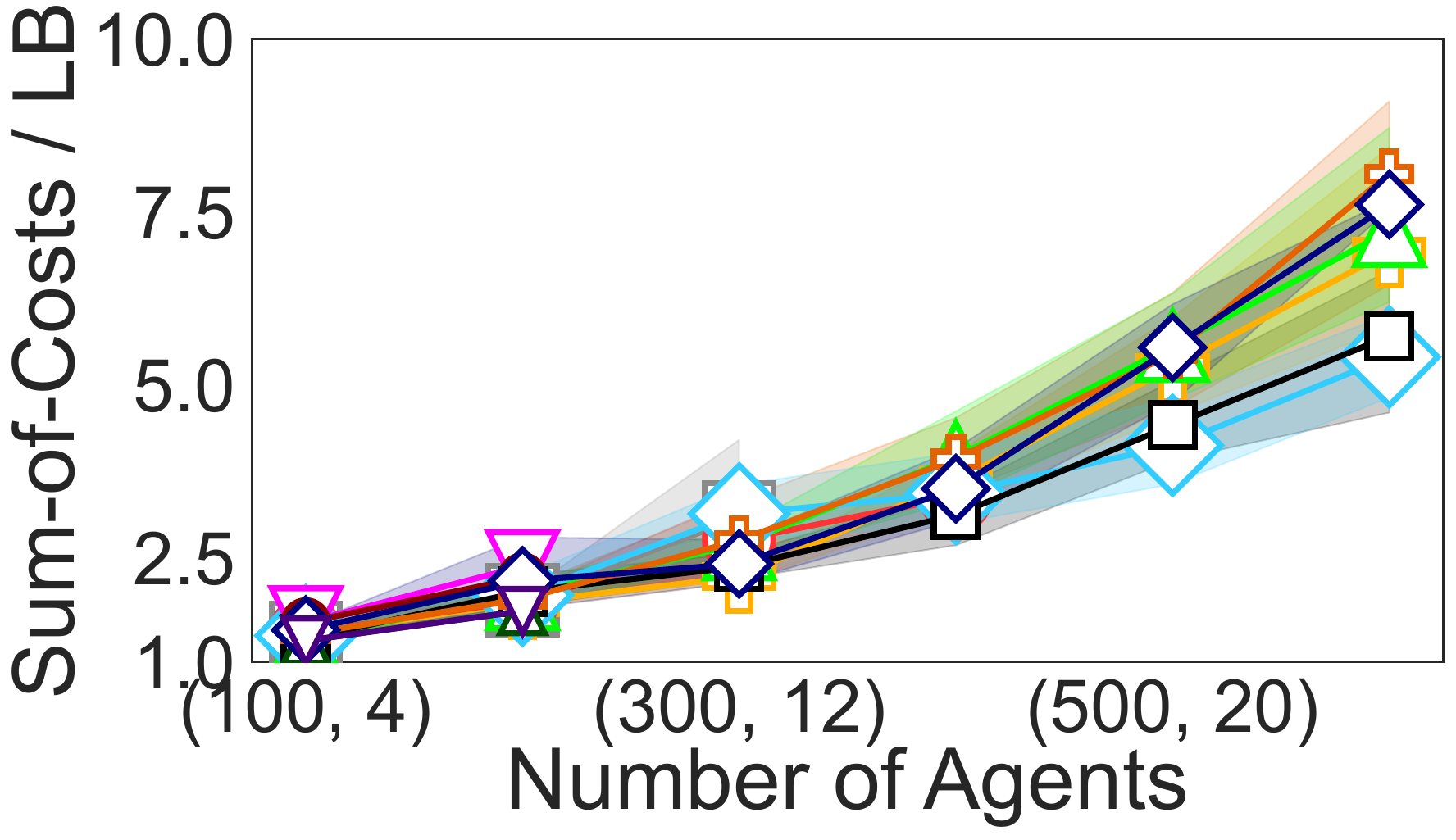}
    \end{subfigure}\hfill
    \begin{subfigure}[t]{0.24\textwidth}
        \centering
        \includegraphics[width=\linewidth]{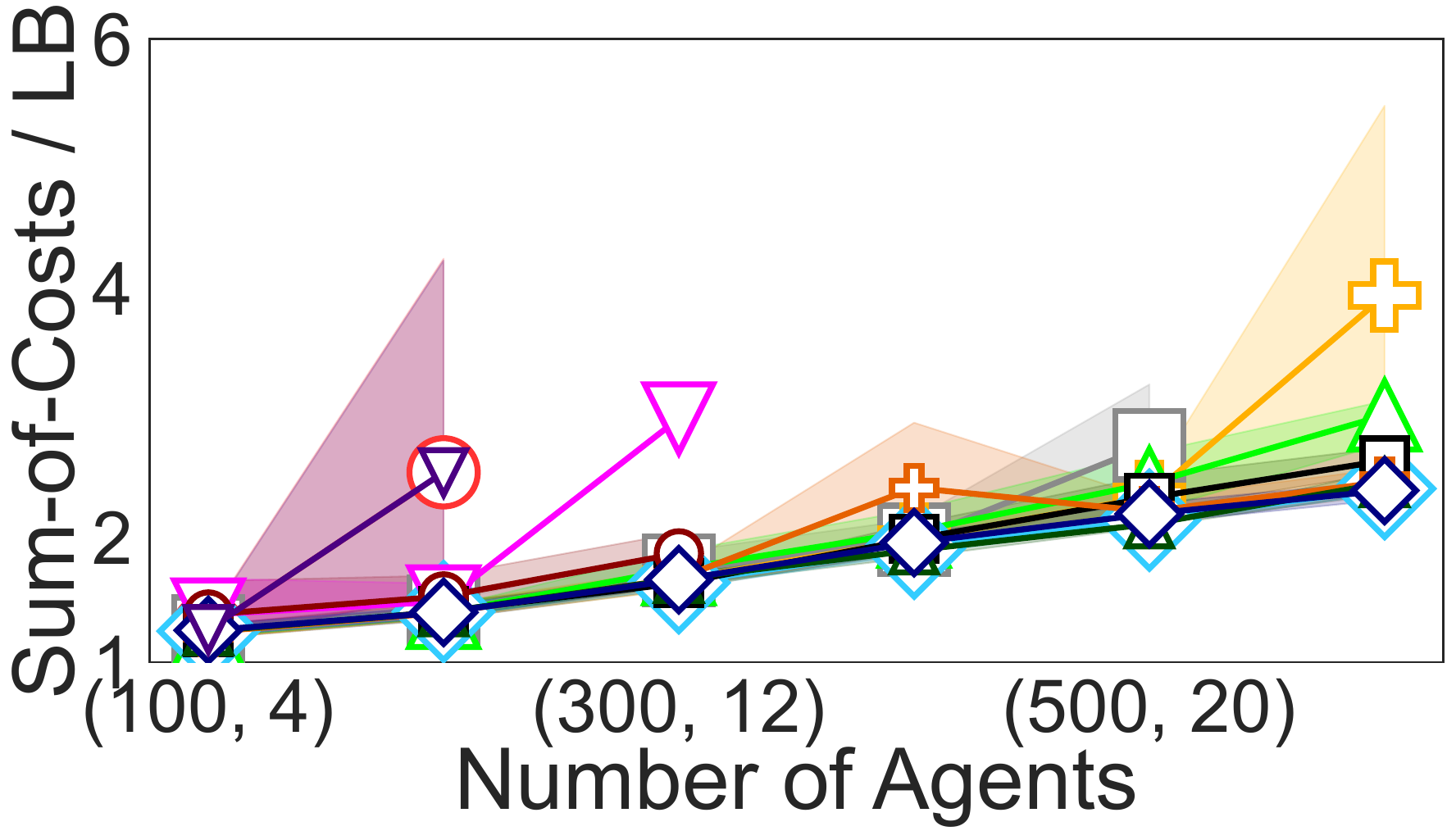}
    \end{subfigure}\hfill
    \begin{subfigure}[t]{0.24\textwidth}
        \centering
        \includegraphics[width=\linewidth]{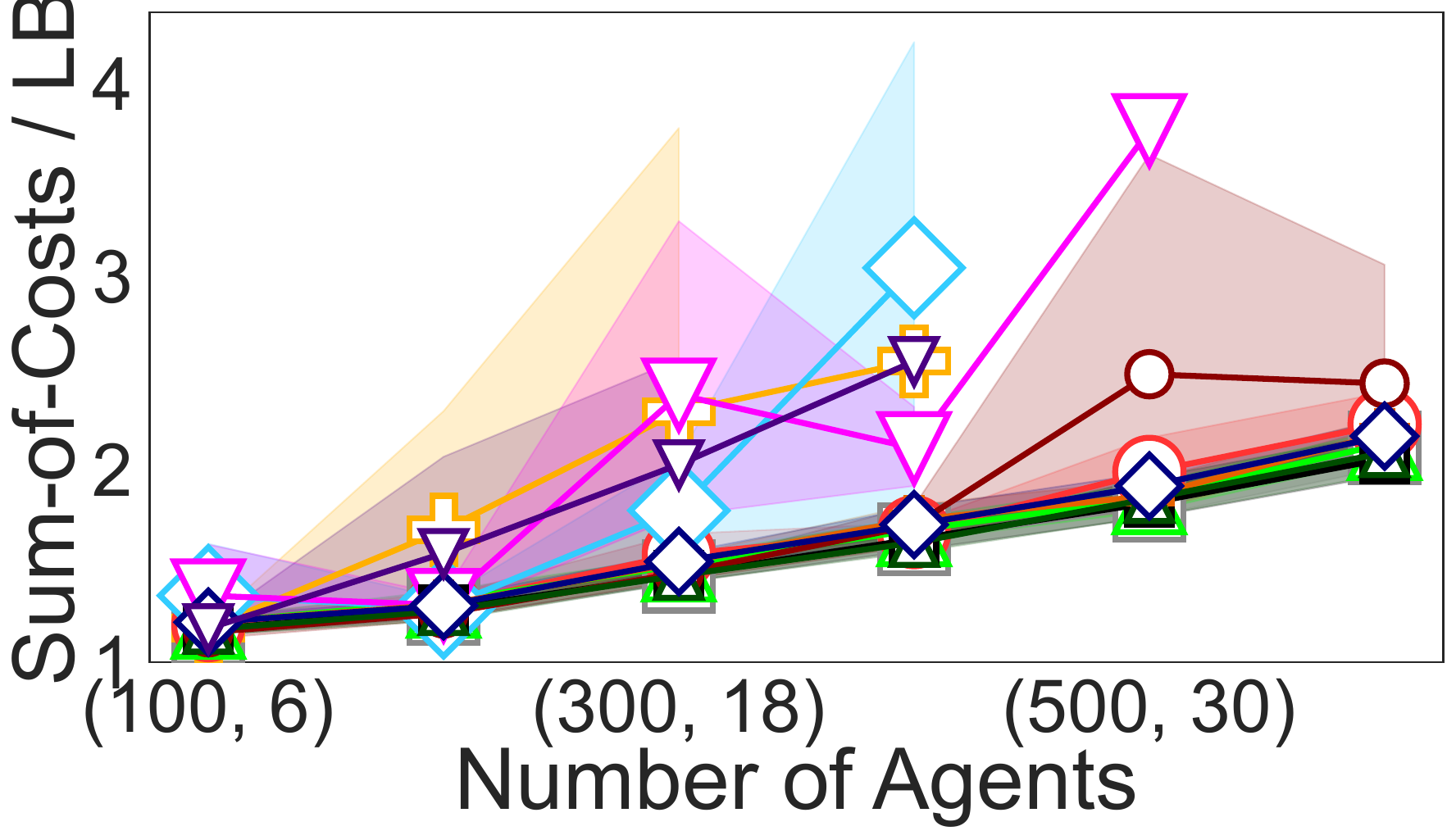}
    \end{subfigure}
    
    \begin{subfigure}[t]{0.24\textwidth}
        \centering
        \includegraphics[width=\linewidth]{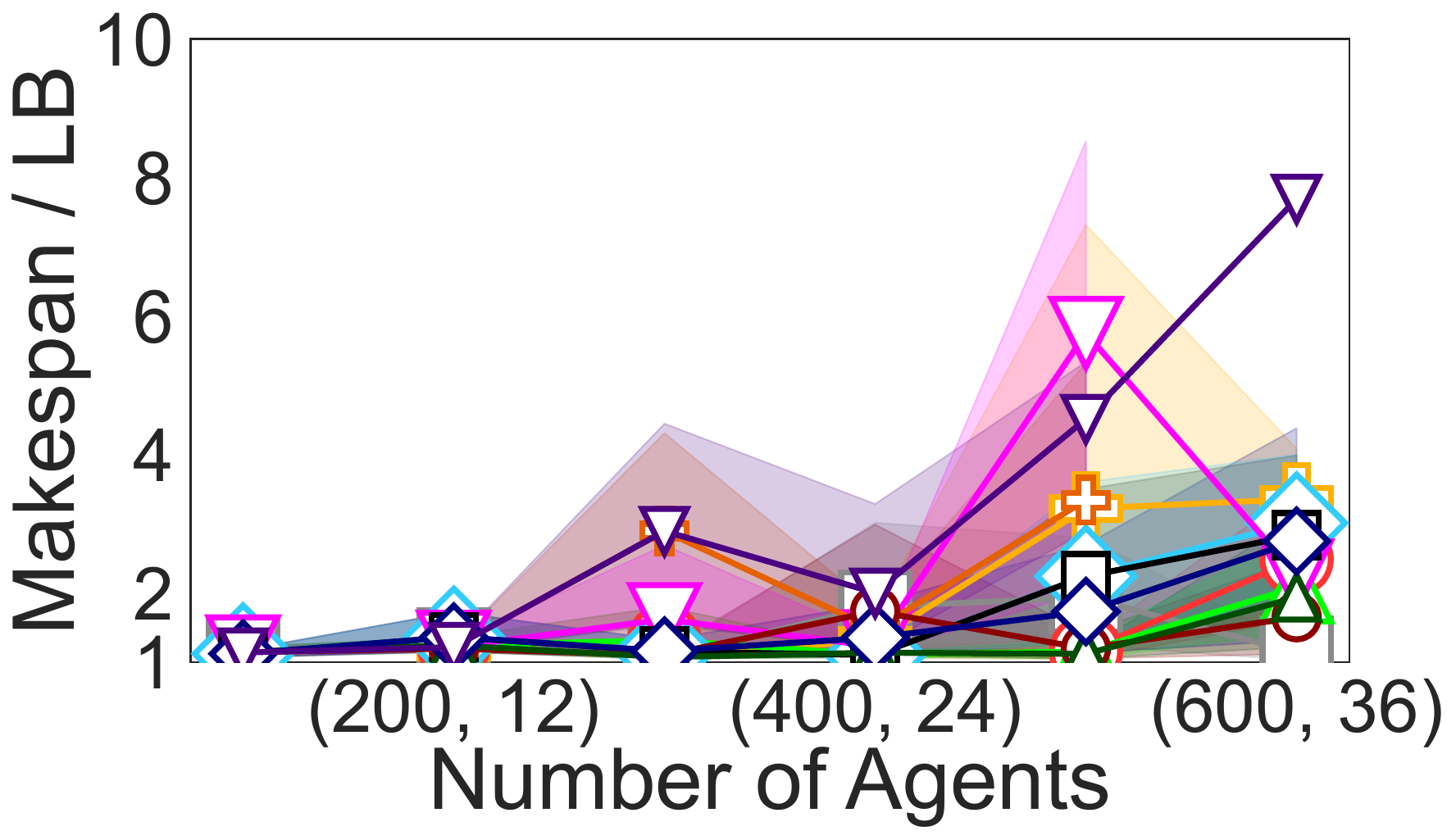}
    \end{subfigure}\hfill
    \begin{subfigure}[t]{0.24\textwidth}
        \centering
        \includegraphics[width=\linewidth]{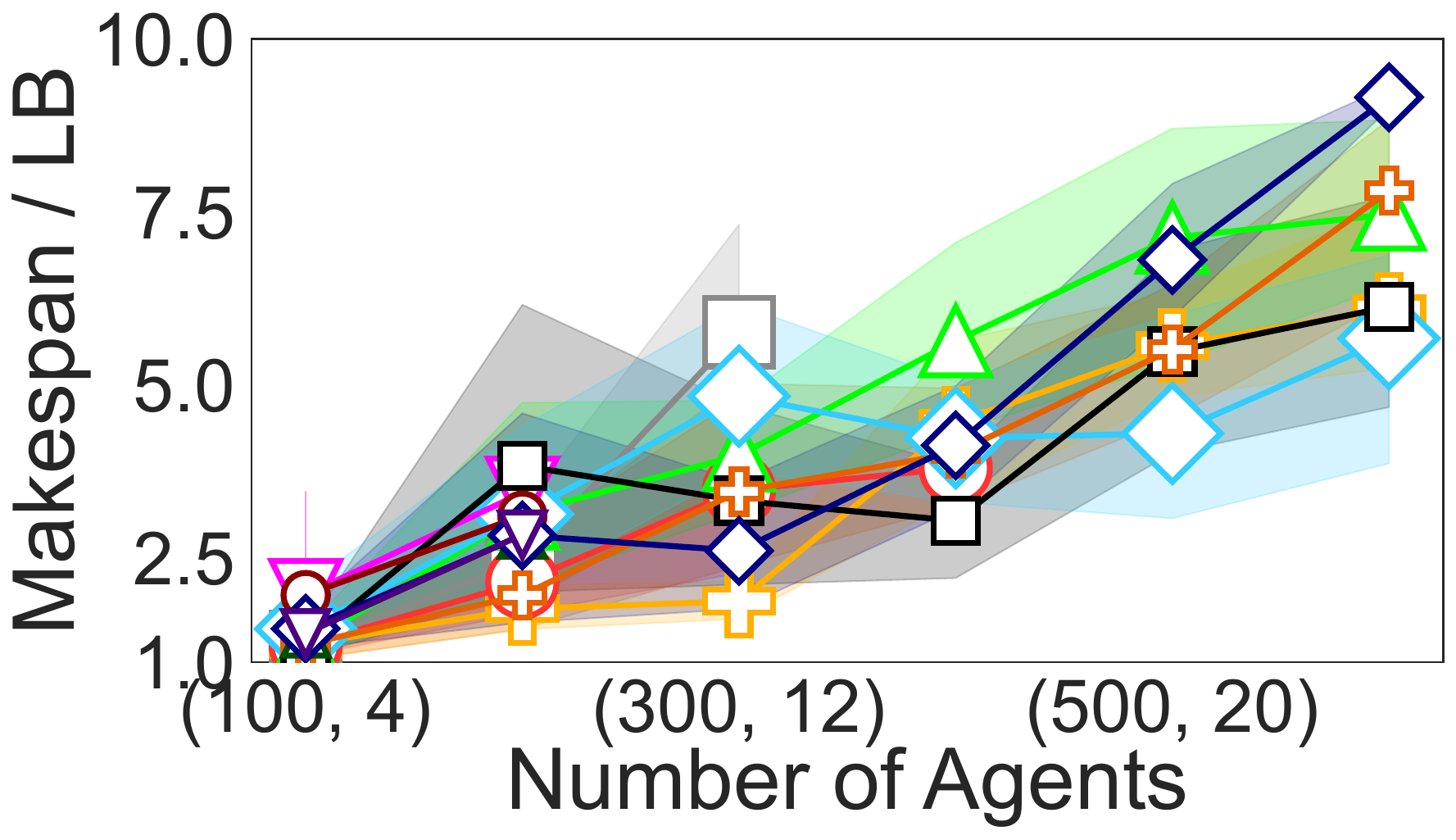}
    \end{subfigure}\hfill
    \begin{subfigure}[t]{0.24\textwidth}
        \centering
        \includegraphics[width=\linewidth]{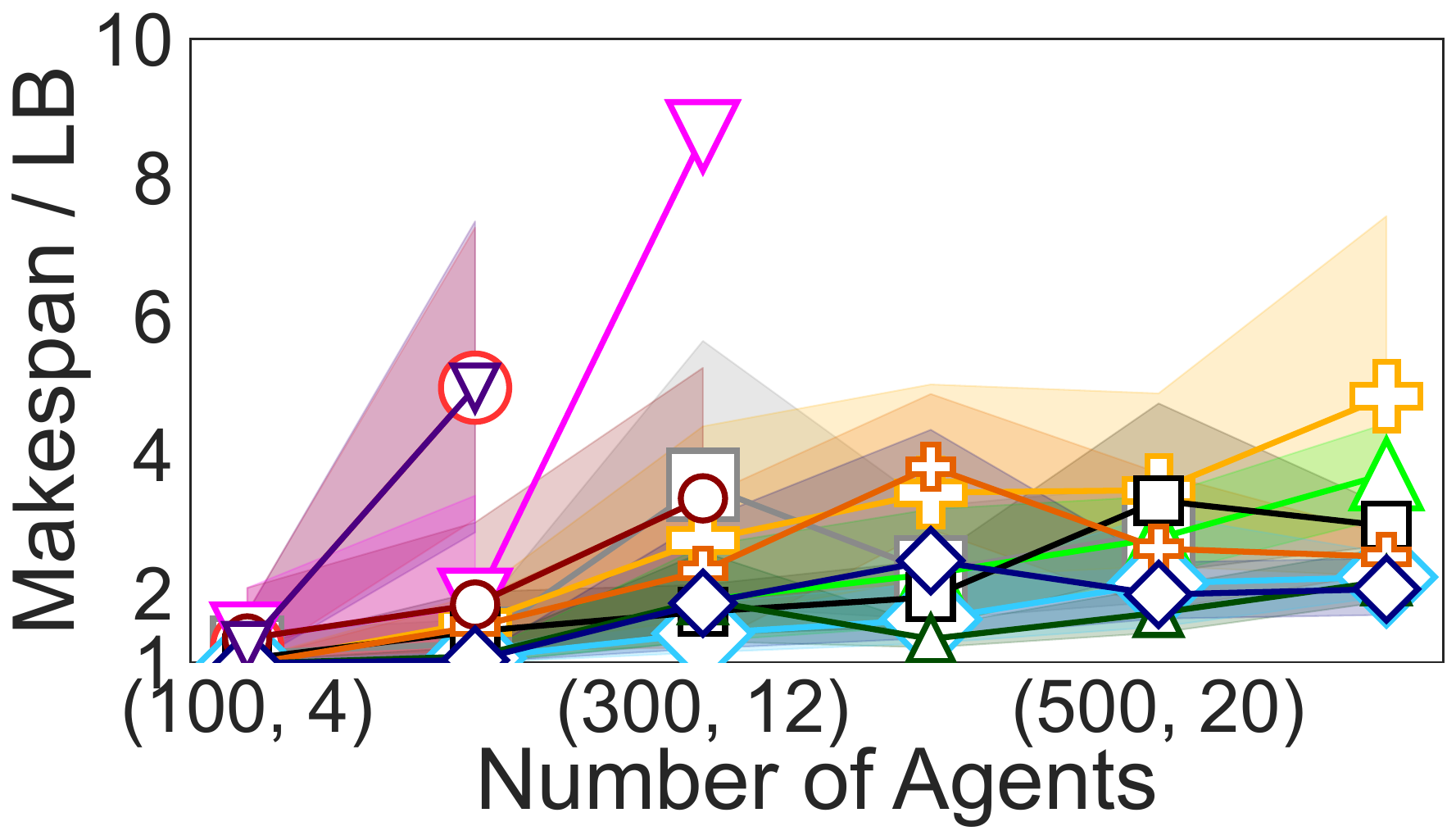}
    \end{subfigure}\hfill
    \begin{subfigure}[t]{0.24\textwidth}
        \centering
        \includegraphics[width=\linewidth]{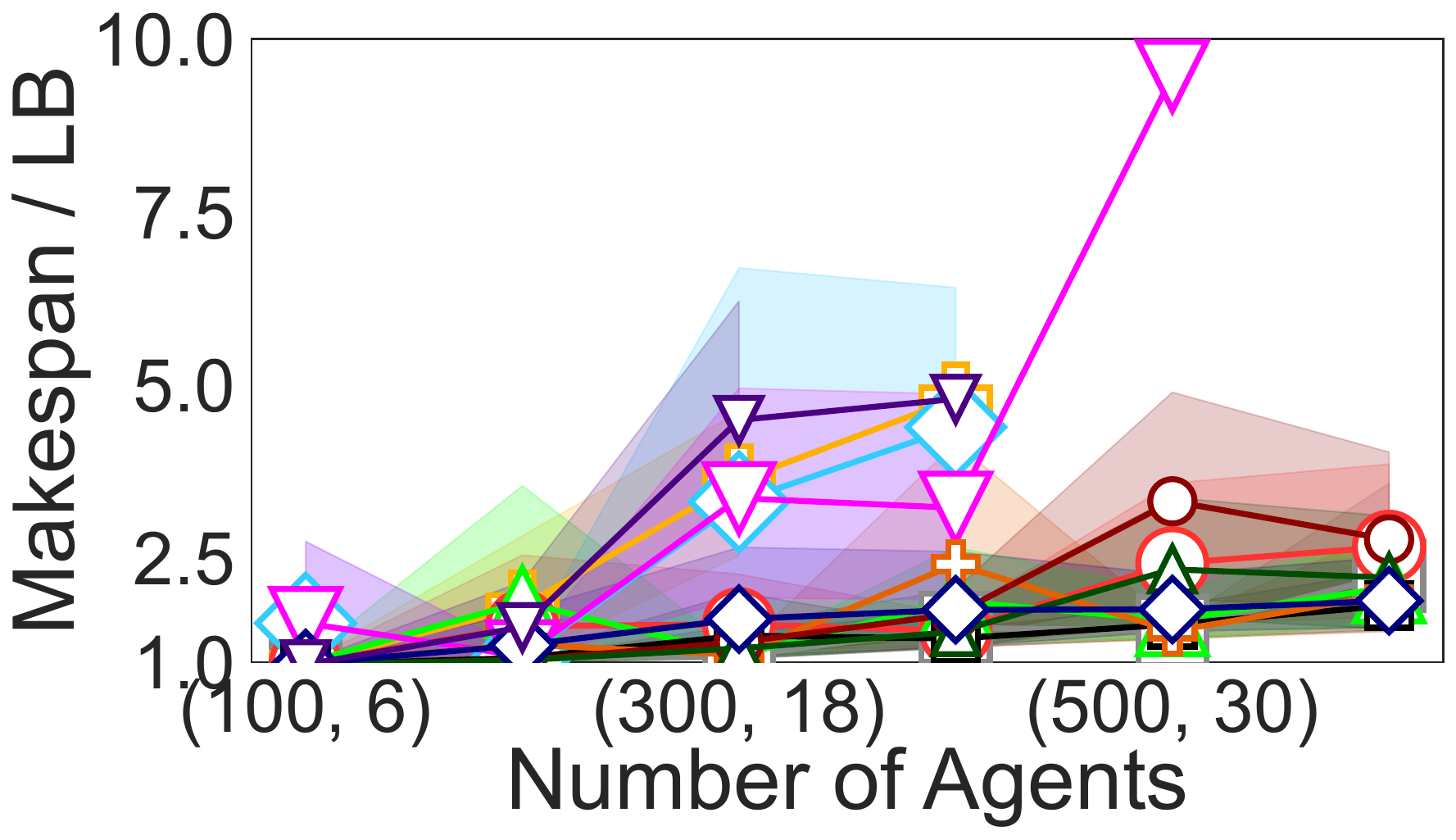}
    \end{subfigure}

    \begin{subfigure}[t]{0.24\textwidth}
        \centering
        \includegraphics[width=\linewidth]{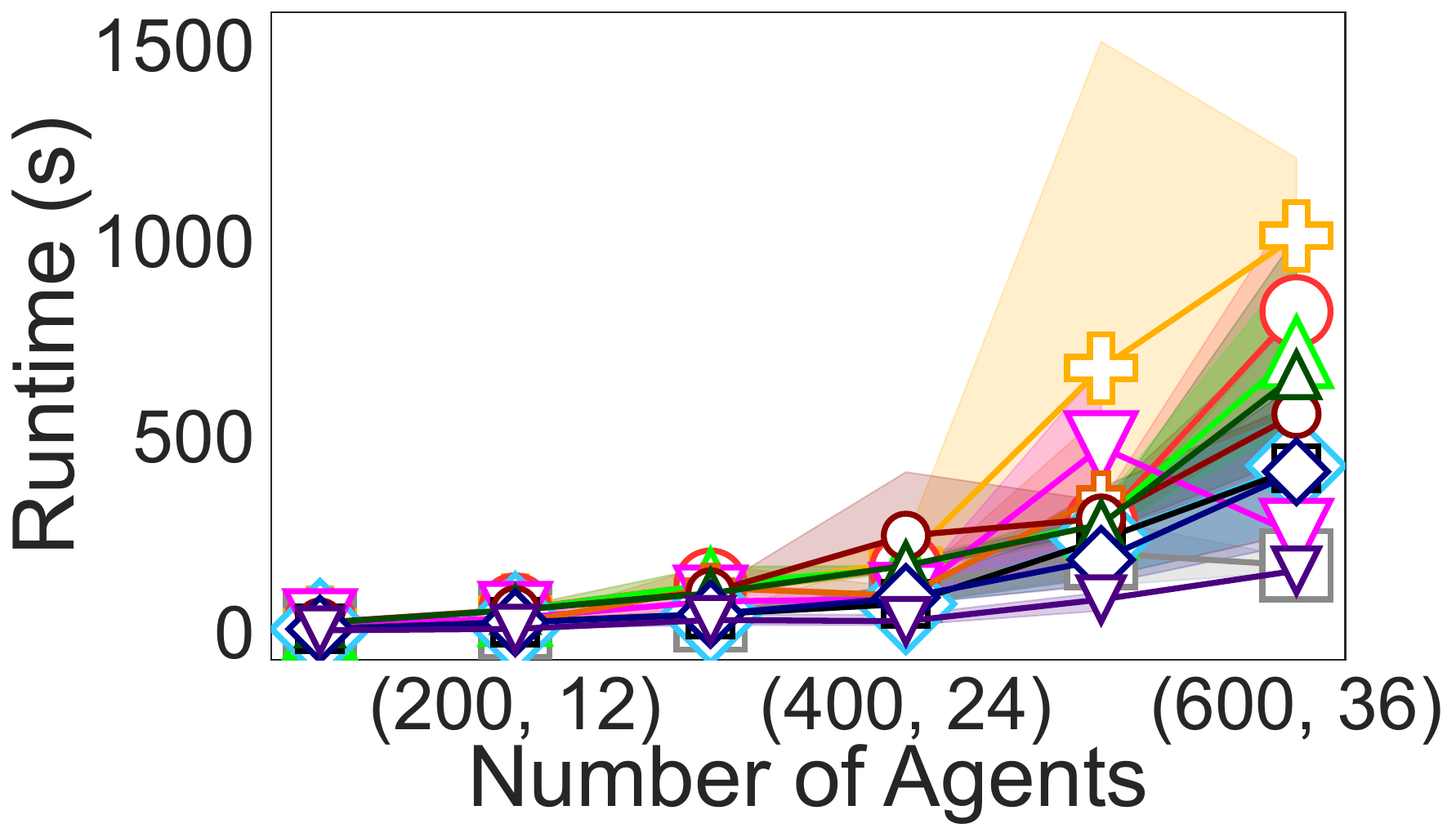}
    \end{subfigure}\hfill
    \begin{subfigure}[t]{0.24\textwidth}
        \centering
        \includegraphics[width=\linewidth]{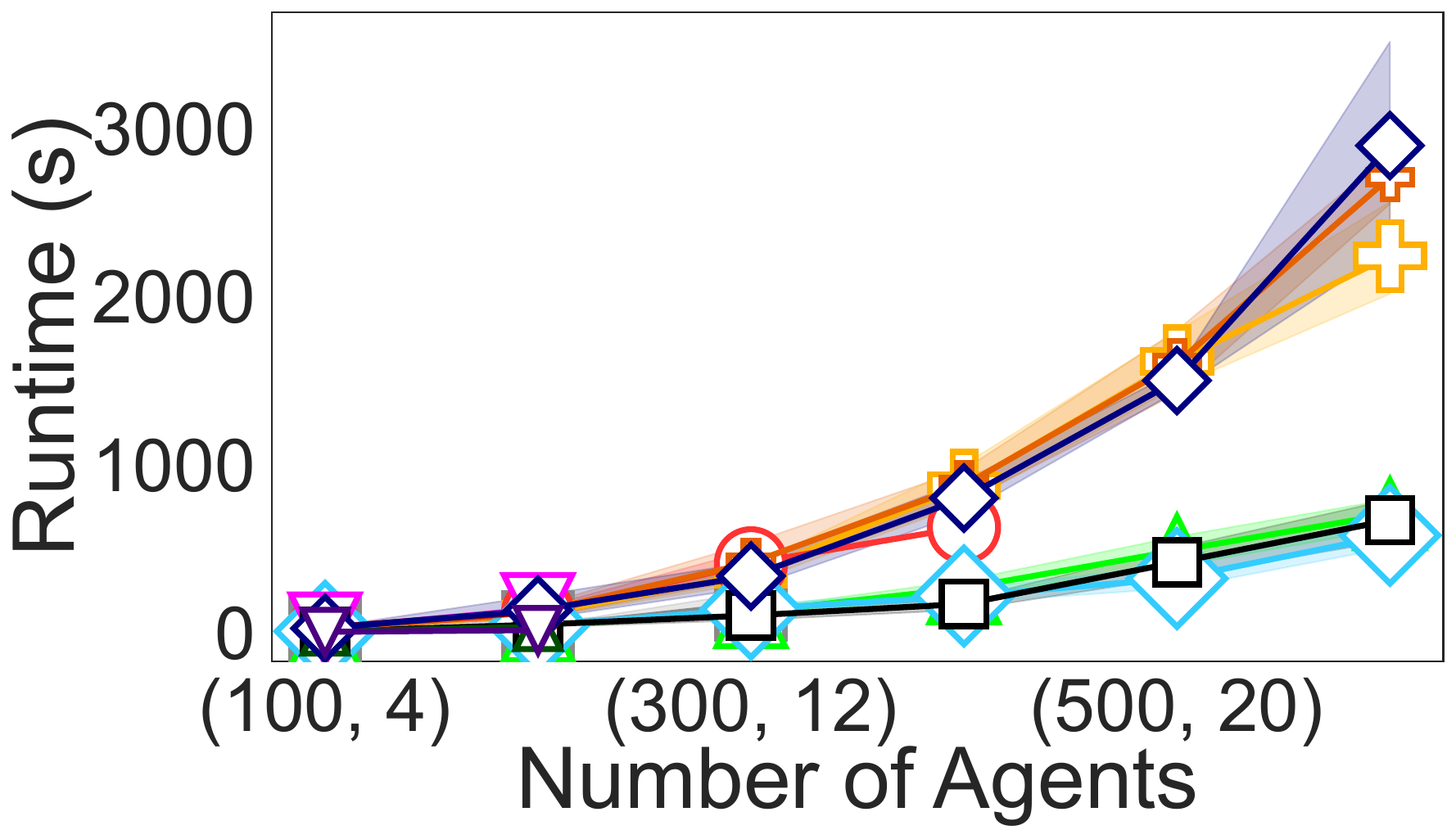}
    \end{subfigure}\hfill
    \begin{subfigure}[t]{0.24\textwidth}
        \centering
        \includegraphics[width=\linewidth]{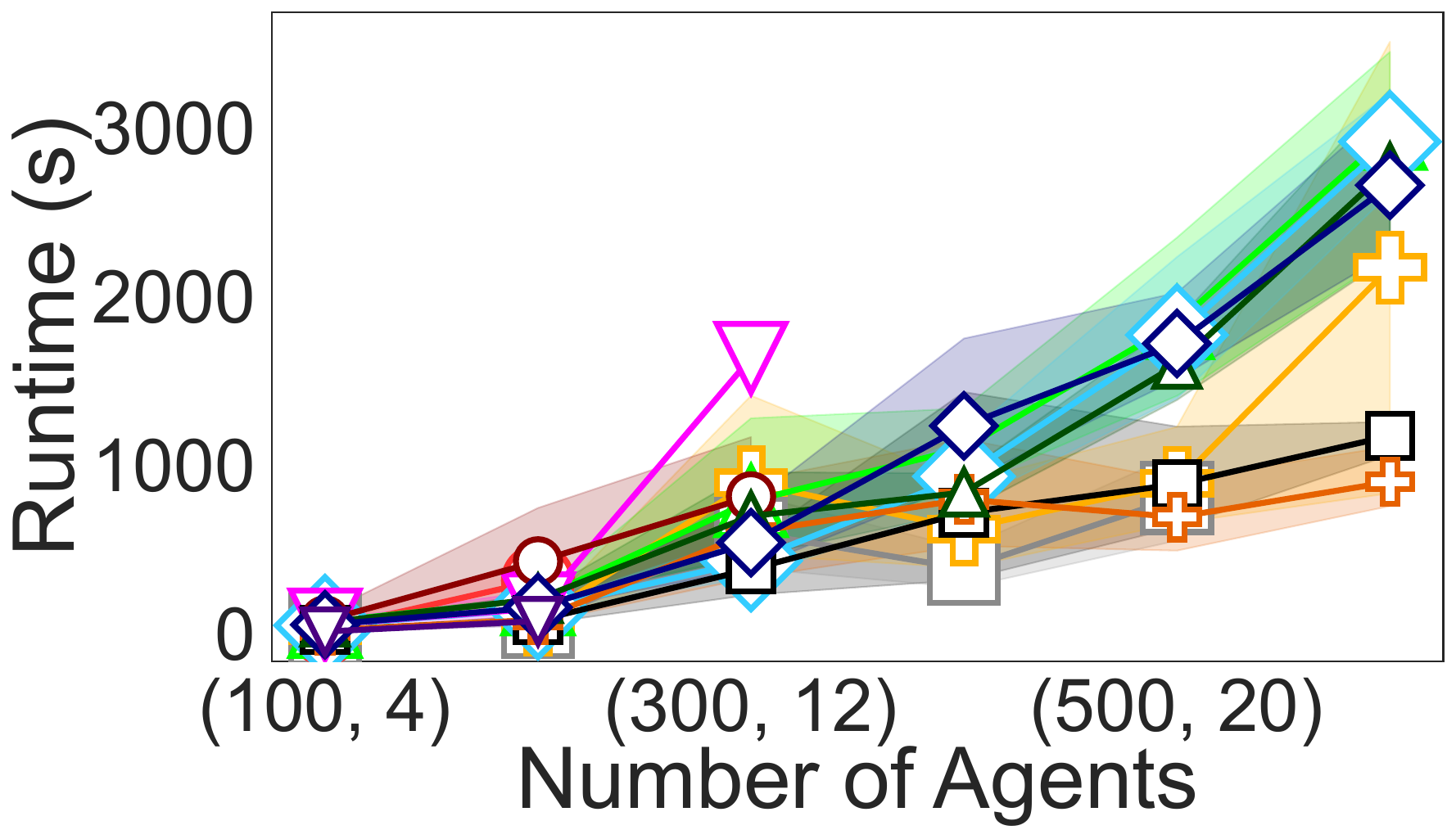}
    \end{subfigure}\hfill
    \begin{subfigure}[t]{0.24\textwidth}
        \centering
        \includegraphics[width=\linewidth]{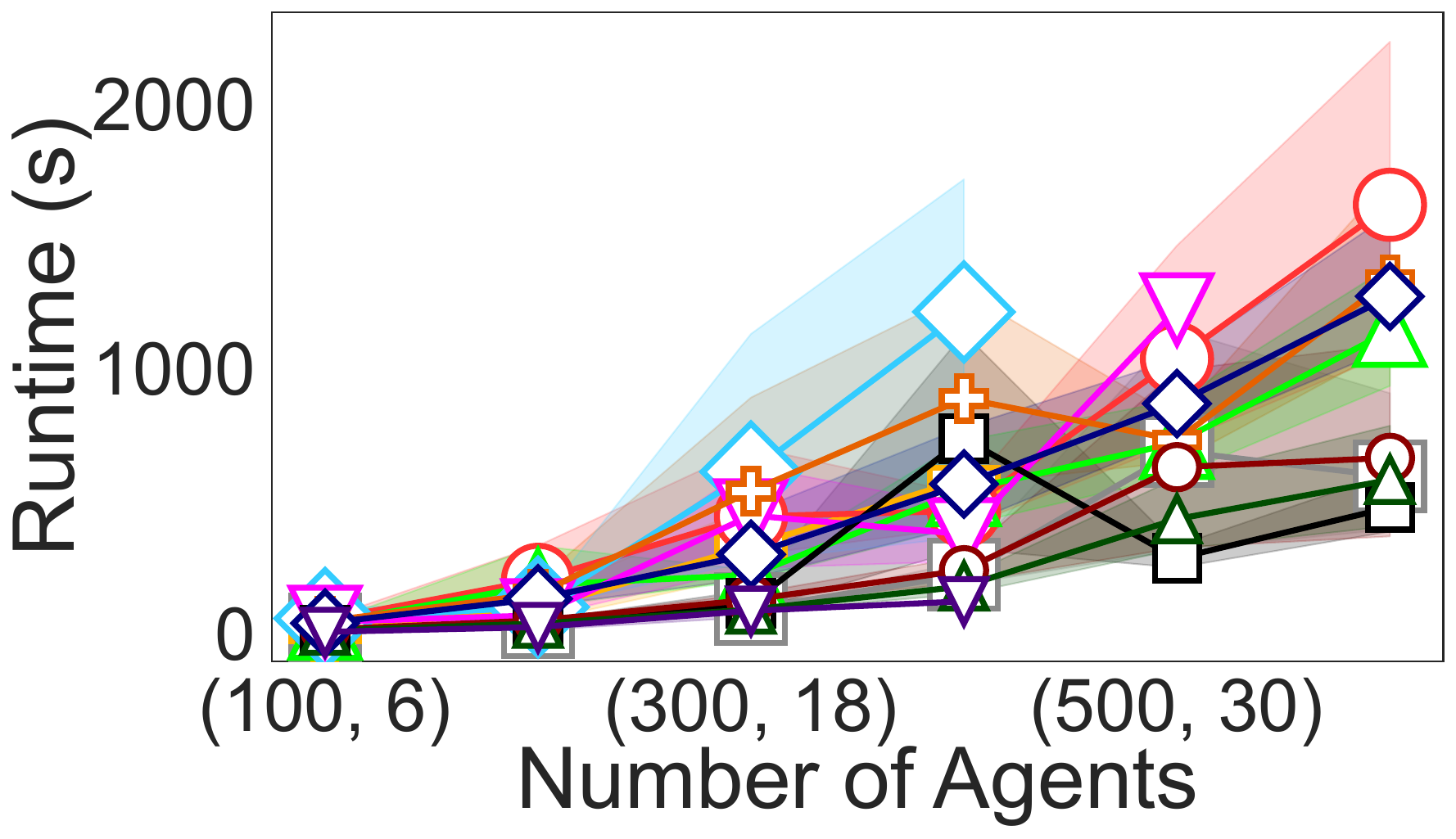}
    \end{subfigure}
    \par\vspace{-\abovecaptionskip}
    \subcaptionbox{\randomLargeLA\label{fig:oneshotmapf-large_agents-random_128_128-add}}
        {\phantom{\rule{0.24\textwidth}{0pt}}}\hfill
    \subcaptionbox{\emptyLargeLA\label{fig:oneshotmapf-large_agents-empty_64_64-add}}
        {\phantom{\rule{0.24\textwidth}{0pt}}}\hfill
    \subcaptionbox{\mazeLargeLA\label{fig:oneshotmapf-large_agents-maze_128_128-add}}
        {\phantom{\rule{0.24\textwidth}{0pt}}}\hfill
    \subcaptionbox{\roomLargeLA\label{fig:oneshotmapf-large_agents-room_128_128-add}}
        {\phantom{\rule{0.24\textwidth}{0pt}}}\hfill
    
    \caption{Success rate, suboptimality of sum-of-cost, makespan and runtime for different numbers of agents for one-shot MAPF with PMLA agents.}
    \label{fig:major-result-oneshotmapf-largeagents-add}
\end{figure*}

\subsection{Compute Resources} \label{appen:compute}
For one-shot experiments, we use a local machine with a 64 core AMD Ryzen 3990X CPU and 192 GB of RAM. For lifelong experiments, we use three machines: (1) a local machine with a 64 core AMD Ryzen 3990X CPU and 128 GB of RAM, (2) a local machine with a 64 core AMD Ryzen 7980X CPU and 256 GB of RAM, and (3) a High Performing Cluster with AMD EPYC 7742 CPUs~\cite{PSCBridgeTwo2021}.

\begin{figure*}[!t]
    \centering
    \includegraphics[width=1\linewidth]{figs/major_result/oneshotmapf/large_agents/bump_distributions_legend.pdf}\par\medskip

    \begin{subfigure}[t]{0.24\textwidth}
        \centering
        \includegraphics[width=\linewidth]{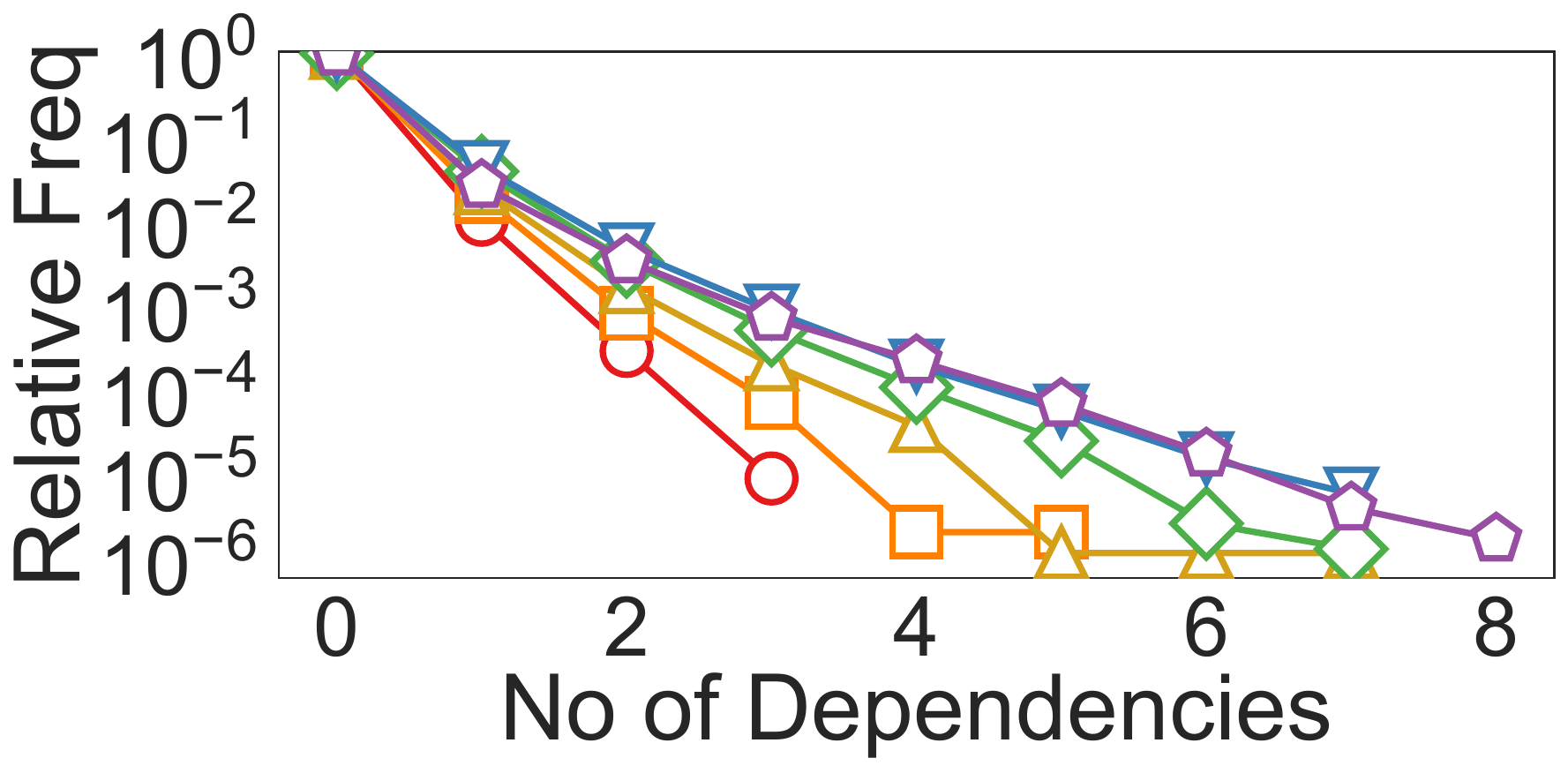}
    \end{subfigure}\hfill
    \begin{subfigure}[t]{0.24\textwidth}
        \centering
        \includegraphics[width=\linewidth]{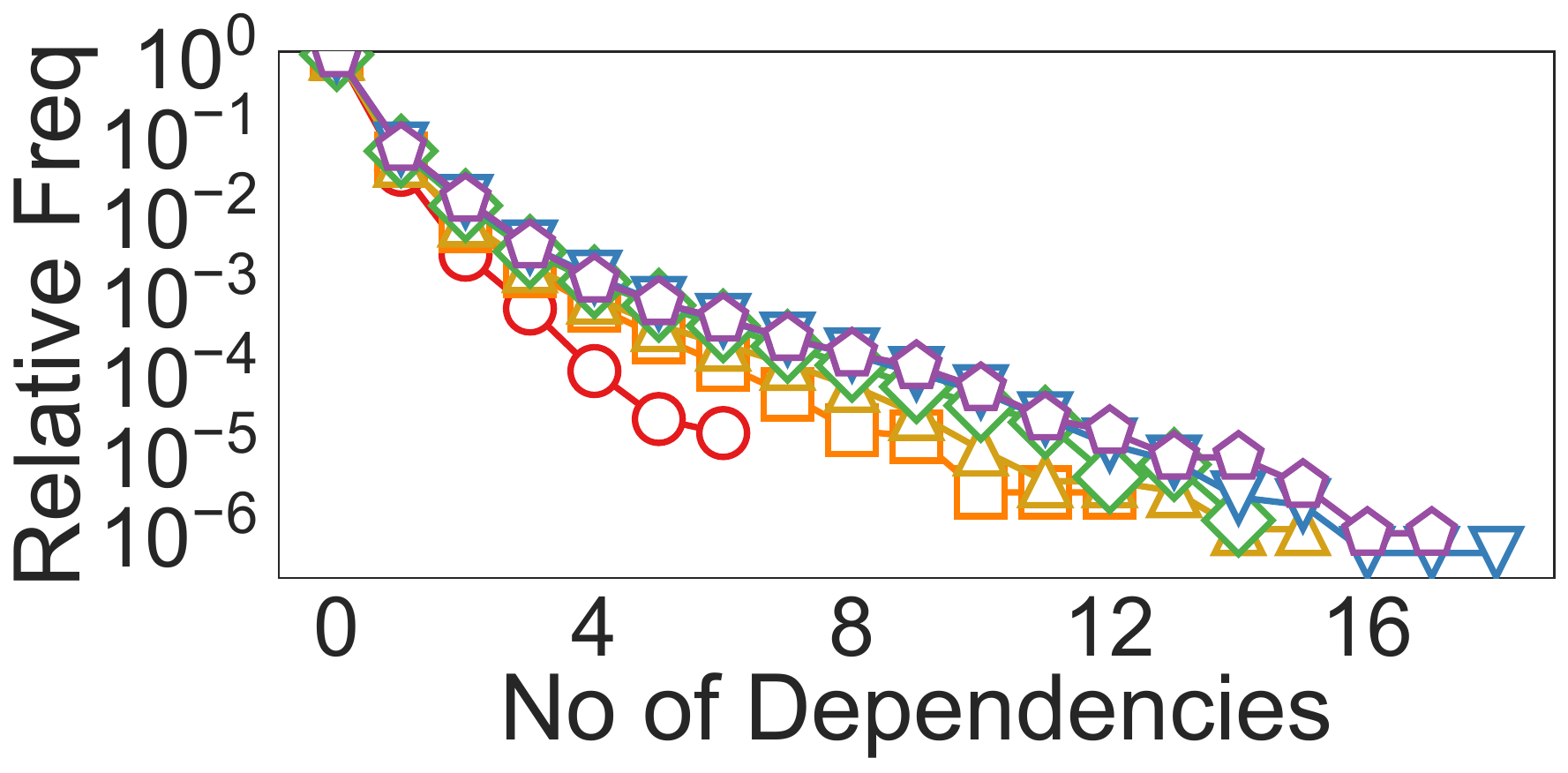}
    \end{subfigure}\hfill
    \begin{subfigure}[t]{0.24\textwidth}
        \centering
        \includegraphics[width=\linewidth]{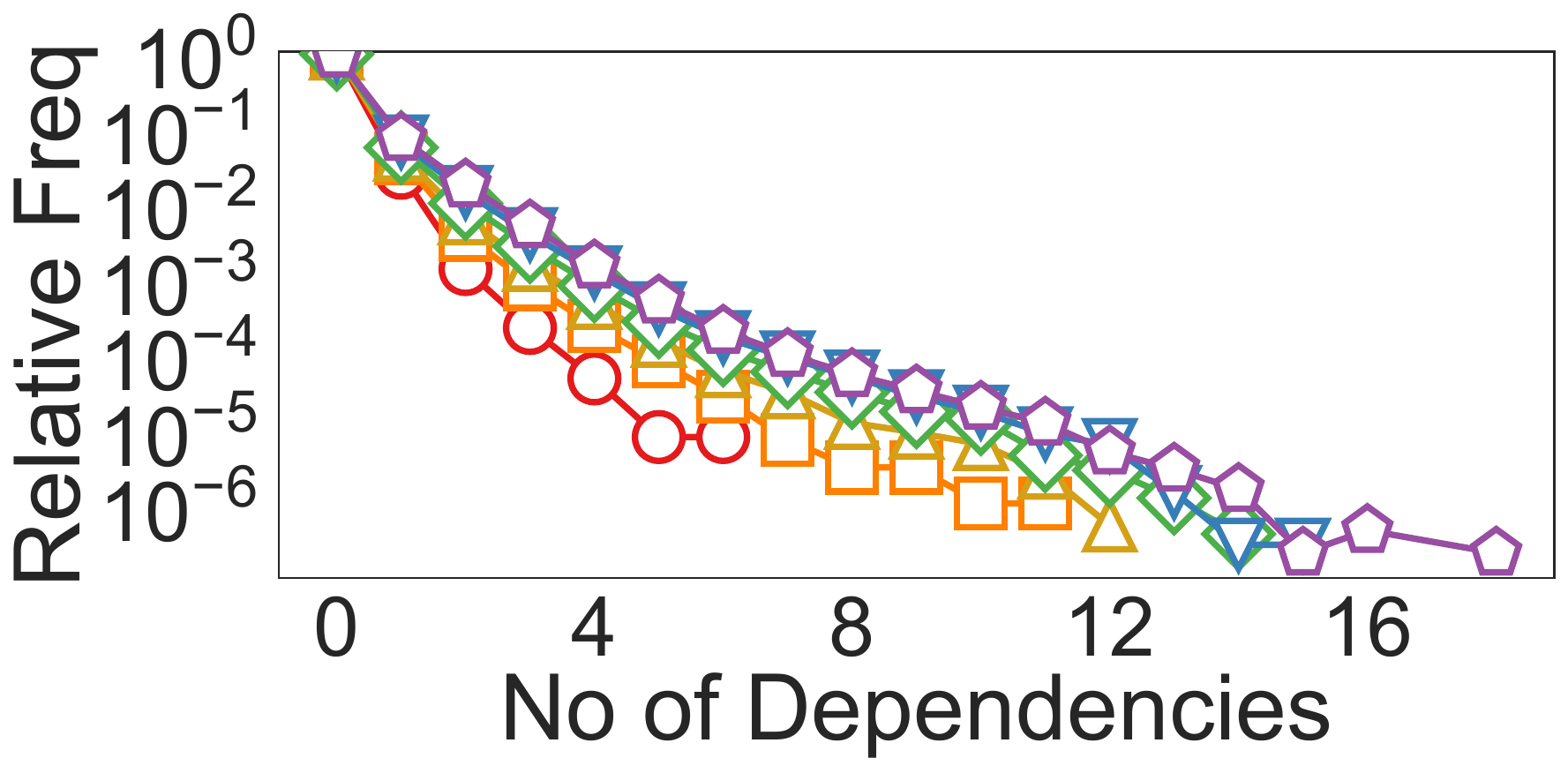}
    \end{subfigure}\hfill
    \begin{subfigure}[t]{0.24\textwidth}
        \centering
        \includegraphics[width=\linewidth]{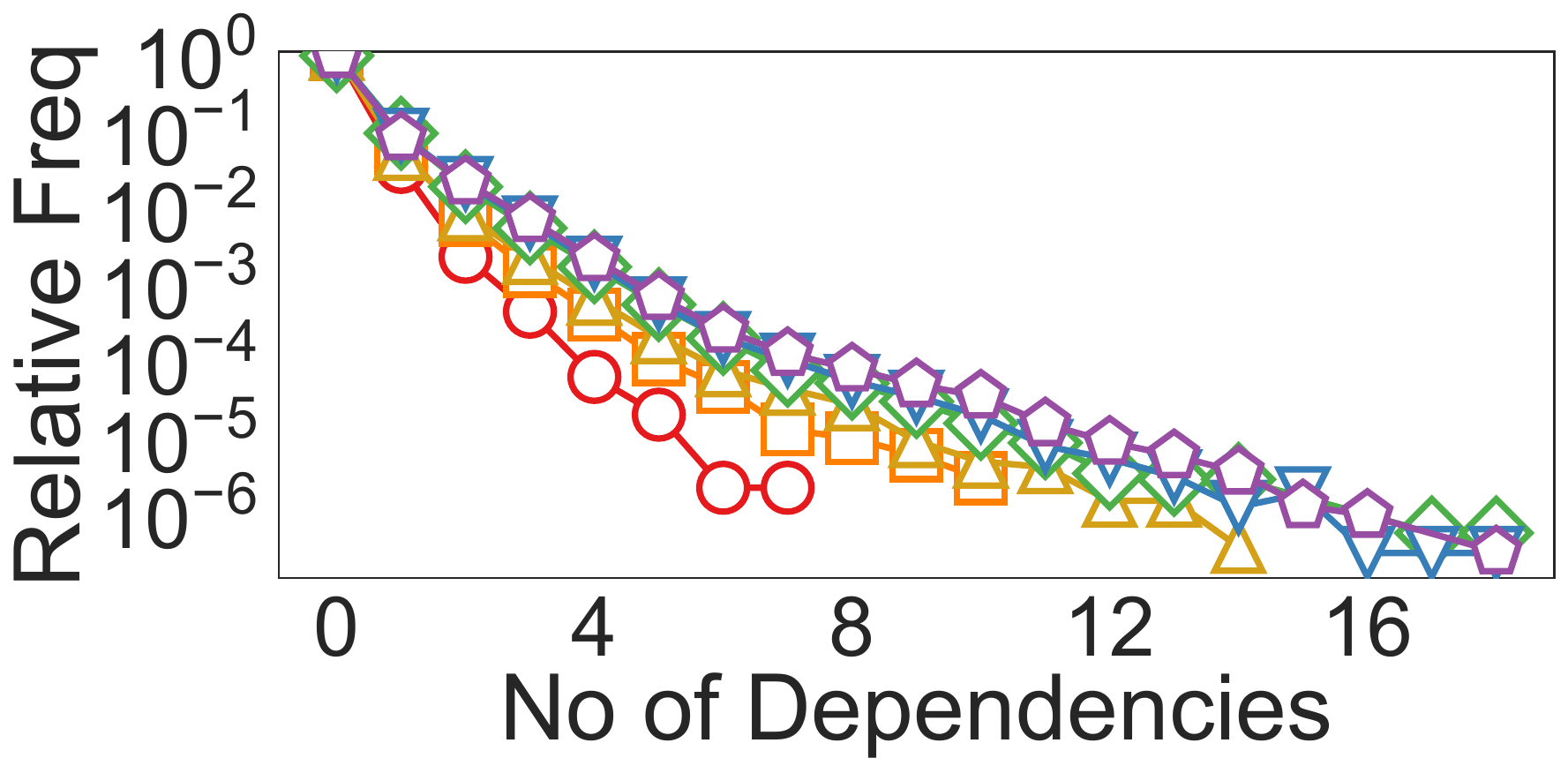}
    \end{subfigure}
    \par\vspace{-\abovecaptionskip}
    \subcaptionbox{\randomLargeLA\label{fig:oneshotmapf-large_agents-random_128_128-bump-add}}
        {\phantom{\rule{0.24\textwidth}{0pt}}}\hfill
    \subcaptionbox{\emptyLargeLA\label{fig:oneshotmapf-large_agents-empty_64_64-bump-add}}
        {\phantom{\rule{0.24\textwidth}{0pt}}}\hfill
    \subcaptionbox{\mazeLargeLA\label{fig:oneshotmapf-large_agents-maze_128_128-bump-add}}
        {\phantom{\rule{0.24\textwidth}{0pt}}}\hfill
    \subcaptionbox{\roomLargeLA\label{fig:oneshotmapf-large_agents-room_128_128-bump-add}}
        {\phantom{\rule{0.24\textwidth}{0pt}}}\hfill    
    \caption{Distribution of the number of agent dependencies generated per agent per successful planning attempt, under different agent densities, with $w = 4$. $S$ and $L$ refers to the number of small and large agents, respectively.}
    \label{fig:major-result-oneshotmapf-largeagents-bump-add}
\end{figure*}

\subsection{Additional Results} \label{appen:add-result-result}

\subsubsection{One-shot MAPF with PM Agents} \label{appen:one-shot-exp-pm}

Rows 1 and 2 of \Cref{tab:exp-setup-add} summarize the additional experiments we conduct for MD-PIBT in one-shot MAPF. We use these configurations to perform a hyper-parameter search in the \paris and \roomLarge maps of the MAPF benchmark~\cite{Stern2019benchmark} with PM agents. In addition to runtime and success rate, we show the solution qualities as \emph{Sum-of-cost / LB} and \emph{Makespan / LB}, which are the (over-estimated) suboptimality of sum-of-cost and makespan, respectively. The sum-of-costs is the sum of each agent's path length while the makespan is the number of timesteps it takes for all agents to reach their goals. LB refers to the lower bound. The lower bound of sum-of-cost is obtained by summing the shortest possible path of each agent, and the lower bound of makespan is obtained by taking the longest shortest path of all agents. If the success rate is less than 1, only the successful runs are used to calculate the suboptimality of sum-of-cost and makespan.

In \Cref{fig:major-result-oneshotmapf-smallagents-add}, we show the full results of the experiment discussed in \Cref{fig:major-result-oneshotmapf-smallagents} of \Cref{sec:exp-oneshot}. In addition to success rates and runtime, we show the suboptimality with respect to both sum-of-cost and makespan.
MD-PIBT achieves equivalent or better success rates compared to EPIBT and PIBT. Meanwhile, MD-PIBT finds solutions equivalent to or higher quality than EPIBT in both sum-of-cost and makespan. The solution quality of PIBT appears to be higher than others because it is averaged only on successful runs, which is a smaller and easier set of MAPF problems than the full set.

To better understand which configuration of MD-PIBT is superior, we show the best sets of hyper-parameters of MD-PIBT in each map with different numbers of agents in \Cref{tab:best-config-oneshot}. For the mode of finding path, $m=\text{EPIBT}$ outperforms $m=\text{PIBT}$ most of the time.
Other parameters in the best configurations vary across maps. In \warehouseXlarge, for example, the long corridors allow MD-PIBT to prefer a longer planning window of $w=4$. In \randomSmall and \roomLarge, on the other hand, using a shorter planning window of $w=3$ is sufficient. For \paris, the map is too large, making a shorter planning window of $w\leq2$ the best option. We do not observe particular patterns in the best value of $C$. We conjecture that allowing one agent to collide with multiple agents during planning is not always helpful in all scenarios.

\subsubsection{One-shot MAPF with PMLA Agents} \label{appen:one-shot-exp-pmla}

In \Cref{fig:major-result-oneshotmapf-largeagents-add}, we show the full results of the experiment discussed in \Cref{fig:major-result-oneshotmapf-largeagents} of \Cref{sec:exp-oneshot}. In addition to success rates and runtime, we show the suboptimality with respect to both sum-of-cost and makespan. We show the average as solid lines and the 95\% confidence interval as shaded areas. Clearly, MD-PIBT variants find paths of higher quality compared to EPIBT. Among the MD-PIBT variants, a higher value of $C$ is generally helpful in terms of searching for better solutions.

In \Cref{fig:major-result-oneshotmapf-largeagents-bump-add}, we perform the same analysis as in \Cref{fig:major-result-oneshotmapf-largeagents-bump} in \Cref{sec:exp-oneshot}, with a different size of the planning window of $w=4$. The trend is similar to the result in \Cref{fig:major-result-oneshotmapf-largeagents-bump}, where more agents correspond to more agent dependencies on all maps.

\begin{figure*}[!tp]
    \centering
    \includegraphics[width=1\linewidth]{figs/major_result/lmapf/pm/legend-pm.pdf}\par\medskip
    \vspace{-0.5em}
    \begin{subfigure}{0.49\textwidth}
        \centering
        \includegraphics[width=0.5\textwidth]{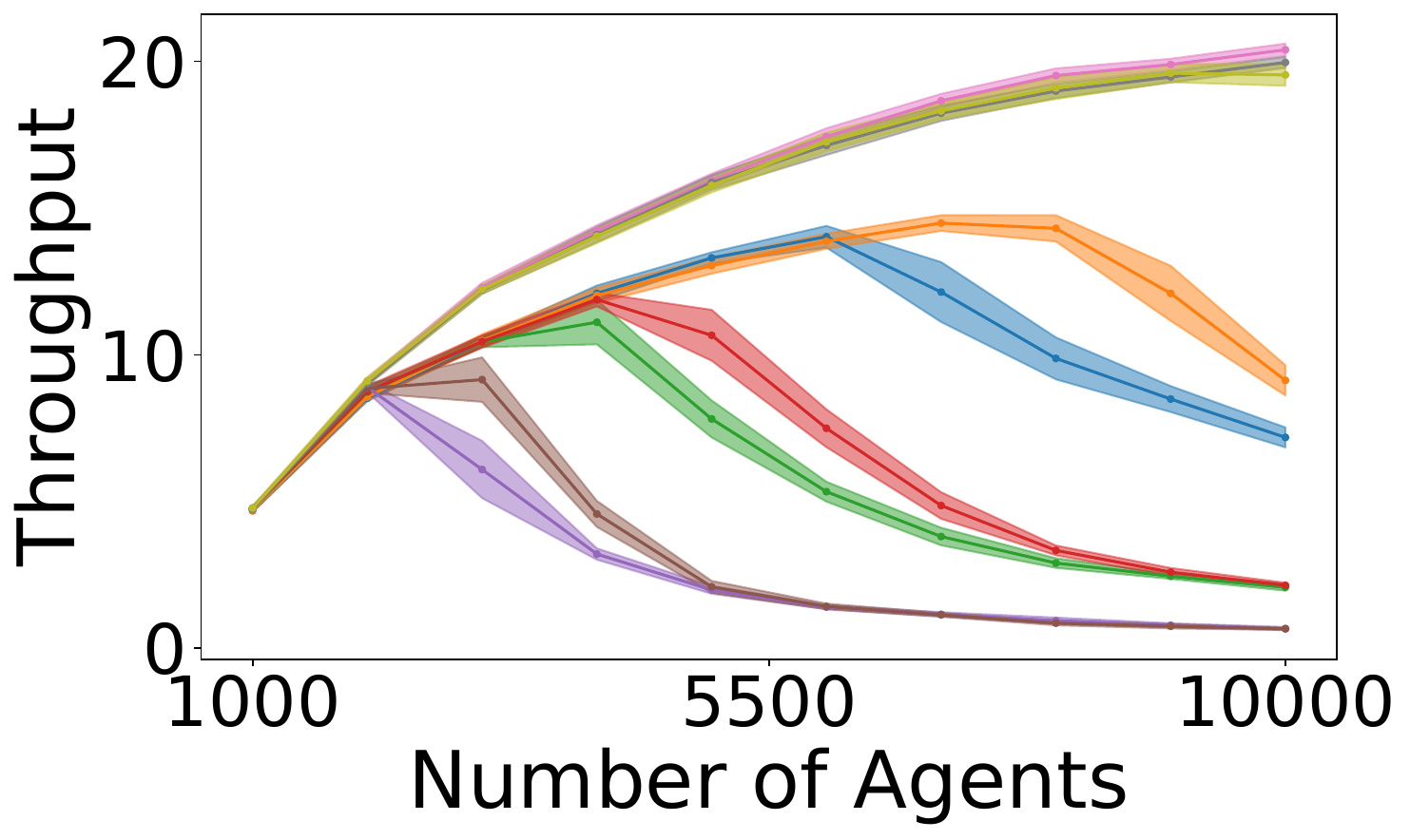}%
        \includegraphics[width=0.5\textwidth]{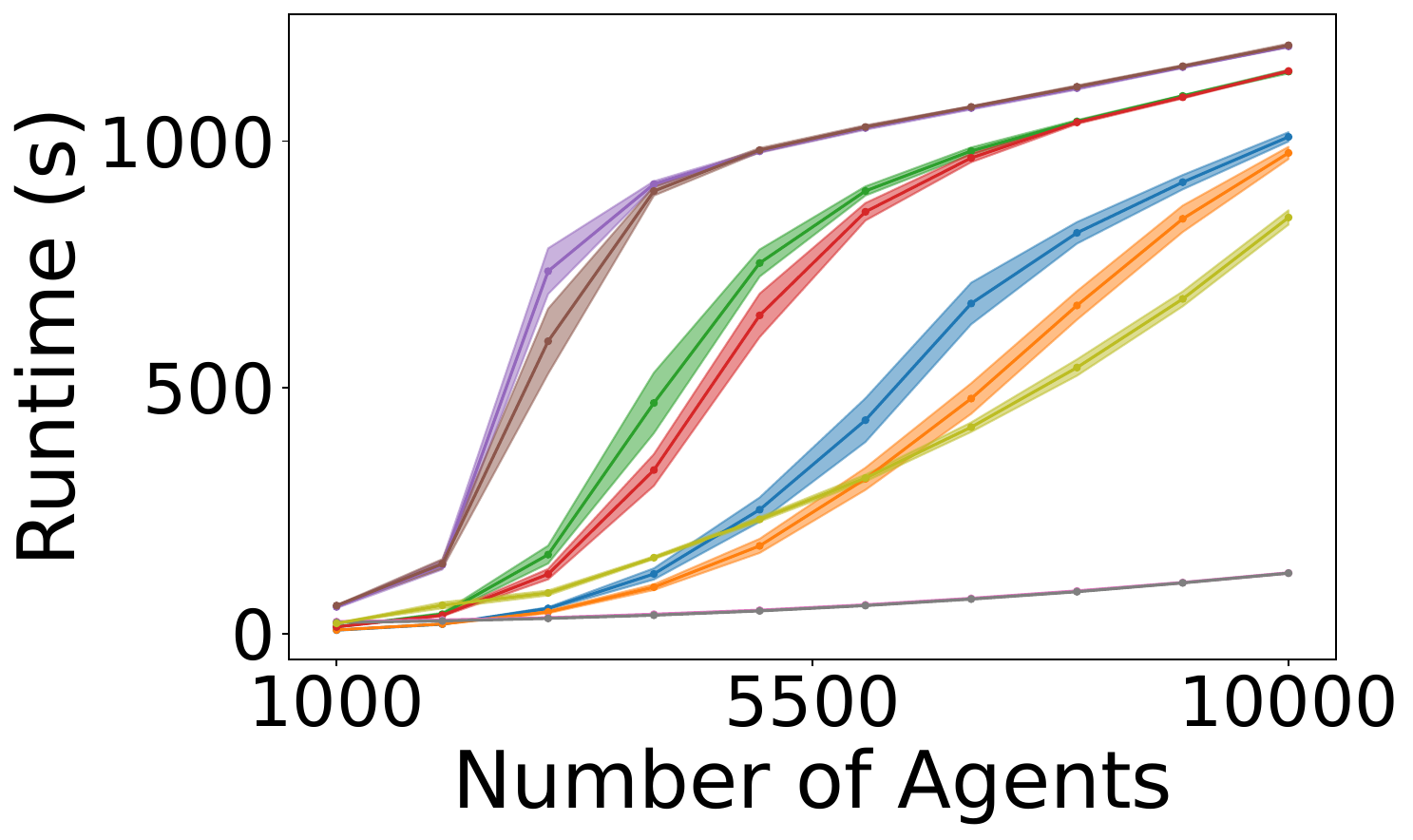}
        \vspace{-1.7em}
        \caption{\paris}
        \label{fig:lmapf-pm-paris-1-256}
    \end{subfigure}%
    \hfill
    \begin{subfigure}{0.49\textwidth}
        \centering
        \includegraphics[width=0.5\textwidth]{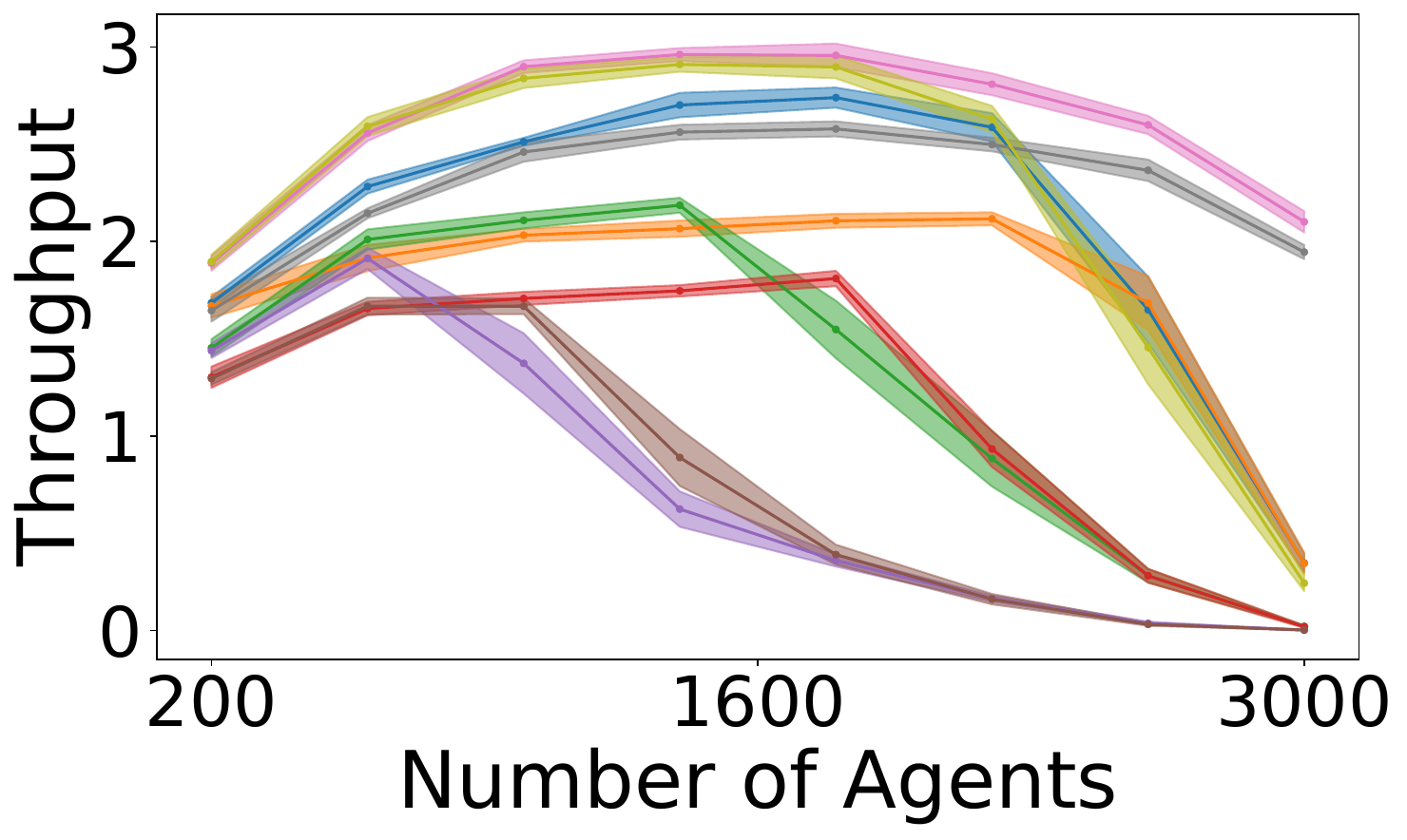}%
        \includegraphics[width=0.5\textwidth]{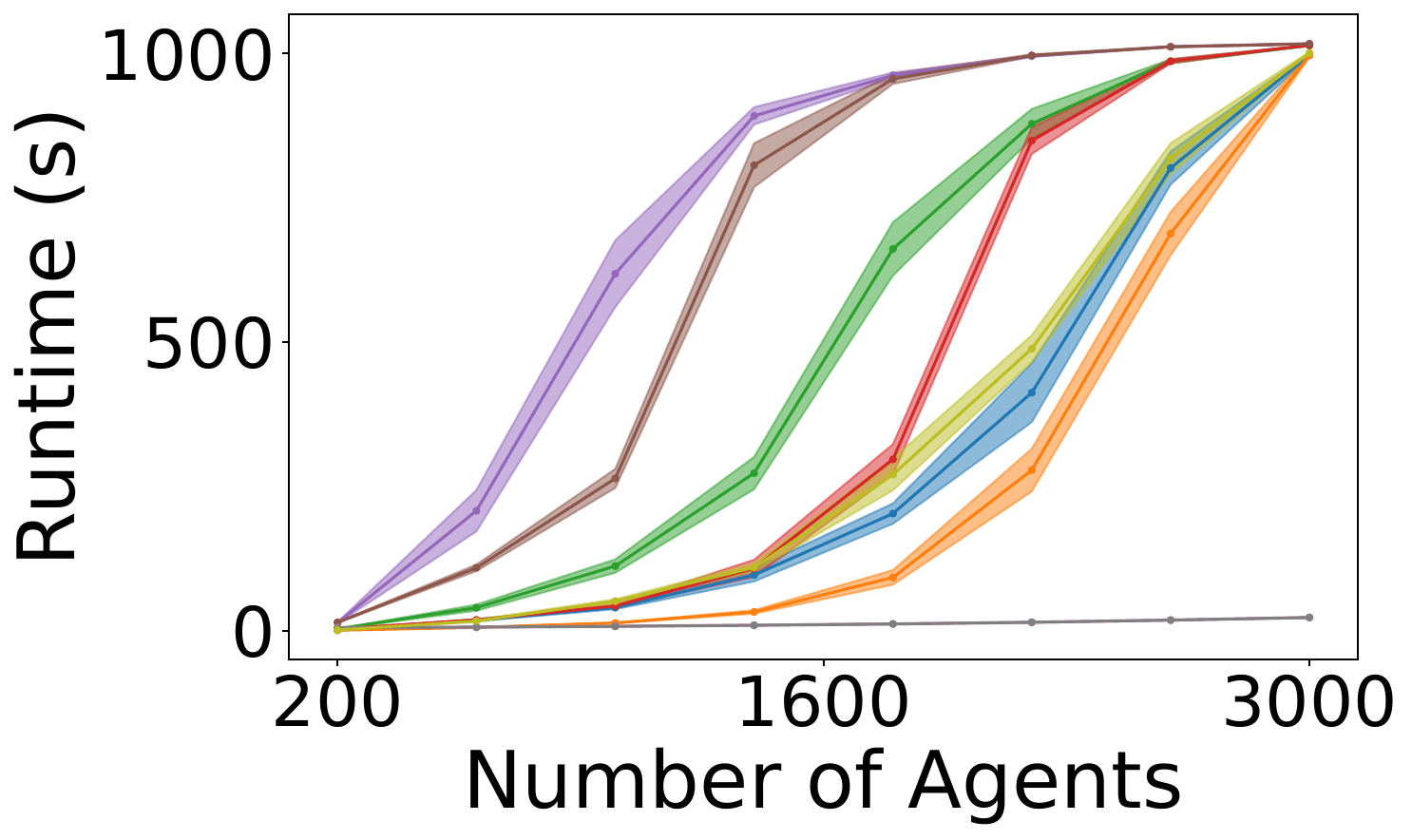}
        \vspace{-1.7em}
        \caption{\roomLarge}
        \label{fig:lmapf-pm-room-64-64-8}
    \end{subfigure}
    \hfill
    \caption{Throughput and runtime with different numbers of agents for LMAPF with PM agents. }
    \label{fig:major-result-lmapf-pm-add}
    \par\vspace{-\abovecaptionskip}
\end{figure*}

\begin{figure*}[!tp]
    \centering
    \includegraphics[width=1\linewidth]{figs/major_result/lmapf/rot/legend-rot.pdf}\par\medskip
    \vspace{-0.5em}
    \begin{subfigure}{0.49\textwidth}
        \centering
        \includegraphics[width=0.5\textwidth]{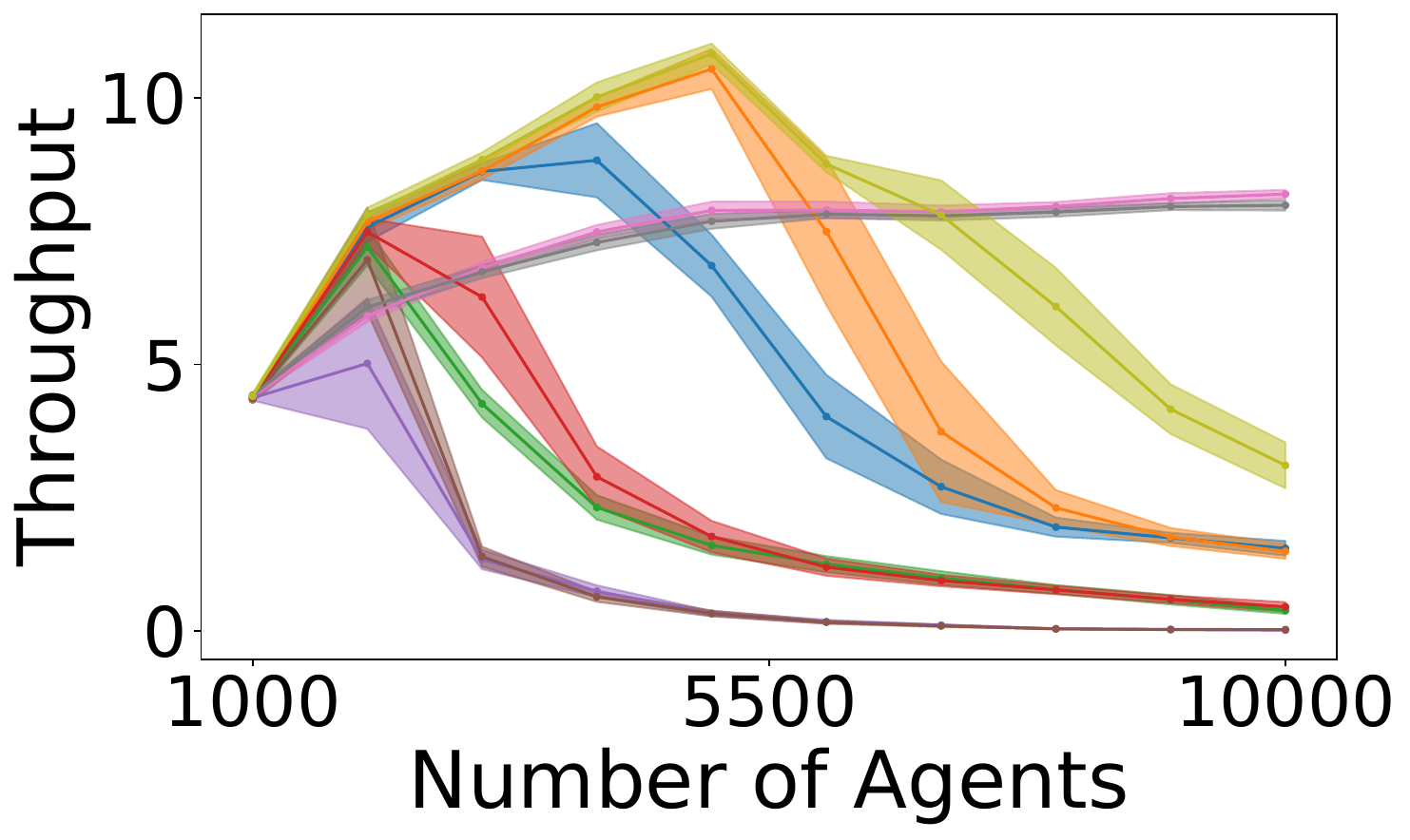}%
        \includegraphics[width=0.5\textwidth]{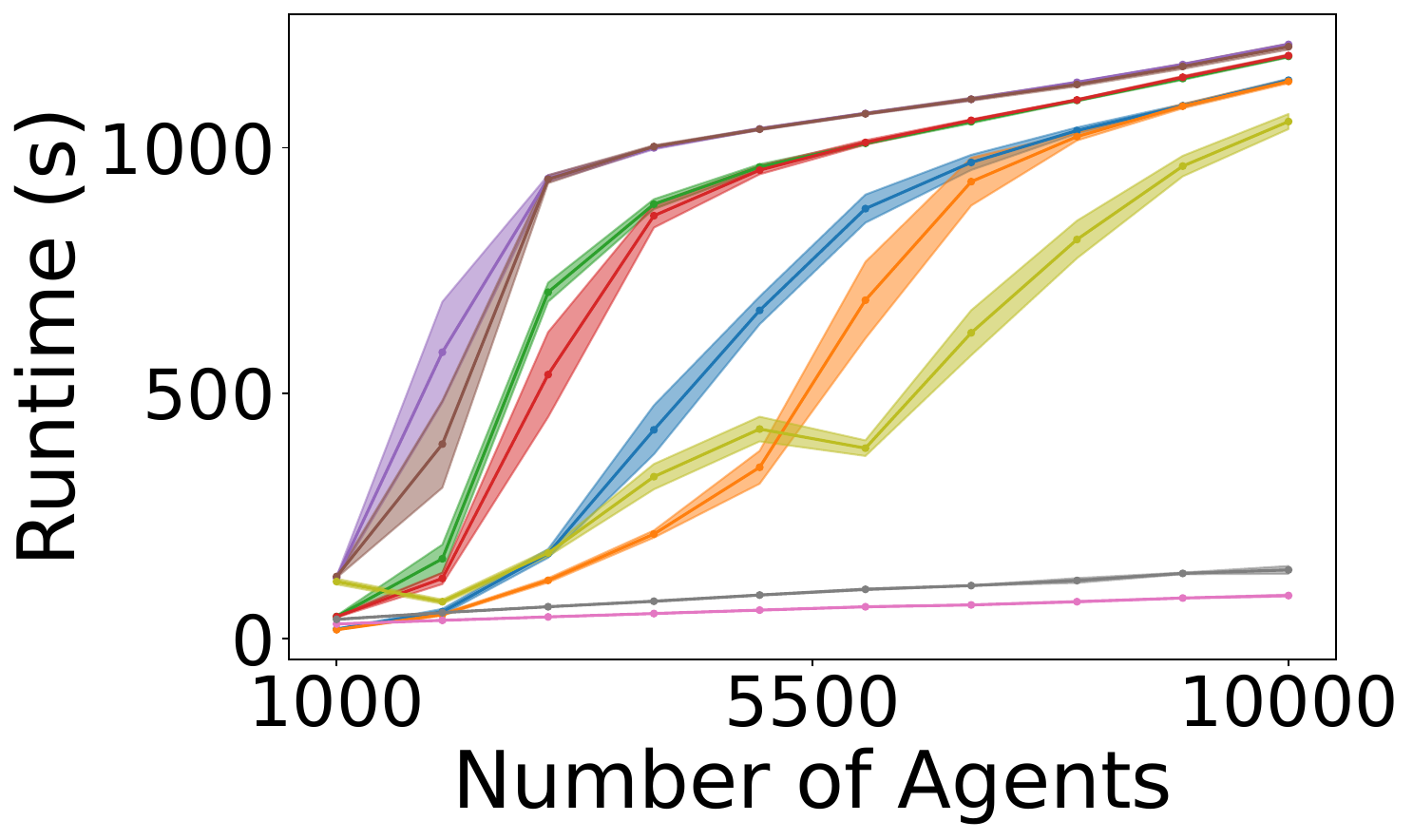}
        \vspace{-1.7em}
        \caption{\paris}
        \label{fig:lmapf-rot-Paris-1-256}
    \end{subfigure}%
    \hfill
    \begin{subfigure}{0.49\textwidth}
        \centering
        \includegraphics[width=0.5\textwidth]{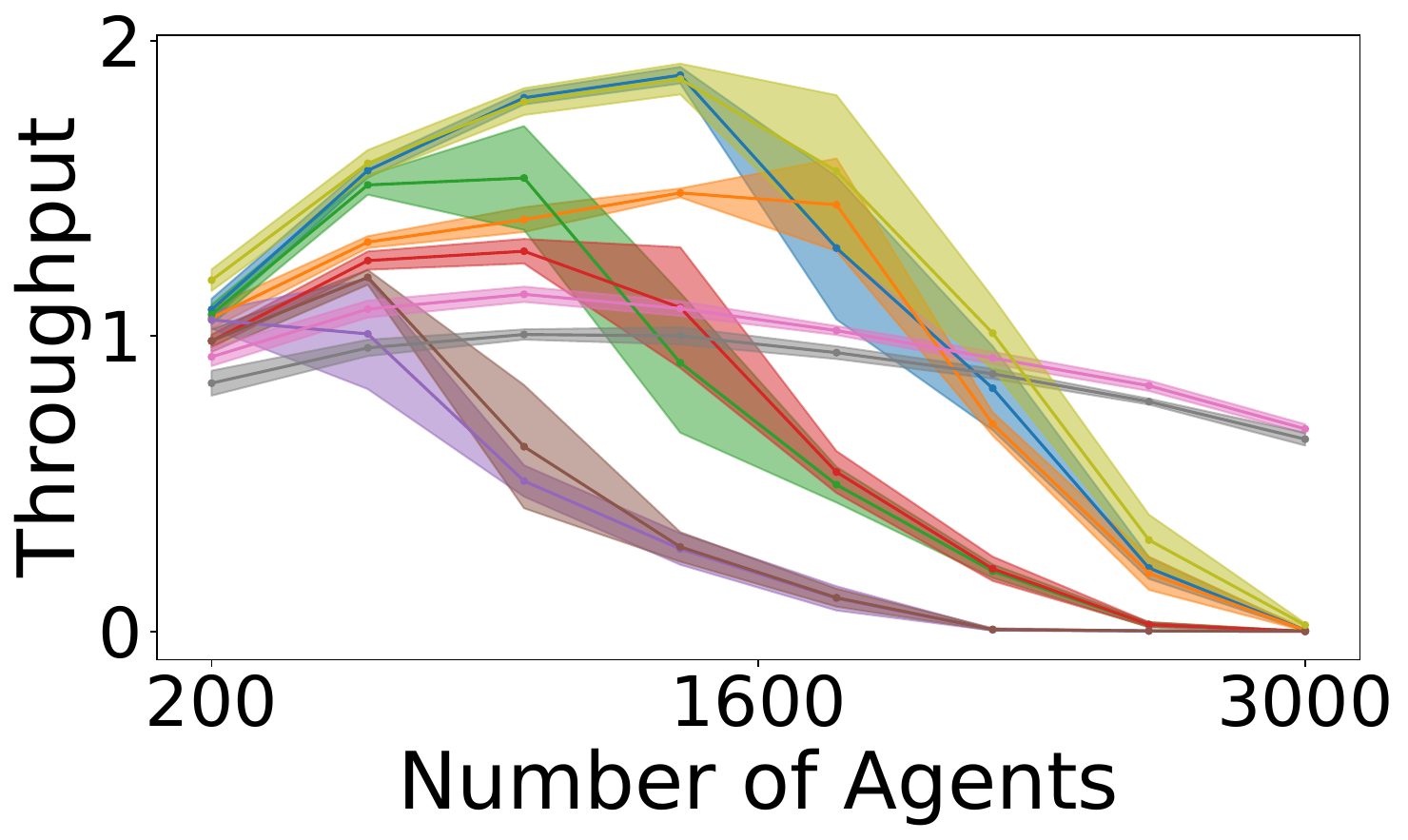}%
        \includegraphics[width=0.5\textwidth]{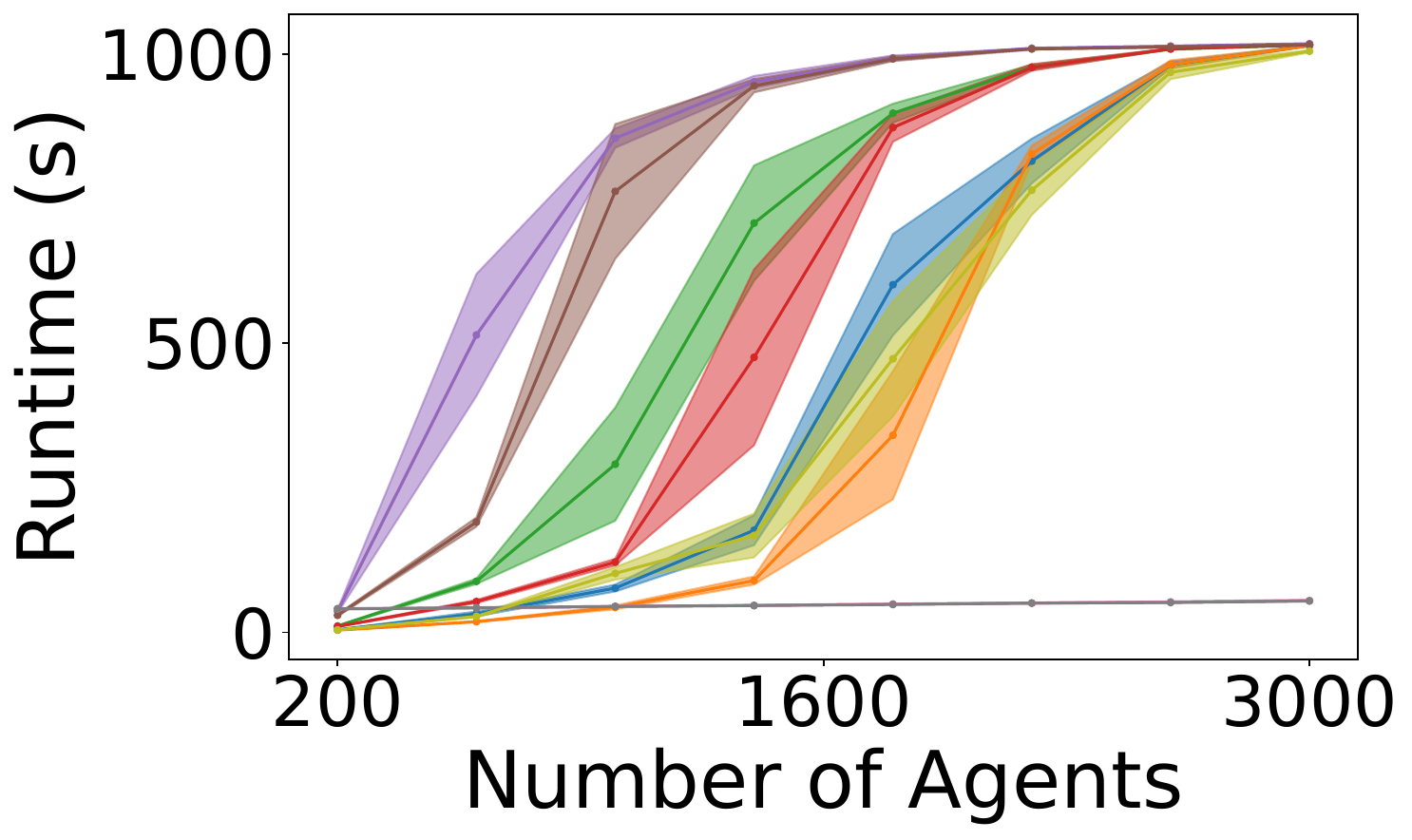}
        \vspace{-1.7em}
        \caption{\roomLarge}
        \label{fig:lmapf-rot-room-64-64-8}
    \end{subfigure}
    \hfill
    \caption{Throughput and runtime with different numbers of agents for LMAPF with RM agents. }
    \label{fig:major-result-lmapf-rot-add}
    \par\vspace{-\abovecaptionskip}
\end{figure*}

\begin{table*}[!tp]
    \centering
    \small
    \resizebox{\linewidth}{!}{
    \begin{tabular}{ccrcc cccc}
    \toprule
    $W$ & $h$ & $C$ & $m$ & $\rho$ & \randomSmall & \roomLarge & \warehouseXlarge & \paris \\
    \midrule
    $1$ & $1$ & $1$      & EPIBT & LET   & & & & 300 \\
    $2$ & $2$ & $1$      & PIBT  & LET   & & & 20 & \\
    $2$ & $2$ & $1$      & EPIBT & LET   & & & & 600 \\
    $2$ & $2$ & $\infty$ & EPIBT & LET   & 40 & 100, 200 & & 100, 200, 400, 500, 700, 800, 900, 1000 \\
    $3$ & $1$ & $1$      & EPIBT & LET   & 80, 120, 280, 360 & & & \\
    $3$ & $1$ & $\infty$ & EPIBT & LET   & 160, 200, 240, 320 & & & \\
    $3$ & $3$ & $\infty$ & EPIBT & LET   & & 300, 400, 500, 600, 700, 800, 900, 1000 & & \\
    $4$ & $1$ & $1$      & EPIBT & LET   & & & 40, 100, 120 & \\
    $4$ & $1$ & $\infty$ & EPIBT & LET   & & & 80, 140, 160, 180 & \\
    $4$ & $4$ & $1$      & EPIBT & LET   & & & 60 & \\
    $4$ & $4$ & $\infty$ & PIBT  & LET   & 400 & & & \\
    \bottomrule
    \end{tabular}
    }
    \caption{
    Best configurations of MD-PIBT for each map for different numbers of agents in one-shot experiments ($R=100$). The best configuration is sorted by success rate, and sum-of-cost and runtime are used as tie-breakers.
    }
    \label{tab:best-config-oneshot}
    \par\vspace{-\abovecaptionskip}
\end{table*}

\begin{table*}[!tp]
    \centering
    \small
    \resizebox{\linewidth}{!}{
    \begin{tabular}{ccrcc cccc}
    \toprule
    $W$ & $h$ & $C$ & $m$ & $\rho$ & \randomSmall & \roomLarge & \warehouseXlarge & \paris \\
    \midrule
    $1$ & $1$ & $1$      & PIBT  & SD    & 800 & 600, 1000, 1400, 2200 & & 4000, 5000, 6000, 7000, 9000, 10000 \\
    $1$ & $1$ & $1$      & PIBT  & LET   & & 2600 & 1800, 2200, 2600 & \\
    $1$ & $1$ & $\infty$ & PIBT  & SD    & 400, 500, 600, 700 & 200, 1800 & & 8000 \\
    $1$ & $1$ & $\infty$ & PIBT  & LET   & & 3000 & 3000 & \\
    $2$ & $2$ & $\infty$ & PIBT  & LET   & & & & 3000 \\
    $3$ & $1$ & $1$      & PIBT  & SD    & & & 600 & \\
    $3$ & $1$ & $1$      & PIBT  & LET   & & & 1400 & \\
    $3$ & $1$ & $1$      & EPIBT & SD    & & & 1000 & \\
    $3$ & $1$ & $\infty$ & EPIBT & SD    & & & 200 & \\
    $3$ & $3$ & $\infty$ & PIBT  & SD    & 200 & & & \\
    $3$ & $3$ & $\infty$ & PIBT  & LET   & 300 & & & 2000 \\
    $3$ & $3$ & $\infty$ & EPIBT & LET   & 100 & & & 1000 \\
    \bottomrule
    \end{tabular}
    }
    \caption{
    Best configurations of MD-PIBT for each map for different numbers of agents in LMAPF experiments with PM agents. The best configuration is sorted by throughput.
    }
    \label{tab:best-config-pm}
    \par\vspace{-\abovecaptionskip}
\end{table*}

\begin{table*}[!tp]
    \centering
    \small
    \resizebox{\linewidth}{!}{
    \begin{tabular}{ccrcc cccc}
    \toprule
    $W$ & $h$ & $C$ & $m$ & $\rho$ & \randomSmall & \roomLarge & \warehouseXlarge & \paris \\
    \midrule
    $3$ & $1$ & $1$      & PIBT  & SD    & 300 & 200 & & \\
    $3$ & $1$ & $1$      & PIBT  & LET   & & & 3000 & 2000, 3000, 4000 \\
    $3$ & $1$ & $1$      & EPIBT & SD    & 400, 500, 600, 700 & 600, 1000, 1400, 1800, 2200, 2600 & & \\
    $3$ & $1$ & $1$      & EPIBT & LET   & & & 1800, 2200, 2600 & 5000 \\
    $3$ & $3$ & $1$      & PIBT  & SD    & 800 & 3000 & & \\
    $3$ & $3$ & $1$      & PIBT  & LET   & & & & 6000, 7000, 8000, 9000, 10000 \\
    $5$ & $1$ & $1$      & PIBT  & SD    & & & & 1000 \\
    $5$ & $1$ & $1$      & PIBT  & SD    & 100 & & 200, 600 & \\
    $5$ & $1$ & $1$      & EPIBT & SD    & 200 & & 1000 & \\
    $5$ & $1$ & $1$      & EPIBT & LET   & & & 1400 & \\
    \bottomrule
    \end{tabular}
    }
    \caption{
    Best configurations of MD-PIBT for each map for different numbers of agents in LMAPF experiments with RM agents. The best configuration is sorted by throughput.
    }
    \label{tab:best-config-rm}
    \par\vspace{-\abovecaptionskip}
\end{table*}

\subsubsection{Lifelong MAPF} \label{appen:lifelong-exp}

Rows 3 and 4 of \Cref{tab:exp-setup-add} summarize the additional experiments we conduct for MD-PIBT in lifelong MAPF. We use these configurations to perform a hyper-parameter search in the \paris and \roomLarge maps of the MAPF benchmark~\cite{Stern2019benchmark} with PM and RM agents.

\Cref{fig:major-result-lmapf-pm-add,fig:major-result-lmapf-rot-add} show the results of LMAPF with PM and RM agents, respectively. With PM agents, PIBT remains surprisingly powerful in terms of throughput and runtime, and our MD-PIBT replication of PIBT closely follows. With RM agents, EPIBT and our MD-PIBT replication of EPIBT generally achieve the highest throughput. \Cref{tab:best-config-pm,tab:best-config-rm} summarizes the best MD-PIBT configurations with PM and RM agents, respectively. For PM agents, interestingly, the MD-PIBT variant with $w=h=1$, $m=\text{PIBT}$, and $C=1$, which replicates PIBT, is quite powerful, especially in larger maps with many agents. The benefit of $C=\infty$ and the higher values of $w$ and $h$ can be seen with a moderate number of agents on different maps (e.g, 100-300 in \randomSmall, 1000-1400 in \warehouseXlarge and 1000-2000 in \paris). For RM agents, the MD-PIBT variant with $w\geq1$, $h=1$, $C=1$, and $m=\text{PIBT}$, which replicates EPIBT, achieves the highest throughput most of the time. Surprisingly, $C=\infty$ never outperforms $C=1$. We conjecture that resolving multiple collisions per agent is more challenging with RM agents than with PM agents. 

\end{document}